\tikzset{cross/.style={cross out, draw=black, minimum size=2*(#1-\pgflinewidth), inner sep=0pt, outer sep=0pt},
cross/.default={1pt}}
\newcommand{\MyFrame}[1]{\begin{center}\noindent \framebox[\textwidth]{ \begin{minipage}{0.97\textwidth} #1 \end{minipage}}\end{center}}%
\definecolor{blueblack}{rgb}{0,0,.7}
\newcounter{sideremark}
\renewcommand{\setminus}{-}
\theoremstyle{plain}
\newtheorem{theorem}{Theorem}[section]
\newtheorem{lemma}[theorem]{Lemma}
\newtheorem{defn}[theorem]{Definition}
\newtheorem{obs}[theorem]{Observation}
\newtheorem{proposition}[theorem]{Proposition}
\newtheorem{cor}[theorem]{Corollary}
\newtheorem{fact}[theorem]{Fact}
\theoremstyle{remark}
\newtheorem*{rem}{Remark}
\def\eg{{\it e.g.,}~}
\def\ie{{\it i.e.,}~}
\newcommand{\Q}{\mathds{Q}}
\newtheorem*{rep@theorem}{\rep@title}
\newcommand{\newreptheorem}[2]{%
\newenvironment{rep#1}[1]{%
 \def\rep@title{\emph{\textbf{#2} \ref{##1}}}%
 \begin{rep@theorem}}%
 {\end{rep@theorem}}}
\newcommand{\eps}{\varepsilon}
\newcommand{\opt}{\text{OPT}}
\newcommand{\dist}{\text{dist}}
\newcommand{\cost}{\text{cost}}
\newcommand{\calM}{\mathcal{M}}
\newcommand{\local}{\mathcal{L}}
\newcommand{\globalS}{C^*}
\newcommand{\localt}{\tilde{\mathcal{L}}}
\newcommand{\globalSt}{\tilde{\globalS}}
\newcommand{\IR}{\text{IR}}
\newcommand{\reas}{\text{Reassign}}
\newcommand{\Nloc}{N_{\local}}
\newcommand{\Nglob}{N_{\globalS}}
\newcommand{\Rloc}{R^{\localt}}
\newcommand{\Rglob}{R^{\globalSt}}
\newcommand{\Ngam}{N_{\Gamma}(\ell)}
\newcommand{\clients}{A}
\newcommand{\candidates}{F}
\newif\ifshort
\title{On the Local Structure of Stable Clustering Instances}
\author{Vincent Cohen-Addad\thanks{vcohenad@gmail.com}}
\affil{University of Copenhagen}
\author{Chris Schwiegelshohn~\thanks{chris.schwiegelshohn@tu-dortmund.de}}
\affil{Sapienza University of Rome}
\date{}
\begin{document}
\maketitle
\begin{abstract}
We study the classic $k$-median and $k$-means clustering objectives in  the \emph{beyond-worst-case} scenario.
We consider three well-studied notions of structured data that aim at characterizing real-world inputs:
  \begin{itemize}
  \item Distribution Stability (introduced by Awasthi, Blum, and 
    Sheffet, FOCS 2010)
  \item Spectral Separability (introduced by Kumar and Kannan, FOCS 2010)
  \item Perturbation Resilience (introduced by Bilu and Linial, ICS 2010)
  \end{itemize}
We prove structural results showing that inputs satisfying at least one of the conditions are inherently ``local''.
Namely, for any such input, any local optimum is close both in term of structure and in term of objective value to the global optima.

As a corollary we obtain that the widely-used Local Search algorithm has strong performance guarantees for both the tasks of recovering the underlying optimal clustering and obtaining a clustering of small cost.
This is a significant step toward understanding the success of local search heuristics in clustering applications.
\end{abstract}

\newpage
\pagenumbering{arabic}

\setcounter{page}{1}
\section{Introduction}
Clustering is a fundamental, routinely-used approach to extract 
information from datasets. 
Given a dataset and the most important features of the data, a 
clustering
is a partition of the data such that data elements in the same part have 
common features.
The problem of computing a clustering has received
a considerable amount of attention in both practice and theory.

The variety of contexts in which clustering problems arise makes the 
problem of computing a ``good'' clustering hard to define formally.
From a theoretician's perspective, clustering problems are often modeled by 
an objective function we wish to optimize (\eg the 
famous $k$-median or $k$-means objective functions). 
This \emph{modeling step} is both needed and crucial since
it provides a framework to quantitatively compare algorithms.
Unfortunately, the most popular objectives for clustering, like the $k$-median and $k$-means objectives, are 
hard to approximate, even when restricted to Euclidean spaces. 

\medskip

This view is generally not shared by practitioners. Indeed, clustering is often used as a preprocessing step to simplify and speed up subsequent analysis, even if this analysis admits polynomial time algorithms.
If the clustering itself is of independent interest, there are many heuristics with good running times and results on real-world inputs.
\medskip

This induces a gap between theory and practice.
On the one hand, the algorithms that are efficient in practice cannot be proven to achieve good approximation to the $k$-median and $k$-means objectives in the worst-case.
Since approximation ratios are one of the main methods to evaluate algorithms, theory predicts that determining a good clustering is a difficult task.
On the other hand, the best theoretical algorithms turn out to be noncompetitive in applications because they are designed to handle ``unrealistically'' hard instances with little importance for practitioners.
To bridge the gap between theory and practice, it 
is necessary to go \emph{beyond the worst-case analysis} by, for example,
characterizing and focusing on inputs that arise in practice.

\subsection{Real-world Inputs}
Several approaches have been proposed to bridge the gap between theory 
and practice. For example, researchers have considered
the average-case scenario (\eg \cite{ben1992theory}) 
where the running time of an algorithm
is analyzed with respect to some probability distribution 
over the set of 
all inputs. Smooth analysis (\eg \cite{SpT04}) is 
another celebrated approach that
analyzes the running time of an algorithm with respect to worst-case
inputs subject to small random perturbations.

Another successful approach, the one we take in this paper, 
consists in focusing
on \emph{structured} inputs. 
In a seminal paper, Ostrovsky, Rabani, Schulman, and Swamy~\cite{ORSS12} 
introduced the idea that inputs that come from practice induce a 
\emph{ground-truth} or a \emph{meaningful} clustering.
They argued that an input $I$ contains a 
meaningful clustering into $k$ 
clusters if the optimal $k$-median cost of a 
clustering using $k$ centers, say $\opt_k(I)$,
is much smaller than the optimal cost of a clustering using 
$k-1$ centers $\opt_{k-1}(I)$. This 
is also motivated by the \emph{elbow method}\footnote{The elbow-method 
  consists in running an (approximation) algorithm for an 
  incrementally increasing number of clusters until the cost 
  drops significantly.
} (see Section~\ref{sec:bibstability} for more details) 
used by practitioners to define the number of clusters.
More formally, an instance $I$ of $k$-median or $k$-means
satisfies the \emph{$\alpha$-ORSS property}
if $\opt_{k}(I) / \opt_{k-1}(I) \le \alpha$.

$\alpha$-ORSS inputs exhibit interesting
properties.
The popular $k$-means$++$ algorithm (also known as the $D^2$-sampling 
technique) achieves an $O(1)$-approximation for these inputs\footnote{
For worst-case inputs, the $k$-means$++$ achieves an $O(\log k)$-approximation 
ratio~\cite{ArV07,BMORST11,JaG12,ORSS12}.}.
The condition is also robust with respect to noisy perturbations of the data set. 
ORSS-stability also implies several other conditions aiming to capture well-clusterable instances.
Thus, the inputs satisfying the ORSS property arguably share some properties with the real-world inputs. 
In this paper, we also provide experimental results 
supporting this claim, see Appendix~\ref{S:experiments}.
\smallskip

These results have opened new research directions and raised several
questions. For example:
\begin{itemize}
\item Is it possible to obtain similar results for more general
  classes of inputs?
\item How does the parameter $\alpha$ impact the
  approximation guarantee and running time?  
\item Is it possible to prove good performance guarantees for other
  popular heuristics? 
\item How close to the ``ground-truth'' clustering are the 
  approximate clusterings?
\end{itemize}
We now review the most relevant work in connection to 
the above open questions, see Sections~\ref{sec:relatedwork}  
for other related work.
\paragraph{Distribution Stability (Def.~\ref{defn:betadelta})}
Awasthi, Blum and Sheffet~\cite{ABS10} have tackled  
the first two questions by introducing the notion of 
\emph{distribution stable} instances. 
Distribution stable instances are a generalization of the ORSS instances
(in other words, any instance satisfying the ORSS property
is distribution stable). They also introduced a new algorithm tailored
for distribution stable instances that achieves a 
$(1+\eps)$-approximation for $\alpha$-ORSS inputs (and more generally
$\alpha$-distribution stable instances) 
in time $n^{O(1/\eps\alpha)}$. This was the first algorithm 
whose approximation guarantee was independent from the parameter
$\alpha$ for $\alpha$-ORSS inputs.

\paragraph{Spectral Separability (Def.~\ref{defn:Ssep})}
Kumar and Kannan~\cite{KuK10} tackled the first and third questions by introducing the \emph{proximity} condition\footnote{
  In this paper, we work with a slightly more general condition called \emph{spectral separability} but the 
  motivations behind the two conditions are similar.}.
This condition also generalizes the ORSS condition.
It is motivated by the goal of learning a distribution mixture in a $d$-dimensional Euclidean space. 
Quoting~\cite{KuK10}, the message of their paper can loosely be stated as: 
\begin{quote}
  If the projection of any data point onto the line joining its cluster center to any other cluster center is $\gamma k$
  times standard deviations closer to its own center than the other center, then we can cluster correctly in polynomial time.
\end{quote}
In addition, they have made a significant step toward understanding the success of the classic \texttt{$k$-means} by
showing that it achieves a $1+O(1/\gamma)$-approximation for instances that satisfy the proximity condition.


\paragraph{Perturbation Resilience (Def.~\ref{defn:perturb})}
In a seminal work, Bilu and Linial~\cite{BiL12} introduced a new 
condition to capture real-world instances.
They argue that the optimal solution of a real-world instance
is often much better than any other solution and so, a slight perturbation of the
instance does not lead to a different optimal solution. Perturbation-resilient
instances have been studied in various contexts
(see \eg~\cite{ABS12,BaC16,BHW16,BaL16,BeR14,KSB16}).
For clustering problems, an instance is said to be \emph{$\alpha$-perturbation resilient} if an adversary 
can change the 
distances between pairs of elements by a factor at most $\alpha$ and the optimal 
solution remains the same.
Recently, 
Angelidakis, Makarychev, and Makarychev~\cite{MM16} have given a polynomial-time 
algorithm for solving 
$2$-perturbation-resilient instances\footnote{We note that it is NP-hard to 
recover the optimal clustering of a $<2$-perturbation-resilient instance~\cite{BeR14}.
}. Balcan and Liang~\cite{BaL16} have
tackled the third question by showing that a classic algorithm for 
hierarchical clustering can solve $1+\sqrt{2}$-perturbation-resilient 
instances. This very interesting result leaves open the question as whether classic
algorithms for (``flat'') clustering could also be proven to be efficient for perturbation-resilient 
instances.

\paragraph{Main Open Questions} Previous work has made important steps 
toward bridging the gap between theory and practice for clustering problems.
However, we still do not have a complete understanding of the properties of ``well-structured'' inputs, nor do we know why the algorithms used in practice perform so well. 
Some of the most important open questions are the following:

\begin{itemize}
\item Do the different definitions of well-structured input have 
  common properties?
\item Do heuristics used in practice have strong approximation ratios for well-structured inputs?
\item Do heuristics used in practice recover the ``ground-truth'' 
  clustering on well-structured inputs?
\end{itemize}


\subsection{Our Results: A unified approach via Local Search}
We make a significant step toward answering the above open questions.
We show that the classic 
Local Search heuristic (see 
Algorithm~\ref{alg:LS}), that has found 
widespread application in practice (see Section~\ref{sec:relatedwork}),
achieves \emph{good} approximation guarantees for distribution-stable, 
spectrally-separable, and perturbation-resilient instances 
(see Theorems~\ref{thm:beta-delta},~\ref{thm:perturbation},~\ref{thm:spectralptas}).

More concretely, we show that Local Search is a 
polynomial-time approximation scheme (PTAS) for both 
distribution-stable and 
spectrally-separable\footnote{Assuming a standard preprocessing step 
consisting of a projection onto a subspace of lower dimension.} 
instances. 
In the case of distribution stability, we also answer the above open question by showing that \emph{most} of the structure of the optimal underlying clustering is recovered by the algorithm.
Furthermore, our results hold even when only a $\delta$ fraction (for any constant $\delta>0$) of the points of each optimal cluster satisfies the $\beta$-distribution-stability property.

For $\gamma$-perturbation-resilient instances, we show that if $\gamma>3$ then any solution is the optimal solution if it cannot be improved by adding or removing $2\gamma$ centers. 
We also show that the analysis is essentially tight.

\MyFrame{
These results show that well-structured inputs have the property that the local optima are close both qualitatively (in terms of structure) and quantitatively (in terms of objective value) to the global ``ground-truth'' optimum.
\medskip

These results make a significant step toward explaining the 
success of Local Search approaches for solving clustering problems
in practice.
}

\begin{algorithm}
  \caption{Local Search($\eps$) for $k$-Median and $k$-Means}
  \label{alg:LS}
  \begin{algorithmic}[1]
    \State \textbf{Input:} $A,F,\cost,k$
    \State \textbf{Parameter:} $\eps$
    \State $S \gets$ Arbitrary subset of $F$ of cardinality at most $k$.
    \While{$\exists$ $S'$ s.t. $|S'|\leq k$  \textbf{and} $|S \setminus S'| + |S' \setminus S| \leq 2/\eps$
      \textbf{and} cost($S'$) $\le$ $(1-\eps/n)$ cost($S$)\\}
    \State $S \gets S'$
    \EndWhile
    \State \textbf{Output:} $S$     
  \end{algorithmic}
\end{algorithm}

\subsection{Organization of the Paper}
Section~\ref{sec:relatedwork} provides a more detailed review of previous work on worst-case approximation algorithms and Local Search. Further comments on stability conditions not covered in the introduction can be found in Section~\ref{sec:bibstability} at the end of the paper.
Section~\ref{sec:prelim} introduces preliminaries and notation. 
Section~\ref{sec:distributionstable} is dedicated to distribution-stable instances, 
Section~\ref{sec:perturbation-resilient}
to perturbation-resilient instances, and Section~\ref{sec:euclid} 
to spectrally-separated instances.
All the missing proofs can be found in the appendix.

\section{Related Work}
\label{sec:relatedwork}
\paragraph{Worst-Case Hardness} The problems we study are NP-hard: $k$-median and $k$-means are already NP-hard in the Euclidean plane  (see Meggido and Supowit~\cite{MeS84}, Mahajan et al. ~\cite{MNV12}, and Dasgupta and Freud~\cite{DaF09}). In terms of hardness of approximation, both problems are APX-hard, even in the Euclidean setting when both $k$ and $d$ are part of the input (see Guha and Khuller~\cite{GuK99}, Jain et al.~\cite{JMS02}, Guruswami et al.~\cite{GI03}, and Awasthi et al.~\cite{ACKS15}). 
On the positive side, constant factor approximations are known in metric space for both  $k$-median and $k$-means (see~\cite{ANSW16,BPRST15,LiS13,JaV01,MeP03}). 
For Euclidean spaces we have a PTAS for both problems, either assuming $d$ fixed and $k$ arbitrary~\cite{ARR98,CAKM16,FRS16a,HaK07,HaM04,KoR07}, or assuming $k$ fixed and $d$ arbitrary~\cite{FeL11,KSS10}.

\paragraph{Local Search}
Local Search is an all-purpose heuristic that may be applied to any problem, see Aarts and Lenstra~\cite{AaL97} for a general introduction.
For clustering, there exists a large body of bicriteria approximations for $k$-median and $k$-means~\cite{BaV15,ChG05,CoM15,KPR00}.
Arya et al.~\cite{AGKMMP04} showed that Local Search with a neighborhood size of $1/\varepsilon$ gives a $3+2\varepsilon$ approximation to $k$-median, see also~\cite{GuT08}.
Kanungo et al.~\cite{KMNPSW04} proved an approximation ratio of $9+\varepsilon$ for $k$-means clustering by Local Search, which was until very recently~\cite{ANSW16} the best known algorithm with a polynomial running time in metric and Euclidean spaces.\footnote{They combined Local Search with techniques from Matousek~\cite{Mat00} for $k$-means clustering in Euclidean spaces. The running time of the algorithm as stated incurs an additional factor of $\varepsilon^{-d}$ due to the use of Matousek's approximate centroid set. Using standard techniques (see e.g. Section~\ref{sec:euclidperturb} of this paper), a fully polynomial running time in $n$, $d$, and $k$ is also possible without sacrificing approximation guarantees.} 
Recently, Local Search with an appropriate neighborhood size was shown to be a PTAS for $k$-means and $k$-median in certain restricted metrics including constant dimensional Euclidean space~\cite{CAKM16,FRS16a}.
Due to its simplicity, Local Search is also a popular subroutine for clustering tasks in various more specialized computational models~\cite{BBLM14,BeT10,GMMMO03}.
For more theoretical clustering papers using Local Search, we refer to~\cite{SAD04,DGK02,FrZ16,HaM01,YSZUC08}.

Local Search is also often used for clustering in more  applied areas of computer science (\eg \cite{tuzun1999,Ghosh2003150,Ardjmand201432,hansen2001variable}).
Indeed, the use of Local Search with a neighborhood of size $1$ for clustering was first proposed by T\"uz\"un and Burke~\cite{TuB99}, see also Ghosh~\cite{Gho03} for a more efficient version of the same approach.
Due the ease by which it may be implemented, Local Search has become one of the most commonly used heuristics for clustering and facility location, see Ardjmand~\cite{APWA14}. 
Nevertheless, high running times is one of the biggest drawbacks of Local Search compared to other approaches, though a number of papers have engineered it to become surprisingly competitive, see Frahling and Sohler~\cite{FrS08}, Kanungo et al.~\cite{KMNPSW02}, and Sun~\cite{Sun06}.

\section{Definitions and Notations}
\label{sec:prelim}

\paragraph{The problem} The problem we consider in this work is the following
slightly more general version of the $k$-means and $k$-median problems.
\begin{defn}[$k$-Clustering]
Let $A$ be a set of clients, $F$ a set of centers, both lying in a metric space $(\mathcal{X},\dist)$,
 $\text{\em cost}$ a function $A\times F\rightarrow \mathbb{R}_+$, and $k$ a non-negative integer.
The \emph{$k$-clustering problem} asks for a subset $S$ of $F$, of cardinality at most $k$, that minimizes 
  \[\text{cost}(S)= \sum_{x \in A} \min_{c \in S} \text{cost}(x,c).\]
The clustering of $A$ \emph{induced by $S$} is the partition of $A$ into 
subsets $C=\{C_1,\ldots C_k\}$ such that $C_i=\{x\in A~|~c_i = \underset{c\in S}{\text{argmin }}\text{cost}(x,c)\}$
(breaking ties arbitrarily).
\end{defn}
The well known $k$-median and $k$-means problems correspond to the special 
cases $\text{cost}(a,c)=\dist(a,c)$ and $\text{cost}(a,c)=\dist(a,c)^2$ respectively.
Throughout the rest of this paper, let $\opt$ denote the value of an optimal solution.
To give slightly simpler proofs for $\beta$-distribution-stable and $\alpha$-perturbation-resilient instances,
we will assume that $\cost(a,b) = \dist(a,b)$. 
If $\cost(a,b)=\dist(a,b)^p$, then $\alpha$ depends exponentially on the $p$ for perturbation resilience.
For distribution stability, we still have a PTAS by introducing a dependency in $1/\eps^{O(p)}$ in the neighborhood size of the algorithm. 
The analysis is unchanged save for various applications of the following lemma at different steps of the proof.

\begin{lemma}\label{lem:TI}
  Let $p \ge 0$ and $1/2 > \eps>0$. For any $a,b,c \in A \cup F$, we have
  $\cost(a,b) \le (1+\eps)^p \cost(a,c) + \cost(c,b)(1+1/\eps)^p$.
\end{lemma}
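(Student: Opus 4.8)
The plan is to reduce the statement to the elementary inequality
$(x+y)^p \le (1+\eps)^p\, x^p + (1+1/\eps)^p\, y^p$ for reals $x,y \ge 0$, and then prove that by a convexity argument. First I would recall that here $\cost(\cdot,\cdot)=\dist(\cdot,\cdot)^p$, so by the triangle inequality $\dist(a,b)\le \dist(a,c)+\dist(c,b)$ together with monotonicity of $t\mapsto t^p$ on $\R_+$ (valid for $p\ge 0$), it suffices to take $x=\dist(a,c)$ and $y=\dist(c,b)$ and bound $(x+y)^p$.

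For $0\le p\le 1$ the map $t\mapsto t^p$ is subadditive on $\R_+$ (scale so that $x+y=1$ and use $t^p\ge t$ for $t\in[0,1]$), hence $(x+y)^p\le x^p+y^p$, which is already at most $(1+\eps)^p x^p + (1+1/\eps)^p y^p$ because both coefficients are $\ge 1$.

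For $p\ge 1$ I would write $x+y$ as the convex combination
$x+y \;=\; \tfrac{1}{1+\eps}\cdot (1+\eps)x \;+\; \tfrac{\eps}{1+\eps}\cdot (1+1/\eps)y$,
where the weights $\tfrac{1}{1+\eps}$ and $\tfrac{\eps}{1+\eps}$ sum to $1$ and the identity itself is immediate. Applying Jensen's inequality to the convex function $t\mapsto t^p$ yields
$(x+y)^p \le \tfrac{1}{1+\eps}(1+\eps)^p x^p + \tfrac{\eps}{1+\eps}(1+1/\eps)^p y^p = (1+\eps)^{p-1} x^p + (1+1/\eps)^{p-1} y^p$,
using $\tfrac{\eps}{1+\eps}(1+1/\eps)^p=(1+1/\eps)^{p-1}$. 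Since $p-1\le p$ and both bases are $\ge 1$, this is at most $(1+\eps)^p x^p + (1+1/\eps)^p y^p$, which finishes the case and hence the lemma.

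There is no genuine obstacle; this is the standard ``relaxed triangle inequality'' for $p$-th powers. The only points requiring care are picking the convex weights so that the two exponents come out symmetric, and separating the concave regime $p\le 1$ from the convex regime $p\ge 1$. I note in passing that the hypothesis $\eps<1/2$ is not used in the argument — the inequality holds for every $\eps>0$ — it merely records the range of the neighborhood-size parameter used by the algorithm.
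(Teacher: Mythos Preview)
The paper states this lemma without proof (it is treated as a standard ``relaxed triangle inequality'' for $p$-th power costs), so there is no proof to compare against. Your argument is correct: the reduction via the metric triangle inequality to $(x+y)^p \le (1+\eps)^p x^p + (1+1/\eps)^p y^p$, the subadditivity bound for $0\le p\le 1$, and the Jensen step with weights $\tfrac{1}{1+\eps},\tfrac{\eps}{1+\eps}$ for $p\ge 1$ all go through cleanly; you even obtain the sharper exponents $p-1$ before relaxing to $p$, and you are right that the hypothesis $\eps<1/2$ is not used in the inequality itself.
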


\section{Distribution Stability}
\label{sec:distributionstable}
We work with the notion of $\beta,\delta$-distribution stability which generalizes 
$\beta$-distribution stability.
This extends our result to datasets that exhibit a slightly weaker structure than
the $\beta$-distribution stability.
Namely, the $\beta,\delta$-distribution stability only requires that for
each cluster of the optimal solution, most of the 
points satisfy the $\beta$-distribution stability condition. 

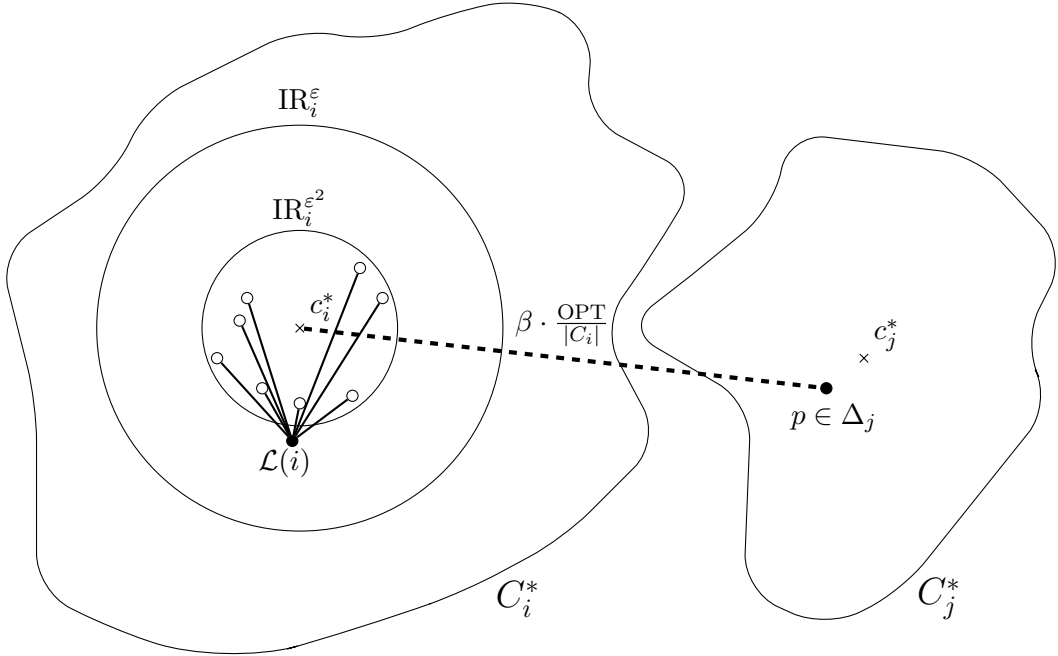
\begin{figure}
\begin{center}
\begin{tikzpicture}[rotate = 0]


\draw [rounded corners=0.5cm] (0.5,5.9) -- (0,6) -- (-2,5) -- (-2.5,4) -- (-4,3) -- (-3.5,1) -- (-3.5,-1.5)  -- (-2.5, -2) [rounded corners=1cm] -- (-1,-2.5) -- (0.8, -2) [rounded corners=0.5cm] -- (2.2, -1.5) -- (3.5, -0.8) -- (4.8,0.5) -- (4, 2) -- (4.7, 3) -- (5.3, 4) -- (3.8, 4.8) -- (3.9, 6) -- (3, 6.4) -- (2,6.2) -- (1, 5.8) -- (0.5,5.9);

\draw [rounded corners = 0.5cm] (5.2,2.7) -- (4.3,2) -- (6,1) -- (5.9, -1.5) [rounded corners = 0.5cm] -- (7,-2) -- (8, -1.5) -- (10,1) -- (9.6, 1.8) -- (10.2,3) -- (9, 4.3)  -- (6.5, 4.6) -- (6.3, 3.6) -- (5.2,2.7) ;

\node[cross = 2pt, label={[xshift=0.3cm, yshift=-0.1cm]$c_i^*$}] (c1) at (0,2) {};
\node[draw=black,fill=black, shape=circle, inner sep = 1.5pt, label={[xshift=-0.1cm, yshift=-0.7cm]$\local(i)$}] (l) at (-0.1,0.5) {};
\draw (0,2) circle (1.3cm);
\draw (0,2) circle (2.7cm);
\node (x) at (0,3.6) {$\IR^{\eps^2}_i$};
\node (x) at (0,5) {$\IR^{\eps}_i$};
\node (x) at (2.9,-1.6) {\Large $C_i^*$};
\node (x) at (8.5,-1.6) {\Large $C_j^*$};

\node[draw=black,fill=none, shape=circle, inner sep = 1.5pt] (a1) at (0,1) {};
\node[draw=black,fill=none, shape=circle, inner sep = 1.5pt] (a2) at (-0.5,1.2) {};

\node[draw=black,fill=none, shape=circle, inner sep = 1.5pt] (a3) at (-1.1,1.6) {};
\node[draw=black,fill=none, shape=circle, inner sep = 1.5pt] (a4) at (-0.7,2.4) {};
\node[draw=black,fill=none, shape=circle, inner sep = 1.5pt] (a5) at (-0.8,2.1) {};

\node[draw=black,fill=none, shape=circle, inner sep = 1.5pt] (a6) at (0.7,1.1) {};

\node[draw=black,fill=none, shape=circle, inner sep = 1.5pt] (a7) at (1.1,2.4) {};
\node[draw=black,fill=none, shape=circle, inner sep = 1.5pt] (a8) at (0.8,2.8) {};

\draw[thick] (l) -- (a1);
\draw[thick] (l) -- (a2);
\draw[thick] (l) -- (a3);
\draw[thick] (l) -- (a4);
\draw[thick] (l) -- (a5);
\draw[thick] (l) -- (a6);
\draw[thick] (l) -- (a7);
\draw[thick] (l) -- (a8);

\node[cross = 2pt, label={[xshift=0.3cm, yshift=-0.1cm]$c_j^*$}] (c2) at (7.5,1.6) {};
\node[draw=black,fill=black, shape=circle, inner sep = 1.5pt, label={[xshift=0.1cm, yshift=-0.8cm]$p\in \Delta_j$}] (p) at (7,1.2) {};

\draw[ultra thick,dashed] (c1) --  node[above] {$\beta\cdot\frac{\opt}{|C_i|}$} (p);

\end{tikzpicture}
\caption{Example of a cluster $C^*_i \not \in Z^*$. An important fraction of the points in $\IR^{\eps^2}_i$ are
  served by $\local(i)$ and few points in $\bigcup_{j \neq i} \Delta_j$ are served by $\local(i)$.}
\label{fig:local}
\end{center}
\end{figure}

\begin{defn}[$(\beta,\delta)$-Distribution Stability]
\label{defn:betadelta}
   Let $(A,F,\text{cost},k)$ be an instance of $k$-clustering where $A\cup F$ lie in a metric space and let 
   $S^*=\{c^*_1,\ldots, c^*_k\} \subseteq F$ be a set of centers and $C^*=\{C^*_1,\ldots ,C^*_k\}$ be
   the clustering induced by $S^*$.
   Further, let $\beta>0$ and $0\le\delta\le 1$. 
   Then the pair $(A,F,\text{cost},k),(C^*,S^*)$ is a 
   \emph{$(\beta,\delta)$-distribution stable instance} if, for any $i$, 
   there exists a set $\Delta_i \subseteq C^*_i$ such that 
   $|\Delta_i| \ge (1-\delta)|C^*_i|$ 
   and for any $x \in \Delta_i $, for any $j \neq i$,
   $$
   \cost(x,c^*_j) \ge \beta \frac{\opt}{|C^*_j|},
   $$
   where $\cost(x,c^*_j)$ is the cost of assigning $x$ to $c^*_j$.
 \end{defn}
For any instance $(A,F,\text{cost},k)$ that is $(\beta,\delta)$-distribution stable, 
we refer to $(C^*,S^*)$ as a $(\beta,\delta)$-clustering of the instance.
We show the following theorem for the $k$-median problem. 
For the $k$-clustering problem with parameter $p$, 
the constant $\eta$ becomes a function of $p$.
\begin{theorem}
  \label{thm:beta-delta}
  Let $p >0$, 
$\beta>0$, and $\eps < \min(1-\delta,1/3)$. 
  For a $(\beta,\delta)$-stable instance with $(\beta,\delta)$ clustering
  $(C^*,S^*)$ and an absolute constant $\eta$, the cost of the 
  solution output by Local Search$(4\eps^{-3 }\beta^{-1}+O(\eps^{-2}\beta^{-1}))$
  (Algorithm \ref{alg:LS})
  is at most $(1+ \eta\eps) \cost(C^*)$.

  Moreover,
  let $L = \{L_1,\ldots,L_k\}$ denote the clusters of the solution output by 
  Local Search$(4\eps^{-3 }\beta^{-1}+O(\eps^{-2}\beta^{-1}))$. 
  If $\delta=0$ (i.e.: the instance is simply $\beta$-distribution-stable), 
  there exists a bijection $\phi: L \mapsto C^*$ 
  such that for at least $m = k - O( \eps^{-3}\beta^{-1})$ clusters 
  $L'_1,\ldots,L'_m \subseteq L$, 
  the following two statements hold.
  \begin{itemize}
    \item At least a $(1-\eps)$ fraction of $\IR^{\eps^2}_i\cap C^*_i$ 
    are served by a unique center $\local(i)$ in solution $\local$.
  \item The total number of clients $p \in \bigcup_{j \neq i} C^*_j$ 
     served by $\local(i)$ in $\local$ is at most
    $\eps  |\IR^{\eps^2}_i\cap C^*_i|$.   
  \end{itemize}
\end{theorem}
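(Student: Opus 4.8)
The plan is to combine the classical local-search swap analysis (in the style of Arya et al.) with the separation supplied by $(\beta,\delta)$-distribution stability, and then to read the structural statement off the resulting near-optimal cost bound. Write $S$ for the output of Local Search, $L=\{L_1,\dots,L_k\}$ for the induced clustering, $S^*=\{c^*_1,\dots,c^*_k\}$, and $\rho=4\eps^{-3}\beta^{-1}+O(\eps^{-2}\beta^{-1})$ for the neighbourhood size. Recall that a point of $\IR^{t}_i$ is within cost $O(t\beta\,\opt/|C^*_i|)$ of $c^*_i$, and that Definition~\ref{defn:betadelta} says exactly that every client of $\Delta_j$ lies outside $\IR^{1}_j$. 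Since the loop stops only when no swap of at most $\rho$ centres improves the cost by a $1-\eps/n$ factor and the argument below combines $O(k)$ test swaps, the accumulated local-optimality slack is $O(\eps)\cdot\cost(S)$, so it suffices to exhibit test swaps whose inequalities sum to $\cost(S)\le(1+O(\eps))\opt+O(\eps)\cost(S)$; the $k$-clustering version with parameter $p$ is identical except that every use of the triangle inequality becomes an application of Lemma~\ref{lem:TI}, which is what turns $\eta$ and the neighbourhood size into functions of $p$.

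\emph{Classifying optimal clusters.} For each $i$ let $\local(i)\in S$ be the centre serving the most clients of $\IR^{\eps^2}_i\cap C^*_i$, and call $C^*_i$ \emph{good} if exactly one centre of $S$ lies in $\IR^{\eps}_i$ (necessarily $\local(i)$); put every other optimal cluster into the exceptional set $Z^*$. Two facts drive everything. First, no centre of $S$ can lie in both $\IR^{\eps}_i$ and $\IR^{\eps}_j$ for $i\ne j$: a $\Delta$-client of the smaller of the two clusters would then be within cost $O(\eps\beta\,\opt/\min(|C^*_i|,|C^*_j|))$ of the other cluster's optimal centre (using $\eps<1/3$), contradicting that it lies outside the latter's $\IR^{1}$-region; hence $i\mapsto\local(i)$ is injective on the good clusters and each good cluster owns a private centre of $S$. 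Second, at most $O(\eps^{-3}\beta^{-1})$ optimal clusters can individually carry an $\eps^{3}\beta$ fraction of $\opt$, so by Markov's inequality all but $O(\eps^{-3}\beta^{-1})$ clusters have a $(1-\eps)$ fraction of their points inside $\IR^{\eps^2}_i$; the remaining few we also dump into $Z^*$.

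\emph{Cost bound and $|Z^*|=O(\eps^{-3}\beta^{-1})$.} The crude constant-factor guarantee of local search gives $\cost(S)=O(\opt)$; moreover a cluster $C^*_i\in Z^*$ with no centre of $S$ inside $\IR^{\eps}_i$ forces each of its clients in $\Delta_i\cap\IR^{\eps^2}_i$ (of which, for a non-tiny cluster, there are $\Omega(|C^*_i|)$ since $\delta=0$) to pay at least the gap between the two radii, so such a cluster contributes $\Omega(\eps\beta)\cdot\opt$ to $\cost(S)$; together with the Markov count above this yields $|Z^*|=O(\eps^{-3}\beta^{-1})$ (a symmetric local-improvement argument --- close a surplus private centre of a cluster holding two, open a stranded optimal centre --- also bounds the clusters with two or more centres of $S$ in $\IR^{\eps}_i$). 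Knowing $Z^*$ is small, run the partition-and-swap argument along $\local$: good clusters contribute one-for-one swaps (close $\local(i)$, open $c^*_i$) in which $C^*_i$ goes to $c^*_i$ while a client of some other $C^*_j$ displaced from $\local(i)$ is sent to the private centre $\local(j)$ at cost $(1+O(\eps))$ times its optimal cost; the optimal centres of $Z^*$ are batched, together with an equal number of centres of $S$ outside the image of $\local$ (available by injectivity), into swaps of size at most $\rho$ that open each $c^*_i$ with $i\in Z^*$, so the $\Delta_i$-clients of exceptional clusters pay their own optimal cost while the $O(\eps)|C^*_i|$ remaining clients of such clusters, and the $\delta|C^*_i|$ clients outside $\bigcup_j\Delta_j$, are charged to $\cost(C^*_i)$ and to $\cost(S)$. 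Summing the swap inequalities and rearranging gives $\cost(S)\le(1+\eta\eps)\opt$.

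\emph{The structural statement ($\delta=0$).} Let $\phi$ send the cluster of $\local(i)$ to $C^*_i$ for good $i$, extended arbitrarily to a bijection on the $O(\eps^{-3}\beta^{-1})$ leftover indices, and let $L'_1,\dots,L'_m$ be the clusters of the private centres. For a good $C^*_i$, any $x\in\IR^{\eps^2}_i\cap C^*_i$ not served by $\local(i)$ is served by a centre of $S$ outside $\IR^{\eps}_i$, hence at cost $\Omega(\eps\beta)\cdot\opt/|C^*_i|$, while its optimal cost is $O(\eps^{2}\beta)\cdot\opt/|C^*_i|$, so it overpays by $\Omega(\eps\beta)\cdot\opt/|C^*_i|$; these overpayments are over distinct clients and sum to at most $\cost(S)-\opt\le\eta\eps\,\opt$, so the good clusters in which more than an $\eps$ fraction of $\IR^{\eps^2}_i\cap C^*_i$ is misassigned number only $O(\eps^{-1}\beta^{-1})$, and discarding them gives the first bullet. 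Symmetrically, a client $p\in C^*_j$ with $j\ne i$ served by $\local(i)$ satisfies $\cost(p,\local(i))\ge\cost(p,c^*_i)-\cost(c^*_i,\local(i))=\Omega(\beta)\cdot\opt/|C^*_i|$ because $p\in\Delta_j$ and $\local(i)\in\IR^{\eps}_i$; summing these (disjoint) service costs inside $\cost(S)=O(\opt)$ shows the good clusters in which $\local(i)$ serves more than $\eps|\IR^{\eps^2}_i\cap C^*_i|$ foreign clients number only $O(\eps^{-1}\beta^{-1})$, and discarding them gives the second bullet. All discards together total $O(\eps^{-3}\beta^{-1})$, so $m=k-O(\eps^{-3}\beta^{-1})$. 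The crux is the cost-bound step: arranging the batched test swaps so that the cardinality constraint $\rho$ is honoured while the total reassignment error stays $O(\eps)\cdot\cost(S)$, and pinning $|Z^*|$ to $O(\eps^{-3}\beta^{-1})$ with exactly the stated dependence --- this is what dictates the neighbourhood size $4\eps^{-3}\beta^{-1}+O(\eps^{-2}\beta^{-1})$.
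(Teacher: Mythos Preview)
Your proposal has all the right pieces --- the inner-ring classification, the bound on $|Z^*|$ from the constant-factor guarantee plus Markov on expensive clusters, the one-for-one swaps for good clusters together with a single batched swap over $Z^*$, and the counting arguments for the two bullets --- and these are exactly the ingredients of the paper's Lemmas~\ref{lem:structbeta}--\ref{lem:costgood-delta}.

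There is, however, a genuine ordering gap. In your one-for-one swap you send a foreign client $x\in C^*_j$ displaced from $\local(i)$ to $\local(j)$ ``at cost $(1+O(\eps))$ times its optimal cost.'' This fails per client: $\dist(x,\local(j))\le g_x+\dist(c^*_j,\local(j))\le g_x+\eps\beta\,\opt/|C^*_j|$, and the additive term need not be $O(\eps)\,g_x$. To control the \emph{aggregate} reassignment cost (as the paper does in Lemma~\ref{lem:reassign} and in the proof of Lemma~\ref{lem:costgood-delta-part1}) one averages $\dist(c^*_j,\local(j))$ over $C^*_j\cap A(\local(j))$ and then needs the ratio $|C^*_j\setminus A(\local(j))|/|C^*_j\cap A(\local(j))|$ to be $O(\eps)$ --- but that ratio bound \emph{is} the first bullet, which you only establish \emph{after} the cost bound. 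The paper avoids the circularity by proving the full structural lemma (both bullets, Lemmas~\ref{lem:nbin-delta} and~\ref{lem:nbout-delta}) \emph{before} any swap analysis, using only the $5$-approximation of local search. Your own first-bullet argument works that way too, once you replace the step ``overpayments sum to at most $\cost(S)-\opt$'' (unjustified: positive per-client overpayments on a subset are not bounded by $\cost(S)-\cost(C^*)$, since other clients may have $l_x<g_x$) by the cruder $\sum l_x\le\cost(S)=O(\opt)$; this yields $O(\eps^{-2}\beta^{-1})$ rather than your claimed $O(\eps^{-1}\beta^{-1})$ violating clusters, still well within the stated $O(\eps^{-3}\beta^{-1})$. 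So: establish both bullets first from the $5$-approximation, then run the swap analysis; the remainder of your sketch then matches the paper.
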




We first give a high-level description of the analysis.
Assume for simplicity that all the optimal clusters cost less than an $\eps^3$ fraction of 
the total cost of the optimal solution.
Combining this assumption with the $\beta$-distribution-stability property, one can show that the centers and points close to the center are far away from each other. 
Thus, guided by the objective function, the local search algorithm identifies most of these centers. 
In addition, we can show that for most of these good centers the corresponding cluster in the local solution is very similar to the optimal cluster (see Figure~\ref{fig:local}). 
In total, only very few clusters (a function of $\eps$ and $\beta$)
of the optimal solution are not present in the local solution. 
We conclude our proof by using local optimality.
Our proof includes a few ingredients from~\cite{ABS10} such as the notion of \emph{inner-ring} (we work with a slightly
more general definition) and distinguishes between \emph{cheap} and \emph{expensive} clusters. 
Nevertheless our analysis is slightly stronger as we consider a significantly weaker stability condition and can not only analyze the cost of the solution of the algorithm, but also the structure of its clusters.

Throughout this section, we consider a set of centers
$S^* = \{c^*_1,\ldots,c^*_k\}$ whose induced clustering is $C^* = \{C^*_1,\ldots,C^*_k\}$
and such that the instance is $(\beta,\delta)$-stable with respect $(C^*,S^*)$.
We denote by \emph{clusters} the parts of a partition 
$C^*=\{C^*_1,\ldots ,C^*_k\}$.
Let $\cost(C^*) = \sum_{i=1}^k \sum_{x \in C^*_i} \cost(x,c^*_i)$. 
Moreover, for any cluster $C^*_i$, for any client $x \in C^*_i$, denote by $g_x$ the cost of client $x$ in solution $C^*$: 
$g_x = \cost(x,c^*_i) = \dist(x,c^*_i)$ since we consider the $k$-median problem.
Let $\local$ denote the output of LocalSearch($\beta^{-1} \eps^{-3}$) and $l_x$ the 
cost induced by client $x$ in solution $\local$, namely $l_x = \min_{\ell \in \local}\cost(x,\ell)$, and $\cost(\local)=\sum_{x\in A}l_x$.
The following definition is a generalization of the inner-ring definition of~\cite{ABS10}.

\begin{defn}
  For any $\eps_0$, we define the \emph{inner ring} of cluster $i$, $\IR^{\eps_0}_i$,
  as the set of $x \in A \cup F $ such that 
  $
  \dist(x,c^*_i) \le \eps_0 \beta {\opt}/{|C^*_i|}
  $.
\end{defn}

We say that cluster $i$ is \emph{cheap} if $\sum_{x \in C^*_i} g_x \le \eps^3 \beta \opt$, and \emph{expensive} otherwise.
We aim at proving the following structural lemma.
\begin{lemma}\label{lem:structbeta}
  There exists a set of clusters $Z^* \subseteq C^*$ of size
  at most $2\eps^{-3}\beta^{-1} + O(\eps^{-2}\beta^{-1})$ such that 
  for any cluster $C^*_i \in C^* \setminus Z^*$, we have
  the following properties
  \begin{enumerate}
  \item $C^*_i$ is cheap.
  \item At least a $(1-\eps)$ fraction of $\IR^{\eps^2}_i\cap C^*_i$ 
    are served by a unique center $\local(i)$ in solution $\local$.
  \item The total number of clients $p \in \bigcup_{j \neq i} \Delta_j$ 
     served by $\local(i)$ in $\local$ is at most
    $\eps  |\IR^{\eps^2}_i\cap C^*_i|$.
  \end{enumerate}
\end{lemma}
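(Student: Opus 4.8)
The plan is to take $Z^{*}$ to be the union of four families of clusters: the \emph{expensive} ones, the cheap ones that are \emph{uncaptured}, the captured cheap ones that are \emph{split}, and the captured cheap non‑split ones that violate the third property. The first, second and fourth families are bounded by a direct counting argument against $\cost(\local)$, while the split family needs a budgeting argument on the centres of $\local$. Two facts are used throughout as black boxes: $\cost(C^{*})=\opt$, and $\cost(\local)=O(\opt)$ (the classical guarantee of Local Search for $k$-median with neighbourhood size at least $1$~\cite{AGKMMP04}; the $(1-\eps/n)$ slack in the stopping rule only changes this by a $1+O(\eps)$ factor).

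\emph{Step 1: expensive and uncaptured clusters.} Since $\sum_{i}\sum_{x\in C^{*}_{i}}g_{x}=\cost(C^{*})=\opt$, at most $\opt/(\eps^{3}\beta\opt)=\eps^{-3}\beta^{-1}$ clusters are expensive; put all of them into $Z^{*}$. Now fix a cheap cluster $C^{*}_{i}$ and abbreviate $B_{i}:=\IR^{\eps^{2}}_{i}\cap C^{*}_{i}$; Markov's inequality applied to $\sum_{x\in C^{*}_{i}}g_{x}\le \eps^{3}\beta\opt$ gives $|B_{i}|\ge(1-\eps)|C^{*}_{i}|$, and every point of $B_{i}$ lies within $\eps^{2}\beta\opt/|C^{*}_{i}|$ of $c^{*}_{i}$. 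Call $C^{*}_{i}$ \emph{captured} if $\local$ has a centre within distance $2\eps^{2}\beta\opt/|C^{*}_{i}|$ of $c^{*}_{i}$, and in that case let $\local(i)$ be the centre of $\local$ closest to $c^{*}_{i}$ (ties broken arbitrarily). If a cheap $C^{*}_{i}$ is not captured, then every centre of $\local$ is at distance more than $2\eps^{2}\beta\opt/|C^{*}_{i}|$ from $c^{*}_{i}$, so each $x\in B_{i}$ pays $l_{x}\ge\eps^{2}\beta\opt/|C^{*}_{i}|$ in $\local$ and $\sum_{x\in B_{i}}l_{x}\ge(1-\eps)\eps^{2}\beta\opt$; since the $B_{i}$ are pairwise disjoint and $\sum_{x}l_{x}=\cost(\local)=O(\opt)$, only $O(\eps^{-2}\beta^{-1})$ cheap clusters are uncaptured, and I add those to $Z^{*}$.

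\emph{Step 2: a separation fact.} The geometric core is that for any two distinct captured cheap clusters the closed balls $B(c^{*}_{i},4\eps^{2}\beta\opt/|C^{*}_{i}|)$ are pairwise disjoint. I would prove this by contradiction: if two of them met, with $|C^{*}_{i}|\le|C^{*}_{i'}|$, then $\dist(c^{*}_{i},c^{*}_{i'})\le 8\eps^{2}\beta\opt/|C^{*}_{i}|$; pick any $x\in\Delta_{i'}\cap B_{i'}$, which is nonempty since $|\Delta_{i'}|\ge(1-\delta)|C^{*}_{i'}|$, $|B_{i'}|\ge(1-\eps)|C^{*}_{i'}|$ and $\eps<1-\delta$. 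Then $(\beta,\delta)$-stability gives $\dist(x,c^{*}_{i})\ge\beta\opt/|C^{*}_{i}|$, while the triangle inequality gives $\dist(x,c^{*}_{i})\le\eps^{2}\beta\opt/|C^{*}_{i'}|+8\eps^{2}\beta\opt/|C^{*}_{i}|\le 9\eps^{2}\beta\opt/|C^{*}_{i}|$, which is impossible for $\eps<1/3$. Since $\local(i)$ lies in cluster $i$'s ball, this also shows that $i\mapsto\local(i)$ is injective on captured cheap clusters.

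\emph{Step 3: split clusters and the third property.} Call a captured cheap $C^{*}_{i}$ \emph{split} if $\local(i)$ serves fewer than $(1-\eps)|B_{i}|$ points of $B_{i}$; then some $x\in B_{i}$ is served in $\local$ by a centre $\ell'\ne\local(i)$, whence $\dist(x,\ell')\le\dist(x,\local(i))\le 3\eps^{2}\beta\opt/|C^{*}_{i}|$ and $\dist(\ell',c^{*}_{i})\le 4\eps^{2}\beta\opt/|C^{*}_{i}|$, so the ball $B(c^{*}_{i},4\eps^{2}\beta\opt/|C^{*}_{i}|)$ contains the two distinct centres $\local(i)$ and $\ell'$. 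By Step~2 these balls are pairwise disjoint, so writing $a$ and $b$ for the numbers of non‑split and split captured cheap clusters, counting centres of $\local$ gives $a+2b\le|\local|\le k$; moreover $a+b$ equals $k$ minus the number of expensive clusters minus the number of uncaptured cheap clusters, hence $a+b\ge k-\eps^{-3}\beta^{-1}-O(\eps^{-2}\beta^{-1})$, and subtracting yields $b\le\eps^{-3}\beta^{-1}+O(\eps^{-2}\beta^{-1})$. Put the split clusters into $Z^{*}$; every remaining cheap cluster then satisfies the second property via $\local(i)$. Finally, for such a cluster, any client $p\in\Delta_{j}$ with $j\ne i$ served by $\local(i)$ pays $l_{p}=\dist(p,\local(i))\ge\dist(p,c^{*}_{i})-\dist(c^{*}_{i},\local(i))\ge\beta\opt/|C^{*}_{i}|-2\eps^{2}\beta\opt/|C^{*}_{i}|\ge\frac{1}{2}\beta\opt/|C^{*}_{i}|$ by stability, so more than $\eps|B_{i}|\ge(1-\eps)\eps|C^{*}_{i}|$ such clients would cost $\Omega(\eps\beta\opt)$ in $\local$; by injectivity of $i\mapsto\local(i)$ each client is charged to at most one cluster, so only $O(\eps^{-1}\beta^{-1})$ clusters violate the third property, and I add them to $Z^{*}$. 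Summing the four contributions gives $|Z^{*}|\le 2\eps^{-3}\beta^{-1}+O(\eps^{-2}\beta^{-1})$, and by construction every $C^{*}_{i}\notin Z^{*}$ is cheap and satisfies the second and third properties. (For the $k$-clustering problem with exponent $p$ one replaces each use of the triangle inequality by Lemma~\ref{lem:TI}; only the constants change.)

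\emph{Main obstacle.} The delicate point is bounding the split clusters, because their number is \emph{not} controlled by the objective value: the inner ring of a cluster can be divided between two nearly coincident centres at essentially zero extra cost. The way around this is Step~2 together with the observation that a split cluster wastes a second centre inside its own disjoint ball, so split clusters are paid for out of the centre slack $k-a-b$ left after deleting the expensive and uncaptured clusters — which is exactly why the leading term of the bound is $2\eps^{-3}\beta^{-1}$ rather than $\eps^{-3}\beta^{-1}$. A secondary technicality is calibrating the constant in the capture radius $\Theta(\eps^{2}\beta\opt/|C^{*}_{i}|)$ so that all the triangle-inequality estimates survive over the full range $\eps<\min(1-\delta,1/3)$; the choice $2\eps^{2}\beta\opt/|C^{*}_{i}|$ above does this.
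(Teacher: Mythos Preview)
Your proof is correct and follows essentially the same high-level skeleton as the paper (classify clusters into expensive, ``no nearby local center'', ``too many nearby local centers / not enough served'', and ``serves too many outsiders''; bound each family and take their union). The one genuine difference is how you handle the cheap captured clusters whose designated center $\local(i)$ fails to serve a $(1-\eps)$ fraction of the inner ring.

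The paper works at \emph{two} scales: it defines $\local(i)$ as the unique center of $\local$ in $\IR^{\eps}_i$ (radius $\eps\beta\opt/|C^*_i|$), and then observes that any point of $\IR^{\eps^2}_i$ not served by $\local(i)$ must be served by a center outside $\IR^{\eps}_i$, hence pays at least $(\eps-\eps^2)\beta\opt/|C^*_i|$; this yields a \emph{cost} bound on $k_4$. Your argument stays at the single scale $\Theta(\eps^2)$: you let $\local(i)$ be the nearest local center and note that if it does not serve some $x\in B_i$, then the center that does serve $x$ is forced (via $\dist(x,\ell')\le\dist(x,\local(i))$ and the triangle inequality) into the same radius-$4\eps^2\beta\opt/|C^*_i|$ ball, so ``split'' clusters automatically hold two local centers and are absorbed into the center-counting inequality $a+2b\le k$. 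This is a neat simplification: it removes the need for the $\eps$-versus-$\eps^2$ gap and for a separate cost argument for $k_4$. The paper's two-scale formulation, on the other hand, matches its global definition of $\local(i)$ (which is reused in Lemmas~\ref{lem:costbad-delta}--\ref{lem:costgood-delta}); with your tighter capture radius those downstream estimates would of course still hold, but your choice of $\local(i)$ is formally different from the paper's.
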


See Fig~\ref{fig:local} for a typical cluster of $C^* \setminus Z^*$.
We start with the following lemma which generalizes Fact 4.1 in \cite{ABS10}.
\begin{lemma}\label{lem:sizeIR}
  Let $C^*_i$ be a cheap cluster. For any $\eps_0$,
we have $|\IR^{\eps_0}_i\cap C_i^*| > (1-{\eps^3}/{\eps_0}) |C^*_i|$.
\end{lemma}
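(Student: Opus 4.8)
The statement to prove is Lemma~\ref{lem:sizeIR}: for a cheap cluster $C_i^*$, we have $|\IR^{\eps_0}_i \cap C_i^*| > (1 - \eps^3/\eps_0)|C_i^*|$.

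\paragraph{Proof proposal.} The plan is a direct Markov-type counting argument on the distances $\dist(x, c_i^*)$ for $x \in C_i^*$. First, I would recall the definitions: cluster $i$ being \emph{cheap} means $\sum_{x \in C_i^*} g_x = \sum_{x \in C_i^*} \dist(x, c_i^*) \le \eps^3 \beta \opt$, and a point $x \in C_i^*$ fails to lie in $\IR^{\eps_0}_i$ precisely when $\dist(x, c_i^*) > \eps_0 \beta \opt / |C_i^*|$. So I would let $B = \{x \in C_i^* : \dist(x, c_i^*) > \eps_0 \beta \opt / |C_i^*|\}$ be the set of ``bad'' (far) points and bound $|B|$ from above.

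The key step is the following chain of inequalities. Each $x \in B$ contributes more than $\eps_0 \beta \opt / |C_i^*|$ to the sum $\sum_{x \in C_i^*} g_x$, and all contributions are nonnegative, so
\[
\eps^3 \beta \opt \;\ge\; \sum_{x \in C_i^*} g_x \;\ge\; \sum_{x \in B} g_x \;>\; |B| \cdot \frac{\eps_0 \beta \opt}{|C_i^*|}.
\]
Rearranging gives $|B| < \frac{\eps^3}{\eps_0} |C_i^*|$, and hence $|\IR^{\eps_0}_i \cap C_i^*| = |C_i^*| - |B| > (1 - \eps^3/\eps_0)|C_i^*|$, which is exactly the claim. (Strictly one should note $\opt > 0$ and $\beta > 0$ so the division is legitimate; if $\opt = 0$ the instance is degenerate and every point coincides with its center, making the statement trivially true, or one simply assumes $\opt>0$ as is standard.)

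There is essentially no obstacle here — the lemma is a one-line Markov inequality dressed up in the paper's notation, and the only thing to be careful about is tracking that the inequality is strict (which it is, since every bad point strictly exceeds the threshold, and we only need the bound even if $B$ is empty, in which case the conclusion holds trivially). The mild subtlety worth a sentence is that $\IR^{\eps_0}_i$ is defined over $A \cup F$ but here we intersect with $C_i^* \subseteq A$, so only client points are counted and the bound on $\sum_{x \in C_i^*} g_x$ applies directly. I would also remark, to mirror the cited Fact~4.1 of~\cite{ABS10}, that this is the generalization obtained by replacing their fixed ring radius with the parameter $\eps_0$; the proof is identical in structure.
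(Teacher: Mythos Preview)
Your proposal is correct and follows essentially the same Markov-type averaging argument as the paper: bound the number of clients outside the inner ring by dividing the total cluster cost (at most $\eps^3\beta\opt$ by cheapness) by the per-client lower bound $\eps_0\beta\opt/|C_i^*|$. Your write-up is in fact slightly cleaner in consistently using $\opt$ (matching the definitions of \emph{cheap} and $\IR^{\eps_0}_i$) and in handling the strict inequality and the trivial $B=\emptyset$ case explicitly.
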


We then prove that the inner rings of cheap clusters are disjoint for $\delta + \frac{\eps^3}{\eps_0} < 1$ and $\eps_0<\frac{1}{3}$.
\begin{lemma}\label{lem:IRinter}
  Let $\delta + \frac{\eps^3}{\eps_0} < 1$ and $\eps_0<\frac{1}{3}$. If $C^*_i \neq C^*_j$ are cheap clusters, then $\IR^{\eps_0}_i \cap \IR^{\eps_0}_j = \emptyset$.
\end{lemma}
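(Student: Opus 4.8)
The plan is to argue by contradiction: suppose $C^*_i \neq C^*_j$ are both cheap and there exists a point $x \in \IR^{\eps_0}_i \cap \IR^{\eps_0}_j$. By definition of the inner ring, this means $\dist(x,c^*_i) \le \eps_0 \beta \opt / |C^*_i|$ and $\dist(x,c^*_j) \le \eps_0 \beta \opt / |C^*_j|$, so by the triangle inequality
\[
\dist(c^*_i, c^*_j) \le \eps_0 \beta \opt \left( \frac{1}{|C^*_i|} + \frac{1}{|C^*_j|} \right).
\]
Without loss of generality assume $|C^*_i| \le |C^*_j|$, so the right-hand side is at most $2\eps_0 \beta \opt / |C^*_i|$. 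The idea is now to exploit Lemma~\ref{lem:sizeIR}: since both clusters are cheap, a large fraction — more than $(1 - \eps^3/\eps_0)|C^*_i|$ — of the points of $C^*_i$ lie in $\IR^{\eps_0}_i$, i.e.\ are within distance $\eps_0\beta\opt/|C^*_i|$ of $c^*_i$, hence (again by triangle inequality, chaining through $x$ or through $c^*_i$ directly) within distance roughly $3\eps_0 \beta \opt/|C^*_i|$ of $c^*_j$.

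Next I would invoke the $(\beta,\delta)$-distribution-stability condition to derive the contradiction. The set $\Delta_i \subseteq C^*_i$ has size $\ge (1-\delta)|C^*_i|$, and every $y \in \Delta_i$ satisfies $\cost(y, c^*_j) = \dist(y, c^*_j) \ge \beta \opt / |C^*_j| \ge \beta\opt/|C^*_j|$. On the other hand, the points of $C^*_i \cap \IR^{\eps_0}_i$ are all within distance about $3\eps_0 \beta \opt / |C^*_i| \ge 3\eps_0\beta\opt/|C^*_j|$ of $c^*_j$ (using $|C^*_i| \le |C^*_j|$ in the direction that makes the bound weaker — I need to be careful which inequality I actually need here, since $1/|C^*_i| \ge 1/|C^*_j|$). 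With $\eps_0 < 1/3$ we get $3\eps_0 < 1$, so a point that is both in $\Delta_i$ and in $\IR^{\eps_0}_i$ would satisfy $\beta\opt/|C^*_j| \le \dist(y,c^*_j) < \beta\opt/|C^*_j|$, a contradiction. So $\Delta_i$ and $C^*_i \cap \IR^{\eps_0}_i$ must be disjoint. But $|\Delta_i| \ge (1-\delta)|C^*_i|$ and $|C^*_i \cap \IR^{\eps_0}_i| > (1 - \eps^3/\eps_0)|C^*_i|$, and these two subsets of $C^*_i$ have total size $> (1-\delta) + (1 - \eps^3/\eps_0) = 2 - \delta - \eps^3/\eps_0$ times $|C^*_i|$. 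Since $\delta + \eps^3/\eps_0 < 1$ by hypothesis, this total exceeds $|C^*_i|$, so they cannot be disjoint — contradiction, completing the proof.

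The part requiring the most care is tracking the constants in the chain of triangle inequalities and making sure the factor that multiplies $\eps_0$ is genuinely below $1$ under the hypothesis $\eps_0 < 1/3$. In particular I want the distance from a generic point $y \in C^*_i \cap \IR^{\eps_0}_i$ to $c^*_j$: going $y \to c^*_i \to x \to c^*_j$ gives $\dist(y,c^*_j) \le \dist(y,c^*_i) + \dist(c^*_i,x) + \dist(x,c^*_j) \le \eps_0\beta\opt/|C^*_i| + \eps_0\beta\opt/|C^*_i| + \eps_0\beta\opt/|C^*_j| \le 3\eps_0\beta\opt/|C^*_i|$, using $|C^*_i| \le |C^*_j|$. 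Comparing with the stability lower bound $\beta\opt/|C^*_j|$ on points of $\Delta_i$: since $|C^*_i| \le |C^*_j|$ we have $\beta\opt/|C^*_j| \le \beta\opt/|C^*_i|$, so I actually need $3\eps_0\beta\opt/|C^*_i|$ to be smaller than $\beta\opt/|C^*_j|$, which does \emph{not} follow from $|C^*_i|\le|C^*_j|$ alone — I should instead bound $\dist(c^*_i,c^*_j)$ and then use that $y\to c^*_j$ can be routed $y \to c^*_i \to c^*_j$, giving $\dist(y,c^*_j) \le \eps_0\beta\opt/|C^*_i| + 2\eps_0\beta\opt/|C^*_i| = 3\eps_0\beta\opt/|C^*_i|$ versus the lower bound $\beta\opt/|C^*_j| \ge \beta\opt/(|C^*_i|/\eps_0 \cdot \text{?})$; this is the point where I must be precise. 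The clean fix is to note that the stability bound for $y\in\Delta_j$ (rather than $\Delta_i$) against $c^*_i$ gives $\dist(y, c^*_i) \ge \beta\opt/|C^*_i|$, so I should run the disjointness argument symmetrically — comparing $\Delta_j$ against $\IR^{\eps_0}_j \cap C^*_j$ or intersecting the right pair — choosing whichever of $i,j$ has the smaller cluster so that all the $1/|C^*|$ factors line up. I expect this bookkeeping, not any conceptual difficulty, to be the main obstacle.
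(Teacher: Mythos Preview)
Your approach is essentially the paper's. The only issue is that you chose the WLOG direction backwards: the paper assumes $|C^*_i| \ge |C^*_j|$ (not $\le$), then takes $x' \in \Delta_i \cap \IR^{\eps_0}_i$ --- which exists by exactly the counting argument you give, since $|\Delta_i| + |\IR^{\eps_0}_i \cap C^*_i| > (2-\delta-\eps^3/\eps_0)|C^*_i| > |C^*_i|$ --- and bounds
\[
\dist(x',c^*_j) \le \dist(x',c^*_i)+\dist(c^*_i,c^*_j) \le \eps_0\beta\,\opt/|C^*_i| + 2\eps_0\beta\,\opt/|C^*_j| \le 3\eps_0\beta\,\opt/|C^*_j|,
\]
which contradicts the stability lower bound $\dist(x',c^*_j)\ge \beta\,\opt/|C^*_j|$ once $\eps_0<1/3$. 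Your eventual diagnosis --- work inside the larger cluster and aim at the center of the smaller one so that every $1/|C^*|$ factor is dominated by $1/|C^*_j|$ --- is exactly this; with that single correction your proof is complete and identical to the paper's.
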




For each cheap cluster $C^*_i$, let  $\local(i)$ denote a center of $\local$ that
belongs to $\IR^{\eps}_i$ if there exists exactly such center and remain undefined otherwise. 
By Lemma~\ref{lem:IRinter}, $\local(i) \neq \local(j)$ for $i \neq j$.

\begin{lemma}\label{lem:nbin-delta}
 Let $\eps<\frac{1}{3}$. Let $C^* \setminus Z_1$ denote the set of clusters  $C^*_i $ that are cheap, such that $\local(i)$ is defined and such that at least 
$(1-\eps) |\IR^{\eps^2}_i\cap C_i^*|$ clients of $\IR^{\eps^2}_i\cap C_i^*$ are
served in $\local$ by $\local(i)$. Then $|Z_1|\leq (2\eps^{-3} + 11.25\cdot\eps^{-2}+22.5\cdot\eps^{-1}) \beta^{-1}$.
  %
\end{lemma}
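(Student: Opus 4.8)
The plan is to bound $|Z_1|$ by a charging argument, where the budget is the total optimal cost $\opt$ (more precisely $\eps^3 \beta \opt$ units, since each cheap cluster costs at most $\eps^3 \beta \opt$), and we show that each cluster $C^*_i \in Z_1$ forces a payment of some fixed fraction of $\opt$. A cluster lands in $Z_1$ for one of two reasons: either (a) $\local(i)$ is undefined, i.e.\ $\local$ has no center inside $\IR^\eps_i$, or (b) $\local(i)$ is defined but fewer than $(1-\eps)|\IR^{\eps^2}_i \cap C^*_i|$ of the points of $\IR^{\eps^2}_i \cap C^*_i$ are served by $\local(i)$ in $\local$. I would handle these two cases separately, getting a bound of roughly $2\eps^{-3}\beta^{-1}$ clusters from case (a) and lower-order terms $O(\eps^{-2}\beta^{-1})$ from case (b).

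For case (a): if no center of $\local$ lies in $\IR^\eps_i$, then every point $x \in \IR^{\eps^2}_i \cap C^*_i$ is served in $\local$ by a center at distance $> \eps\beta\opt/|C^*_i|$ from $c^*_i$, hence (triangle inequality, using $x \in \IR^{\eps^2}_i$ so $\dist(x,c^*_i)\le \eps^2\beta\opt/|C^*_i|$) at distance at least $(\eps-\eps^2)\beta\opt/|C^*_i|$ from $x$. So $l_x - g_x \ge (\eps - 2\eps^2)\beta\opt/|C^*_i|$ for each such $x$. By Lemma~\ref{lem:sizeIR} with $\eps_0 = \eps$, $|\IR^{\eps^2}_i \cap C^*_i| > (1-\eps^2)|C^*_i|$, so summing over these points gives $\sum_{x\in \IR^{\eps^2}_i\cap C^*_i}(l_x-g_x) \ge (1-\eps^2)(\eps-2\eps^2)\beta\opt \ge \tfrac{1}{2}\eps\beta\opt$ for $\eps$ small. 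Since Lemma~\ref{lem:IRinter} guarantees the inner rings $\IR^{\eps}_i$ of distinct cheap clusters are disjoint (note $\eps^2 < \eps$, so the $\eps^2$-rings, being subsets, are disjoint too, but more to the point the relevant point sets $\IR^{\eps^2}_i\cap C^*_i \subseteq C^*_i$ are trivially disjoint), these per-cluster costs of $\tfrac12\eps\beta\opt$ are incurred on disjoint point sets and add up. As $\sum_x l_x \le (1+\eta\eps)\opt$ — or more crudely, as $\sum_x(l_x - g_x)$ over all points is at most the cost gap, which local optimality keeps $O(\opt)$ — we get at most $O(\eps^{-1}\beta^{-1})$ clusters this way; getting down to $2\eps^{-3}\beta^{-1}$ requires being more careful and charging against the \emph{per-cluster} budget $\eps^3\beta\opt$: each bad cluster in case (a) has $\sum_{x\in C^*_i}g_x \le \eps^3\beta\opt$ but forces $\ge \tfrac12\eps\beta\opt$ of reassignment cost somewhere, and a counting over which cluster absorbs this cost yields the $\eps^{-3}$ dependence.

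For case (b): here $\local(i)$ exists in $\IR^\eps_i$ but at least $\eps|\IR^{\eps^2}_i\cap C^*_i|$ points of $\IR^{\eps^2}_i\cap C^*_i$ are served by a \emph{different} center. Each such misserved point $x$ has $g_x \le \eps^2\beta\opt/|C^*_i|$ but, since it is not served by $\local(i)$ and $\local(i)\in\IR^\eps_i$, one shows $l_x$ is comparable to $\dist(x,c^*_i)$ up to the ring radius — actually the cleaner route is: $x$ is served by some $\ell \ne \local(i)$; combined with $\local(i)$ being available at distance $\le (\eps+\eps^2)\beta\opt/|C^*_i|$ from $x$, local optimality of $\local$ (swapping or the structure of the neighborhood) forces the reassignment cost to be bounded below, contributing $\Omega(\eps^2\beta\opt)$ per such cluster on disjoint point sets. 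This gives the $O(\eps^{-2}\beta^{-1})$ term, with the explicit constants $11.25$ and $22.5\eps$ coming from tracking the $(1+\eps)$, $(1-\eps)$, and $\eps_0=\eps,\eps^2$ factors precisely.

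The main obstacle I expect is case (b) and extracting the exact constants: one has to argue that a cluster with a correctly-placed center $\local(i)$ but an $\eps$-fraction of its inner-ring points stolen cannot be too numerous, and the only leverage is local optimality combined with the disjointness of inner rings — so the argument must show that "stealing" inner-ring points is globally expensive. Quantifying this requires carefully combining Lemmas~\ref{lem:sizeIR} and~\ref{lem:IRinter} with the triangle inequality at the right radii ($\eps$ vs.\ $\eps^2$) and summing a geometric-series-like bound, which is where the $11.25\eps^{-2}+22.5\eps^{-1}$ terms are born; the $\eps^{-3}$ main term, by contrast, is a relatively clean consequence of case (a) and the cheapness budget.
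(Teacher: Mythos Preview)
There are three genuine gaps.

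First, you have omitted expensive clusters. By definition $Z_1$ contains every cluster failing at least one of the three conditions (cheap, $\local(i)$ defined, $(1-\eps)$-fraction served), so in particular every expensive cluster lies in $Z_1$. There can be up to $\eps^{-3}\beta^{-1}$ of these simply because each costs more than $\eps^3\beta\opt$; this, not your case~(a), is the source of the leading $\eps^{-3}$ term. Your attempt to extract an $\eps^{-3}$ factor from case~(a) by ``charging against the per-cluster budget $\eps^3\beta\opt$'' does not work: the argument you sketch shows each such cluster contributes $\Omega(\eps\beta\opt)$ to $\cost(\local)$, and since $\cost(\local)\le 5\,\opt$ (the known worst-case bound for Local Search --- not the $(1+\eta\eps)\opt$ you are ultimately trying to prove), this yields $O(\eps^{-1}\beta^{-1})$ such clusters. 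That is correct and is precisely where the $22.5\,\eps^{-1}\beta^{-1}$ term comes from, but it will never produce $\eps^{-3}$.

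Second, you have misidentified when $\local(i)$ is undefined. It is undefined not only when $\local$ has no center in $\IR^{\eps}_i$, but also when it has \emph{two or more}. In the latter situation your cost-lower-bound argument collapses: points of $\IR^{\eps^2}_i$ may be served cheaply by one of those centers. The paper handles this by a counting argument: since the rings $\IR^{\eps}_i$ of cheap clusters are disjoint (Lemma~\ref{lem:IRinter}) and $|\local|=k$, every cheap ring containing $\ge 2$ local centers must be balanced by a ring (or an expensive cluster) receiving none; hence the number $k_3$ of such clusters is at most $k_1+k_2$. This pigeonhole step is what doubles the expensive-cluster contribution to $2\eps^{-3}\beta^{-1}$ and the zero-center contribution to $22.5\,\eps^{-1}\beta^{-1}$.

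Third, your case~(b) is on the right track but overcomplicated: no swap or local-optimality argument is needed. If $\local(i)$ is defined, it is the \emph{unique} center of $\local$ inside $\IR^{\eps}_i$, so any point of $\IR^{\eps^2}_i\cap C^*_i$ not served by $\local(i)$ is served by a center outside $\IR^{\eps}_i$ and hence pays at least $(\eps-\eps^2)\beta\opt/|C^*_i|$ --- exactly the same lower bound as in the zero-center case, just on an $\eps$-fraction of the ring instead of all of it. Summing and comparing to $\cost(\local)\le 5\,\opt$ gives the $11.25\,\eps^{-2}\beta^{-1}$ term directly.
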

\begin{proof}
There are five different types of clusters in $C^*$:
\begin{enumerate}
\item $k_1$ expensive clusters
\item $k_2$ cheap clusters with no center of $\local$ belonging to $\IR^{\eps}_i $
\item $k_3$ cheap clusters with at least two centers of $\local$ belonging to $\IR^{\eps}_i $
\item $k_4$ cheap clusters with $\local(i)$ being defined and less than $(1-\eps) |\IR^{\eps^2}_i\cap C_i^*|$ clients of $\IR^{\eps^2}_i\cap C_i^*$ are served in $\local$ by $\local(i)$  
\item $k_5$ cheap clusters with $\local(i)$ being defined and at least $(1-\eps) |\IR^{\eps^2}_i\cap C_i^*|$ clients of $\IR^{\eps^2}_i\cap C_i^*$ are served in $\local$ by $\local(i)$
\end{enumerate}
The definition of cheap clusters immediately yields $k_1\le \eps^{-3} \beta^{-1}$.

 Since $\local$ and $C^*$ both have $k$ clusters and the inner rings of cheap clusters are disjoint (Lemma~\ref{lem:IRinter}), we have $c_1 k_1+ c_3 k_3 + k_4+k_5= k_1+k_2+k_3+k_4+k_5=|Z_1|+k_5=k$ with $c_1\ge 0$ and $c_3\ge 2$ resulting in $k_3\leq (c_3-1)k_3=(1-c_1)k_1+k_2\le k_1+k_2$. 

Before bounding $k_2$ and $k_4$, we discuss the impact of a cheap cluster $C^*_i$ with at least a $p$ fraction of the clients of $\IR^{\eps^2}_i\cap C_i^*$ being served in $\local$ by some centers that are not in $\IR^{\eps}_i$.
  By the triangular inequality, the cost for any client $x$ of this $p$ fraction is 
  at least $(\eps-\eps^2) \beta \cost(C^*)/|C^*_i|$.
  Then the total cost of all clients of this $p$ fraction in $\local$ is at least $p|\IR^{\eps^2}_i\cap C_i^*| (1-\varepsilon)\eps \beta \cost(C^*) /|C^*_i|$.
  By Lemma \ref{lem:sizeIR}, substituting $|\IR^{\eps^2}_i\cap C^*_i|$ yields for this total cost
  $$
  p|\IR^{\eps^2}_i\cap C_i^*|(1-\varepsilon) \eps \beta \frac{\cost(C^*)}{|C^*_i|} \ge p 
  (1-\eps)^2 |C^*_i| \eps \beta \frac{\cost(C^*)}{|C^*_i|} = p(1-\eps)^2 \eps \beta \cost(C^*).
  $$
To determine $k_2$, we must use $p=1$ while we have $p>\eps$ for $k_4$. Therefore, the total costs of all clients of the $k_2$ and the $k_4$ clusters in $\local$ are at least $k_2(1-\eps)^2 \eps \beta \cost(C^*)$ and $k_4(1-\eps)^2 \eps^2 \beta \cost(C^*)$, respectively. 

  Now, since $\cost(\local)\leq 5\opt\leq 5\cost(C^*)$, we have $ (k_2+k_4\eps)\eps\beta\le 5/(1-\eps)^2\le 45/4$. 
  
Therefore, we have $|Z_1|=k_1+k_2+k_3+k_4\le 2k_1+2k_2+k_4\le (2\eps^{-3} + 11.25\cdot\eps^{-2}+22.5\cdot\eps^{-1})\beta^{-1}$.
\end{proof}


We continue with the following lemma, whose proof relies on similar 
arguments.
\begin{lemma}\label{lem:nbout-delta}
  There exists a set $Z_2 \subseteq C^* \setminus Z_1$ of size at most 
  $11.25 \eps^{-1}\beta^{-1}$ such that for any cluster
  $C^*_j \in C^* \setminus Z_2$, the total number of clients $x \in 
  \bigcup_{i \neq j} \Delta_i$, that are served by $\local(j)$ in 
  $\local$ is at most $\eps  |\IR^{\eps^2}_i\cap C_i^*|$.
\end{lemma}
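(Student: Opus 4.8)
The plan is to mimic the cost-charging argument used for $k_2$ and $k_4$ in the proof of Lemma~\ref{lem:nbin-delta}, but now charging the cost of clients that are ``mis-served'' *into* a center $\local(j)$ rather than *out of* its inner ring. Fix a cheap cluster $C^*_j \in C^* - Z_1$ (so $\local(j)$ is well-defined and serves at least a $(1-\eps)$ fraction of $\IR^{\eps^2}_j \cap C^*_j$), and suppose that $\local(j)$ also serves more than $\eps|\IR^{\eps^2}_j \cap C^*_j|$ clients from $\bigcup_{i \neq j}\Delta_i$. For such a client $x \in \Delta_i$ with $i \neq j$, the $(\beta,\delta)$-distribution-stability hypothesis gives $\cost(x,c^*_j) \ge \beta\,\opt/|C^*_j|$; since $\local(j) \in \IR^{\eps}_j$, i.e. $\dist(\local(j),c^*_j) \le \eps\beta\,\opt/|C^*_j|$, the triangle inequality yields $l_x = \cost(x,\local(j)) \ge (1-\eps)\beta\,\opt/|C^*_j|$. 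Summing over the $> \eps|\IR^{\eps^2}_j\cap C^*_j|$ such clients and then substituting $|\IR^{\eps^2}_j \cap C^*_j| > (1-\eps)|C^*_j|$ via Lemma~\ref{lem:sizeIR}, the total cost incurred in $\local$ by these clients is at least
\[
\eps (1-\eps)|C^*_j| \cdot (1-\eps)\beta \frac{\opt}{|C^*_j|} = \eps(1-\eps)^2 \beta\,\opt .
\]

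Next I would argue these cost contributions are essentially disjoint across different ``bad'' clusters $C^*_j$. The point is that the centers $\local(j)$ are distinct for distinct $j$ (this was already established right before Lemma~\ref{lem:nbin-delta} using disjointness of inner rings, Lemma~\ref{lem:IRinter}), and each client is served by exactly one center in $\local$; hence a client counted for $C^*_j$ (because it is served by $\local(j)$) is not counted for any $C^*_{j'}$ with $j' \neq j$. Therefore, if there were $t$ such bad clusters, summing the bound above over all of them gives a total cost in $\local$ of at least $t\eps(1-\eps)^2\beta\,\opt$.

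Finally I would invoke the approximation guarantee $\cost(\local) \le 5\,\opt \le 5\,\cost(C^*)$ (the bicriteria/local-optimality bound already used in the previous lemma's proof), which forces $t\,\eps(1-\eps)^2\beta \le 5$, i.e. $t \le 5/(\eps(1-\eps)^2\beta) \le (45/4)\,\eps^{-1}\beta^{-1} = 11.25\,\eps^{-1}\beta^{-1}$, using $\eps < 1/3$ so that $(1-\eps)^2 \ge 4/9$. Taking $Z_2 := Z_1 \cup \{\text{bad clusters}\}$, or rather the set of bad clusters as a subset of $C^* - Z_1$, of size at most $11.25\,\eps^{-1}\beta^{-1}$, gives exactly the statement. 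I do not expect a real obstacle here: the only mild subtlety is making sure the triangle-inequality bound uses the correct inner-ring radius ($\eps$ rather than $\eps^2$) for $\local(j)$ and that the disjoint-charging step is stated cleanly; both are direct analogues of steps already carried out in the proof of Lemma~\ref{lem:nbin-delta}. (For the $k$-clustering variant with exponent $p$, one would replace the triangle inequality by Lemma~\ref{lem:TI}, at the cost of absorbing a $(1\pm\eps)^{O(p)}$ factor into the constant, which is why $\eta$ becomes a function of $p$.)
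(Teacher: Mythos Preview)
Your proposal is correct and follows essentially the same argument as the paper: lower-bound the local cost of each mis-served client $x\in\Delta_i$ by $(1-\eps)\beta\,\opt/|C^*_j|$ via stability plus the triangle inequality with $\local(j)\in\IR^{\eps}_j$, sum over the $>\eps|\IR^{\eps^2}_j\cap C^*_j|$ such clients, apply Lemma~\ref{lem:sizeIR}, and finish with the $5$-approximation bound and $(1-\eps)^2\ge 4/9$. Your explicit disjoint-charging step (distinct $\local(j)$'s, each client served by a unique center) is a detail the paper leaves implicit, but otherwise the two proofs coincide.
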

Therefore, the proof of Lemma \ref{lem:structbeta} follows from
combining Lemmas \ref{lem:nbin-delta} and 
\ref{lem:nbout-delta}.

We now turn to the analysis of the cost of $\local$.
Let $C(Z^*)  = \bigcup_{C^*_i \in Z^*} C^*_i$.
For any cluster $C^*_i \in C^* \setminus Z^*$, 
let $\local(i)$ be the unique center of $\local$ that serves at least $(1-\varepsilon)| \IR^{\eps^2}_i\cap C_i^*|>(1-\varepsilon)^2|C_i|$ clients of $\IR^{\eps^2}_i\cap C^*_i$, see Lemmas~\ref{lem:structbeta} and~\ref{lem:sizeIR}.
Let $\widehat \local =  \bigcup_{C^*_i \in C^* \setminus Z^*} \local(i)$ and 
define 
$\widehat A$ to be the set of clients that are 
served in solution $\local$ by centers of 
$\widehat \local$.
Finally, let $A(\local(i))$ be the set of clients that are served 
by $\local(i)$ in solution $\local$. Observe
that the $A(\local(i))$ partition $\widehat A$.

\begin{lemma}\label{lem:costbad-delta}
  We have
  $$- {\eps}\cdot \cost(\local)/n + \sum\limits_{x \in \widehat{A}\setminus C(Z^*)} l_x  
  \le \sum\limits_{x \in \widehat{A} \setminus C(Z^*)} g_x
  + \frac{2\varepsilon }{(1-\varepsilon)^2} \cdot (\cost(C^*)+\cost(\local) ).$$
\end{lemma}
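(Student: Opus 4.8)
The plan is to bound, cluster by cluster, the cost $\sum_{x \in A(\local(i))} l_x$ paid by the local solution on the clients captured by the good center $\local(i)$, and compare it to $\sum_{x \in \widehat A \setminus C(Z^*)} g_x$. Since the $A(\local(i))$ for $C^*_i \in C^*\setminus Z^*$ partition $\widehat A$, it suffices to work with one good cluster at a time and sum. For a fixed good cluster $C^*_i \in C^*\setminus Z^*$, I would split $A(\local(i))$ into (a) the clients of $C^*_i$ that are served by $\local(i)$, and (b) the clients of $\bigcup_{j\ne i} C^*_j$ that are served by $\local(i)$. For clients in (a), I use $l_x \le \cost(x,\local(i)) \le g_x + \dist(c^*_i, \local(i))$ by the triangle inequality; since $\local(i) \in \IR^{\eps}_i$, we have $\dist(c^*_i,\local(i)) \le \eps\beta\opt/|C^*_i|$, which is a small multiple of the "typical" cost in the inner ring. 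For clients in (b), the key structural fact is the third item of Lemma~\ref{lem:structbeta}: only at most $\eps|\IR^{\eps^2}_i\cap C^*_i|$ such clients come from $\bigcup_{j\ne i}\Delta_j$, and these are the only ones we need to control directly — the remaining clients of (b) lie in $\bigcup_{j}(C^*_j\setminus\Delta_j)$, but there can be many of those, so they must be handled globally rather than per-cluster.

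Concretely, here is the accounting I expect to carry out. For the "bad inside" clients of type (a) of a good cluster, the total additive error is at most $|A(\local(i))\cap C^*_i|\cdot \eps\beta\opt/|C^*_i| \le \eps\beta\opt$, and summing this over the $\le k$ good clusters is too lossy, so instead I charge it against $\cost(C^*)$ using that $C^*_i$ is cheap (item 1 of Lemma~\ref{lem:structbeta}) and that $|\IR^{\eps^2}_i\cap C^*_i| \ge (1-\eps^2/\eps^2)|C^*_i|$... — more carefully, I use Lemma~\ref{lem:sizeIR} with $\eps_0 = \eps$ to get $|\IR^{\eps}_i\cap C^*_i| > (1-\eps^2)|C^*_i|$ so that a $(1-O(\eps))$ fraction of $C^*_i$ already pays $\Theta(\eps\beta\opt/|C^*_i|)$ individually in $\local$, letting the displacement term $\dist(c^*_i,\local(i))$ be absorbed into a $\frac{\eps}{(1-\eps)^2}(\cost(C^*)+\cost(\local))$ term. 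For the type-(b) clients in $\bigcup_j\Delta_j$, there are at most $\eps|\IR^{\eps^2}_i\cap C^*_i|$ of them and each pays at least $\beta\opt/|C^*_j| \ge$ (its own cost $g_x$) plus, by $\delta$-stability applied to $\Delta_j$, a cost bounded below, while their $l_x$ cost when served by $\local(i)$ can be bounded via triangle inequality and reassigning each such $x$ back to its optimal center $c^*_j$; the total is again a $\frac{\eps}{(1-\eps)^2}$-fraction of $\cost(C^*)+\cost(\local)$ by the same counting-versus-inner-ring-size argument used in Lemma~\ref{lem:nbin-delta}. The $-\eps\cost(\local)/n$ slack term on the left is there to absorb a single application of local optimality later and plays no role here beyond being carried along; I would simply keep it on the left throughout.

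The main obstacle I anticipate is the type-(b) clients that do \emph{not} lie in any $\Delta_j$ — i.e., the $\delta$-fraction of non-stable points in each optimal cluster. These are not controlled by Lemma~\ref{lem:structbeta}, and there is no per-cluster bound on how many of them a given $\local(i)$ can capture. The resolution is that the inequality we are proving only sums $g_x$ over $x\in\widehat A\setminus C(Z^*)$ on the right, and a non-$\Delta_j$ client $x$ served by $\local(i)$ satisfies $x\in \widehat A$, so its $g_x$ is genuinely available on the right-hand side; thus for these clients I only need $l_x \le$ (something comparable to $g_x$ plus a small global error), which again follows by triangle inequality through $c^*_i$ and the inner-ring displacement bound, \emph{provided} $x$ is close to $c^*_i$. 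If $x$ is far from $c^*_i$ then $l_x \le \cost(x,\local(i))$ may be large, but then $g_x = \dist(x,c^*_j)$ for its own center — and here I would use that $\local(i)$ serves $x$ in $\local$, so $l_x \le \cost(x, c')$ for any other center $c'\in\local$ too; in particular $l_x$ is at most what $x$ would pay to the good center $\local(j)$ of its own cluster (if $C^*_j\notin Z^*$) or we fold $x$ into the $C(Z^*)$ bookkeeping (if $C^*_j\in Z^*$, but then $x\in C(Z^*)$ and is excluded from the sum). Tightening this dichotomy and verifying that every leftover error term is of the form $\frac{\eps}{(1-\eps)^2}(\cost(C^*)+\cost(\local))$ or $-\eps\cost(\local)/n$ is the delicate part; it mirrors closely the cheap/expensive and inner-ring counting already done in Lemmas~\ref{lem:nbin-delta} and~\ref{lem:nbout-delta}, so I would reuse those counting bounds verbatim wherever possible.
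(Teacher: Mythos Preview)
Your approach has a genuine gap: you try to prove the inequality by bounding each $l_x$ pointwise via the triangle inequality $l_x \le g_x + \dist(c^*_{j}, \local(j))$ (where $x\in C^*_j$), without invoking local optimality. This cannot produce the required $O(\eps)$ error. Using the inner-ring bound $\dist(c^*_j,\local(j)) \le \eps\beta\,\opt/|C^*_j|$ and summing over $x\in C^*_j$ gives an error of $\eps\beta\,\opt$ per good cluster, hence $k\eps\beta\,\opt$ in total, and nothing in the hypotheses forces $k\beta=O(1)$. If instead you use the averaging bound $\dist(c^*_j,\local(j)) \le \frac{1}{|\IR^{\eps^2}_j\cap C^*_j\cap A(\local(j))|}\sum_{x'}(g_{x'}+l_{x'})$ and multiply by $|C^*_j\cap\widehat A|$, the ratio is $\Theta(1)$, not $\Theta(\eps)$, so the total error is $\frac{1}{(1-\eps)^2}(\cost(C^*)+\cost(\local))$ --- again off by a factor of $\eps$. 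Your proposed rescue (``a $(1-O(\eps))$ fraction of $C^*_i$ already pays $\Theta(\eps\beta\,\opt/|C^*_i|)$ individually in $\local$'') is a \emph{lower} bound on $l_x$ and cannot help upper-bound $\sum l_x$. Your remark that the $-\eps\cost(\local)/n$ slack ``plays no role here'' is exactly backwards: that term is the residue of local optimality and is the entire content of the lemma.

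The paper's argument is a single swap. It forms the mixed solution $\calM = \widehat\local \cup \{c^*_i : C^*_i\in Z^*\}$, which differs from $\local$ in at most $2|Z^*|$ centers; the neighborhood size was chosen precisely so that this is a legal swap, giving $(1-\eps/n)\cost(\local)\le\cost(\calM)$. One then upper-bounds $\cost(\calM)$ by routing clients in $\widehat A$ to their $\local$-center (cost $l_x$), clients in $C(Z^*)$ to their $C^*$-center (cost $g_x$), and the remaining clients of $A\setminus(\widehat A\cup C(Z^*))$ to $\local(i)$ through $c^*_i$; only this last group incurs the displacement term, and there the ratio $\frac{|C^*_i\cap(A\setminus\widehat A)|}{|\IR^{\eps^2}_i\cap C^*_i\cap A(\local(i))|}\le \frac{2\eps}{(1-\eps)^2}$ is small by Lemma~\ref{lem:structbeta}, yielding the $\frac{2\eps}{(1-\eps)^2}$ error. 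After subtracting $\cost(\local)$ from both sides the $l_x$ for $x\in\widehat A\setminus C(Z^*)$ cancel, so the resulting inequality in the paper's proof is actually over the complementary set $(A\setminus\widehat A)\cup C(Z^*)$; the index set printed in the lemma statement is a typo (compare the final displayed line of the proof and note that Lemma~\ref{lem:costgood-delta} handles $\widehat A\setminus C(Z^*)$). The idea you are missing is this one large swap to $\calM$.
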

\begin{proof}
  Consider the following mixed solution $\calM = \widehat \local \cup \{c_i^*~|~C_i^*\in Z^*\}$.
  We start by bounding the cost of $\calM$. 
  For any client $x \in \widehat{A}$, 
  the center that serves it in $\local$ belongs to $\calM$. Thus
  its cost in $\calM$ is at most $l_x$.
  Now, for any client $x\in C(Z^*)$, the center that serves it in $C^*$ is in $\calM$, so its cost in $\calM$ is at most $g_x$.

  Finally, we evaluate the cost of the clients in $A \setminus (\widehat{A} \cup C(Z^*))$. 
  Consider such a client $x$ and let $C^*_i$ be the cluster it belongs to in solution $C^*$.
 Since $C^*_i \in C^*\setminus Z^*$,  
  $\local(i)$ is defined and  we have 
  $\local(i) \in \widehat \local \subseteq \calM$.
  Hence, the cost of $x$ in $\calM$ is at most $\cost(x,\local(i))$. Observe that by the triangular inequality,
$\cost(x,\local(i)) \le \cost(x,c^*_i) + \cost(c^*_i,\local(i)) = g_{x} + \cost(c^*_i,\local(i))$.


Now consider a client $x' \in \IR_i^{\varepsilon^2}\cap C^*_i\cap A(\local(i))$. By the triangular inequality, we have
   $\cost(c^*_i,\local(i)) \le \cost(c^*_i,x') + \cost(x',\local(i)) = g_{x'} + l_{x'}$. 
  Hence,
  $$ 
  \cost(c^*_i,\local(i)) \le \frac{1}{| \IR_i^{\varepsilon^2}\cap C^*_i\cap A(\local(i)) |} \sum\limits_{x' \in \IR_i^{\varepsilon^2}\cap C^*_i\cap A(\local(i))} (g_{x'} + l_{x'}).
  $$
  It follows that assigning the clients of $C^*_i \cap (A \setminus \widehat{A})$ to $\local(i)$ induces
  a cost of at most 
  $$
  \sum\limits_{x \in C^*_i \cap (A\setminus \widehat{A})} g_x + \frac{|C^*_i \cap (A\setminus\widehat{A})|}{| \IR_i^{\varepsilon^2}\cap C^*_i\cap A(\local(i))|} 
  \sum\limits_{x' \in  \IR_i^{\varepsilon^2}\cap C^*_i\cap A(\local(i))} (g_{x'} + l_{x'}).
  $$

Due to Lemma~\ref{lem:structbeta}, we have $|\IR_i^{\varepsilon^2}\cap C^*_i\cap A(\local(i))|\geq (1-\varepsilon)\cdot |\IR_i^{\varepsilon^2}\cap C^*_i|$ and $|(\IR_i^{\varepsilon^2}\cap C_i^*)\cap (A\setminus\widehat{A})| \leq \varepsilon\cdot |\IR_i^{\varepsilon^2}\cap C_i^*|$.
Further, $|(C_i^*\setminus \IR_i^{\varepsilon^2})\cap (A\setminus\widehat{A})| \leq |(C_i^*\setminus \IR_i^{\varepsilon^2})| = |C^*_i| - |\IR_i^{\varepsilon^2}\cap C^*_i|$.
Combining these three bounds, we have 
\begin{eqnarray}
\nonumber
\frac{|C_i^*\cap (A\setminus\widehat{A})|}{|\IR_i^{\varepsilon^2}\cap C^*_i\cap A(\local(i))|} &=& \frac{|(C_i^*\setminus \IR_i^{\varepsilon^2})\cap (A\setminus\widehat{A})| + |(C_i^*\cap \IR_i^{\varepsilon^2})\cap (A\setminus\widehat{A})|}{|\IR_i^{\varepsilon^2}\cap C^*_i\cap A(\local(i))|} \\ \nonumber
& \leq &\frac{|C_i^*|-(1-\varepsilon)|\IR_i^{\varepsilon^2}\cap C^*_i|}{(1-\varepsilon)\cdot |\IR_i^{\varepsilon^2}\cap C^*_i|}=  \frac{|C^*_i|}{(1-\varepsilon)\cdot |\IR_i^{\varepsilon^2}\cap C^*_i|} - 1 \\
\label{eq:triangle}
&\le &  \frac{|C_i^*|}{(1-\varepsilon)^2\cdot |C^*_i|} - 1 
 \le  \frac{2\varepsilon -\varepsilon^2}{(1-\varepsilon)^2}
<  \frac{2\varepsilon}{(1-\varepsilon)^2},
\end{eqnarray}
where the inequality in~(\ref{eq:triangle}) follows from Lemma~\ref{lem:sizeIR}.

Summing over all clusters $C^*_i \in C^*\setminus Z^*$, we obtain that the cost in $\calM$ for the clients in $(A\setminus \widehat{A}) \cap C^*_i$ is less than
  $$ \sum\limits_{c \in A\setminus (\widehat{A} \cup C(Z^*))} g_x + \frac{2\varepsilon}{(1-\varepsilon)^2}\cdot (\cost(C^*) + \cost(\local)).
  $$

 By Lemmas \ref{lem:nbin-delta} and \ref{lem:nbout-delta},
  we have $|\calM \setminus \local| + |\local \setminus \calM| = 2\cdot|Z^*| \le (4\eps^{-3 }+ O(\eps^{-2})) \beta^{-1}$.
  By selecting the neighborhood size of Local Search (Algorithm~\ref{alg:LS}) to be greater than this value, we have $(1-\eps/n)\cdot\cost(\local) \le \cost(\calM)$.
  Therefore, combining the above observations, we have
  $$(1-\frac{\eps}{n})\cdot \cost(\local) \le \sum\limits_{x \in \widehat{A}\setminus C(Z^*)} l_x + \sum\limits_{x \in  C(Z^*)} g_x + \sum\limits_{x \in A\setminus (\widehat{A}\cup C(Z^*))}  g_x + \frac{2\eps }{(1-\eps)^2}\cdot 
                                      (\cost(C^*) + \cost(\local)).$$
By simple transformations, we then obtain                                      
  \begin{align*}
   -\frac{\eps}{n} \cdot \cost(\local) + \sum\limits_{x \in A \setminus (\widehat{A})\cup C(Z^*)} l_x  &\le \sum\limits_{x \in A\setminus(\widehat{A})\cup C(Z^*)} g_x + \frac{2\eps }{(1-\eps)^2}\cdot (\cost(C^*)+\cost(\local)).
  \end{align*}
\end{proof}


We now turn to evaluate the cost for the clients that are in  $\widehat A \setminus C(Z^*)$.
For any cluster $C^*_i\in C^*\setminus C(Z^*)$ and for any $x \in C^*_i \setminus A(\local(i))$ 
define $\reas(x)$ to be the cost of $x$ with respect to the center in 
$\local(i)$. 
Note that there exists only one center of $\local$ in $IR^{\varepsilon}_i$ for any cluster $C^*_i\in C^*\setminus C(Z^*)$.
Before going deeper in the analysis, we need the following lemma. 
\begin{lemma}\label{lem:reassign}
  For any $C^*_i\in C^*\setminus C(Z^*)$, we have
  $$ \sum\limits_{x \in C^*_i \setminus A(\local(i))} \reas(x) \le \sum\limits_{x \in C^*_i \setminus A(\local(i))} g_x + \frac{2\varepsilon}{(1-\varepsilon)^2}
  \sum\limits_{x \in C^*_i}(l_x + g_x).
  $$
\end{lemma}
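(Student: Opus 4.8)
The plan is to bound $\reas(x)$ for each individual client $x \in C^*_i \setminus A(\local(i))$ using the triangle inequality, and then to control the resulting ``center-distance'' term $\cost(c^*_i,\local(i))$ by averaging over the inner-ring clients that are correctly served, exactly as was done in the proof of Lemma~\ref{lem:costbad-delta}. Concretely, for a client $x \in C^*_i \setminus A(\local(i))$, write $\reas(x) = \cost(x,\local(i)) \le \cost(x,c^*_i) + \cost(c^*_i,\local(i)) = g_x + \cost(c^*_i,\local(i))$ by the triangle inequality. Summing over all $x \in C^*_i \setminus A(\local(i))$ gives
\[
\sum_{x \in C^*_i \setminus A(\local(i))} \reas(x) \le \sum_{x \in C^*_i \setminus A(\local(i))} g_x + |C^*_i \setminus A(\local(i))| \cdot \cost(c^*_i,\local(i)).
\]

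The next step is to estimate $\cost(c^*_i,\local(i))$. For any $x' \in \IR^{\eps^2}_i \cap C^*_i \cap A(\local(i))$, the triangle inequality gives $\cost(c^*_i,\local(i)) \le \cost(c^*_i,x') + \cost(x',\local(i)) = g_{x'} + l_{x'}$, so averaging over this set yields
\[
\cost(c^*_i,\local(i)) \le \frac{1}{|\IR^{\eps^2}_i \cap C^*_i \cap A(\local(i))|} \sum_{x' \in \IR^{\eps^2}_i \cap C^*_i \cap A(\local(i))} (g_{x'} + l_{x'}) \le \frac{1}{|\IR^{\eps^2}_i \cap C^*_i \cap A(\local(i))|} \sum_{x' \in C^*_i} (g_{x'} + l_{x'}),
\]
since $\IR^{\eps^2}_i \cap C^*_i \cap A(\local(i)) \subseteq C^*_i$ and all terms are nonnegative. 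Plugging this into the previous display, the multiplicative factor in front of $\sum_{x' \in C^*_i}(g_{x'}+l_{x'})$ becomes
\[
\frac{|C^*_i \setminus A(\local(i))|}{|\IR^{\eps^2}_i \cap C^*_i \cap A(\local(i))|} \le \frac{|C^*_i|}{|\IR^{\eps^2}_i \cap C^*_i \cap A(\local(i))|}.
\]
Now I would invoke Lemma~\ref{lem:structbeta} (property 2, which applies since $C^*_i \in C^* \setminus Z^* = C^* \setminus C(Z^*)$), giving $|\IR^{\eps^2}_i \cap C^*_i \cap A(\local(i))| \ge (1-\eps)|\IR^{\eps^2}_i \cap C^*_i|$, followed by Lemma~\ref{lem:sizeIR}, giving $|\IR^{\eps^2}_i \cap C^*_i| > (1-\eps)|C^*_i|$ (using $\eps_0 = \eps^2$, so $\eps^3/\eps_0 = \eps$). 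Hence the factor is at most $\frac{|C^*_i|}{(1-\eps)^2 |C^*_i|} = \frac{1}{(1-\eps)^2} \le 1 + \frac{2\eps}{(1-\eps)^2}$; more precisely $\frac{1}{(1-\eps)^2} - 1 = \frac{2\eps - \eps^2}{(1-\eps)^2} \le \frac{2\eps}{(1-\eps)^2}$, so the excess over $1$ contributes exactly the claimed term $\frac{2\eps}{(1-\eps)^2}\sum_{x\in C^*_i}(l_x+g_x)$, while the ``$1$'' part is absorbed into $\sum_{x \in C^*_i \setminus A(\local(i))} g_x \le \sum_{x \in C^*_i}(l_x + g_x)$—actually one should be slightly careful here and keep the split so that the first sum on the right-hand side stays as $\sum_{x \in C^*_i \setminus A(\local(i))} g_x$ as stated; the cleanest bookkeeping is to bound $|C^*_i \setminus A(\local(i))|\cdot\cost(c^*_i,\local(i))$ directly by $\frac{1}{(1-\eps)^2}\sum_{x'\in C^*_i}(g_{x'}+l_{x'})$ isn't quite the stated form, so instead I will write $|C^*_i\setminus A(\local(i))| = |\IR^{\eps^2}_i\cap C^*_i\cap A(\local(i))| \cdot \frac{|C^*_i\setminus A(\local(i))|}{|\IR^{\eps^2}_i\cap C^*_i\cap A(\local(i))|}$ and track the factor as above, yielding the inequality with $\frac{2\eps-\eps^2}{(1-\eps)^2}$, which I then relax to $\frac{2\eps}{(1-\eps)^2}$.

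The main obstacle is the bookkeeping in the last step: making the coefficient come out to be exactly $\frac{2\eps}{(1-\eps)^2}$ in front of $\sum_{x\in C^*_i}(l_x+g_x)$ while leaving the leading term as $\sum_{x\in C^*_i\setminus A(\local(i))} g_x$ rather than a sum over all of $C^*_i$. This is handled by the decomposition $\frac{|C^*_i|}{(1-\eps)^2|C^*_i|} = 1 + \left(\frac{1}{(1-\eps)^2}-1\right)$ used in~(\ref{eq:triangle}): the ``$1$'' times $(g_{x'}+l_{x'})$ summed over the inner-ring clients in $A(\local(i))$ should be matched against the $g_x$ terms of the $|C^*_i\setminus A(\local(i))|$ reassigned clients, which works because both index sets are subsets of $C^*_i$ of comparable size—but to be safe I will simply bound the entire contribution of $|C^*_i\setminus A(\local(i))|\cdot\cost(c^*_i,\local(i))$ by splitting off one copy of $\sum_{x'\in\IR^{\eps^2}_i\cap C^*_i\cap A(\local(i))}(g_{x'}+l_{x'}) \le \sum_{x\in C^*_i}(g_x+l_x)$ is not tight enough either. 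The correct and clean route, matching the statement verbatim, is: bound the factor $\frac{|C^*_i\setminus A(\local(i))|}{|\IR^{\eps^2}_i\cap C^*_i\cap A(\local(i))|}$ by $\frac{2\eps}{(1-\eps)^2}$ directly (not by $\frac{1}{(1-\eps)^2}$), which holds provided $|C^*_i\setminus A(\local(i))|$ is small; however in general $|C^*_i\setminus A(\local(i))|$ can be close to $|C^*_i|$, so this route fails, confirming that the split in~(\ref{eq:triangle}) is genuinely needed and the $g_x$-term on the right must be interpreted as $\sum_{x\in C^*_i\setminus A(\local(i))} g_x$ absorbing the ``$1$''-part applied to the at-most-$|C^*_i\setminus A(\local(i))|$-many reassigned clients. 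I will therefore reorganize: $|C^*_i\setminus A(\local(i))|\cdot\cost(c^*_i,\local(i)) \le |C^*_i\setminus A(\local(i))|\cdot\frac{1}{|\IR^{\eps^2}_i\cap C^*_i\cap A(\local(i))|}\sum_{x'}(g_{x'}+l_{x'})$, and since $|\IR^{\eps^2}_i\cap C^*_i\cap A(\local(i))| \ge (1-\eps)^2|C^*_i| \ge (1-\eps)^2|C^*_i\setminus A(\local(i))|$ only when the ring clients outnumber the reassigned ones—which is exactly guaranteed because $(1-\eps)^2|C^*_i| > |C^*_i\setminus A(\local(i))|$ would need $A(\local(i))\cap C^*_i$ to be large; but property 2 of Lemma~\ref{lem:structbeta} gives precisely that the ring clients served by $\local(i)$ number at least $(1-\eps)|\IR^{\eps^2}_i\cap C^*_i| > (1-\eps)^2|C^*_i|$, so $|C^*_i\setminus A(\local(i))| \le |C^*_i| - (1-\eps)^2|C^*_i| = (2\eps-\eps^2)|C^*_i| \le \frac{2\eps-\eps^2}{(1-\eps)^2}|\IR^{\eps^2}_i\cap C^*_i\cap A(\local(i))|$, which yields the factor $\frac{2\eps}{(1-\eps)^2}$ cleanly and makes the first term $\sum_{x\in C^*_i\setminus A(\local(i))} g_x$ appear only from the per-client triangle inequality. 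This resolves the bookkeeping and gives the lemma.
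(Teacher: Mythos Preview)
Your proposal is correct and, once the dust settles in your final paragraph, is exactly the paper's argument: triangle-inequality to get $\reas(x)\le g_x+\cost(c^*_i,\local(i))$, average $\cost(c^*_i,\local(i))$ over clients in $C^*_i\cap A(\local(i))$, and bound the size ratio by observing $|C^*_i\cap A(\local(i))|\ge(1-\eps)^2|C^*_i|$ (via Lemmas~\ref{lem:structbeta} and~\ref{lem:sizeIR}) so that $|C^*_i\setminus A(\local(i))|\le(2\eps-\eps^2)|C^*_i|$ and the ratio is at most $\frac{2\eps}{(1-\eps)^2}$. The only cosmetic difference is that the paper averages over all of $C^*_i\cap A(\local(i))$ rather than restricting to the inner ring, which slightly streamlines the bookkeeping you wrestled with; your detours through the weaker bound $\frac{1}{(1-\eps)^2}$ and the attempted ``split'' were unnecessary once you noticed that $|C^*_i\setminus A(\local(i))|$ is itself $O(\eps)|C^*_i|$.
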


We now partition the clients of cluster $C^*_i \in C^* \setminus Z^*$.
For any $i$, let $B_i$ be the set of clients of $C^*_i$ that 
are served in solution $\local$ 
by a center $\local(j)$ for some $j \neq i$ and $C^*_j \in C^* \setminus Z^*$. 
Moreover, 
let $D_i = 
(A(\local(i)) \cap (\bigcup_{j \neq i} B_j))$.
Finally, define $E_i =(C^*_i \cap \widehat{A})\setminus \bigcup_{j \neq i} D_j$.

\begin{lemma}\label{lem:costgood-delta-part1}
  Let $C^*_i$ be a cluster in $C^* \setminus Z^*$.
  Define the solution $\calM^i = \local \setminus \{ \local(i) \} \cup \{c^*_i\}$ and
  denote by $m^i_x$ the cost of client $x$ in solution $\calM^i$.
Then
  \begin{equation*}
    \sum_{x \in A} m^i_x 
    \le \sum_{\substack{x \in A\setminus \\
    (A(\local(i)) \cup E_i )}} l_x + \sum_{x \in E_i }  g_x  + 
    \sum_{x \in D_i} \reas(x) +
    \sum_{\substack{x \in A(\local(i)) \setminus \\ (E_i \cup D_i)}} l_x + 
    \frac{\eps}{(1-\eps)} (\sum_{x \in E_i} g_x + l_x).
  \end{equation*}
\end{lemma}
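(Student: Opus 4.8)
The plan is to upper bound $\cost(\calM^i)=\sum_{x\in A}m^i_x$ by exhibiting an explicit (not necessarily optimal) assignment of every client to a center of $\calM^i=(\local\setminus\{\local(i)\})\cup\{c^*_i\}$, and then compare the resulting sum with the claimed right-hand side. The only center of $\local$ that disappears in $\calM^i$ is $\local(i)$, and $c^*_i$ is only added, so every client $x\notin A(\local(i))$ can keep its $\local$-center and contributes $m^i_x\le l_x$; since $E_i\subseteq A(\local(i))$, we have $A\setminus(A(\local(i))\cup E_i)=A\setminus A(\local(i))$, so this already accounts for the first summand $\sum_{x\in A\setminus(A(\local(i))\cup E_i)}l_x$.

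It then remains to reassign the clients of $A(\local(i))$, which splits as $E_i\ \sqcup\ D_i\ \sqcup\ \big(A(\local(i))\setminus(E_i\cup D_i)\big)$. A client $x\in E_i$, i.e.\ a client of $C^*_i$ served by $\local(i)$, is reassigned to $c^*_i\in\calM^i$, giving $m^i_x\le\cost(x,c^*_i)=g_x$. A client $x\in D_i$, which is a client of another good cluster $C^*_j$ with $j\neq i$, is reassigned to $\local(j)$; this center survives in $\calM^i$ precisely because $j\neq i$, and by definition of $\reas$ this costs $m^i_x\le\reas(x)$. Finally, the clients of $A(\local(i))\setminus(E_i\cup D_i)$, namely those served by $\local(i)$ but belonging to clusters of $Z^*$, are also sent to $c^*_i$; since $\local(i)\in\IR^{\eps}_i$, the triangle inequality (and $\cost=\dist$) gives $m^i_x\le l_x+\dist(\local(i),c^*_i)$.

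Summing these four bounds, the only excess over the claimed right-hand side is the total displacement $|A(\local(i))\setminus(E_i\cup D_i)|\cdot\dist(\local(i),c^*_i)$, and the remaining work is to show this is at most $\frac{\eps}{1-\eps}\sum_{x\in E_i}(g_x+l_x)$. By Lemma~\ref{lem:structbeta} (its third item) the number of such clients is at most $\eps\,|\IR^{\eps^2}_i\cap C^*_i|$. For the displacement itself I would use exactly the averaging trick from the proof of Lemma~\ref{lem:costbad-delta}: for each $x'\in\IR^{\eps^2}_i\cap C^*_i\cap A(\local(i))$ the triangle inequality gives $\dist(\local(i),c^*_i)\le\dist(\local(i),x')+\dist(x',c^*_i)=l_{x'}+g_{x'}$, hence $\dist(\local(i),c^*_i)\le|\IR^{\eps^2}_i\cap C^*_i\cap A(\local(i))|^{-1}\sum_{x'\in\IR^{\eps^2}_i\cap C^*_i\cap A(\local(i))}(g_{x'}+l_{x'})$. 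Since $\IR^{\eps^2}_i\cap C^*_i\cap A(\local(i))\subseteq E_i$ and, by Lemma~\ref{lem:structbeta}, $|\IR^{\eps^2}_i\cap C^*_i\cap A(\local(i))|\ge(1-\eps)\,|\IR^{\eps^2}_i\cap C^*_i|$, multiplying the two bounds yields $|A(\local(i))\setminus(E_i\cup D_i)|\cdot\dist(\local(i),c^*_i)\le\frac{\eps}{1-\eps}\sum_{x\in E_i}(g_x+l_x)$, which is exactly the last summand, and the lemma follows.

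The main obstacle here is the bookkeeping rather than any individual estimate: one must verify that $A(\local(i))$ is exactly partitioned into $E_i$, $D_i$, and the $Z^*$-remainder; that for every $x\in D_i$ the representative center $\local(j)$ of its home cluster genuinely survives in $\calM^i$ (so $\reas(x)$ is the correct charge) and that $c^*_i$ — also in $\calM^i$ — is never needed for those clients; and that the count of the remainder clients served by $\local(i)$ is exactly what Lemma~\ref{lem:structbeta} controls. The distance-averaging step for $\dist(\local(i),c^*_i)$ is identical to one already carried out in the proof of Lemma~\ref{lem:costbad-delta}, so it transfers without any new idea; the only care needed is to average over the inner-ring clients that $\local(i)$ actually serves and to keep the slack factor $(1-\eps)$ that comes from Lemma~\ref{lem:structbeta}.
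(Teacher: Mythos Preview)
Your proof is correct and follows essentially the same route as the paper's own argument: partition the clients into $A\setminus A(\local(i))$, $E_i$, $D_i$, and the remainder $A(\local(i))\setminus(E_i\cup D_i)$, reassign each group to an explicit center of $\calM^i$, and control the excess via the averaging bound on $\dist(\local(i),c^*_i)$ combined with the size bounds from Lemma~\ref{lem:structbeta}. The only cosmetic difference is that the paper averages over $A(\local(i))\cap E_i$ (which, as you implicitly observe, equals $E_i=C^*_i\cap A(\local(i))$) rather than over the smaller set $\IR^{\eps^2}_i\cap C^*_i\cap A(\local(i))$; both choices lead to the same $\eps/(1-\eps)$ factor.
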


We can thus prove the following lemma, which concludes the proof.

\begin{lemma}\label{lem:costgood-delta}
We have
 $$-\eps \cdot \cost(\local) + \sum_{x \in \widehat A \setminus C(Z^*)} l_x \le \sum_{x \in \widehat A  \setminus C(Z^*)} g_x +  \frac{3\eps}{(1-\varepsilon)^2}\cdot
  (\cost(\local) + \cost(C^*)).$$
\end{lemma}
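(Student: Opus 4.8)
The plan is to obtain the desired inequality by summing the local-optimality inequalities produced by Lemma~\ref{lem:costgood-delta-part1} over all clusters $C^*_i \in C^* \setminus Z^*$, and then carefully accounting for how many times each client's $l_x$ or $g_x$ term is charged. First I would recall that for each $C^*_i \in C^*\setminus Z^*$ the solution $\calM^i = \local \setminus \{\local(i)\} \cup \{c^*_i\}$ differs from $\local$ in exactly one swap (remove $\local(i)$, add $c^*_i$), so as long as the neighborhood size of Local Search is at least $2$ — which it is, given the parameter $4\eps^{-3}\beta^{-1}+O(\eps^{-2}\beta^{-1})$ — local optimality gives $(1-\eps/n)\cost(\local) \le \cost(\calM^i) \le \sum_{x\in A} m^i_x$. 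Plugging in the bound from Lemma~\ref{lem:costgood-delta-part1} and rearranging yields, for each such $i$,
\[
-\tfrac{\eps}{n}\cost(\local) + \sum_{x\in A(\local(i))} l_x \le \sum_{x\in E_i} g_x + \sum_{x\in D_i}\reas(x) + \sum_{x\in A(\local(i))\setminus(E_i\cup D_i)} l_x + \tfrac{\eps}{1-\eps}\sum_{x\in E_i}(g_x + l_x).
\]
Wait — I need to move the $\sum_{x\in A\setminus(A(\local(i))\cup E_i)} l_x$ term to the left; since that sum plus $\sum_{x\in A(\local(i))\setminus(E_i\cup D_i)}l_x$ plus $\sum_{x\in E_i} l_x$ plus $\sum_{x \in D_i} l_x$ reconstitutes $\cost(\local)$, subtracting gives a clean inequality involving only $l_x$ for $x\in A(\local(i))\cup E_i$ on the left against the reassignment/optimal terms on the right. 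I would carry out exactly this bookkeeping to get a per-cluster inequality whose left side is essentially $\sum_{x\in A(\local(i))\cup D_i} l_x$ (using that $\reas$ already pays for the $E_i$ part).

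Next I would sum over all $i$ with $C^*_i \in C^*\setminus Z^*$. The key combinatorial observations are: (i) the sets $A(\local(i))$ partition $\widehat A$, so $\sum_i \sum_{x\in A(\local(i))} l_x = \sum_{x\in\widehat A} l_x$; (ii) each client $x$ lies in $C^*_i$ for exactly one $i$, so the $\sum_{x\in C^*_i}(l_x+g_x)$-type error terms coming from Lemma~\ref{lem:reassign} (invoked through $\reas$) and from the $\tfrac{\eps}{1-\eps}$ slack term sum to at most $\tfrac{2\eps}{(1-\eps)^2}(\cost(\local)+\cost(C^*))$ and $\tfrac{\eps}{1-\eps}(\cost(\local)+\cost(C^*))$ respectively; (iii) the $\reas(x)$ terms for $x\in D_i\subseteq\bigcup_{j\ne i}B_j$ are, by Lemma~\ref{lem:reassign} applied to the clusters $C^*_j$ containing these $x$, bounded by $\sum g_x$ plus another $\tfrac{2\eps}{(1-\eps)^2}$-type error, and crucially each $x\in B_j$ is reassigned at most once; (iv) the $n\cdot(\eps/n)\cdot\cost(\local) = \eps\cost(\local)$ accumulated on the left from summing $-\tfrac{\eps}{n}\cost(\local)$ over at most $n$ clusters. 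Restricting the $l_x$-sum on the left to $x\in\widehat A\setminus C(Z^*)$ (discarding the nonnegative $l_x$ for $x\in C(Z^*)\cap\widehat A$, which only weakens the left side) and the $g_x$-sum on the right correspondingly, and collecting all error terms into a single $\tfrac{3\eps}{(1-\eps)^2}(\cost(\local)+\cost(C^*))$, gives precisely the claimed inequality. I would double-check that the constants from the $\tfrac{2\eps}{(1-\eps)^2}$ (reassignment) and $\tfrac{\eps}{1-\eps}\le\tfrac{\eps}{(1-\eps)^2}$ (slack) contributions indeed add up to at most $3\eps/(1-\eps)^2$; this is where the precise bookkeeping matters.

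The main obstacle I anticipate is controlling the multiplicity of charges in step (iii): a client $x$ originally in cluster $C^*_j$ that is served by $\local(i)$ for some other $i$ (so $x\in B_j$ and $x\in D_i$) gets its cost reassigned from $\local(i)$ to $c^*_j$ via $\reas$, and I must make sure this reassignment is counted exactly once across the whole sum and that the error term from Lemma~\ref{lem:reassign} for cluster $C^*_j$ is not also double-counted when we later process cluster $C^*_j$ itself. The sets $B_i, D_i, E_i$ are defined exactly to make this disjointness work — $E_i$ collects the clients of $C^*_i\cap\widehat A$ not "borrowed away" as some $D_j$, so that $\{A(\local(i))\}$ is partitioned into the $D_i$-parts and $E_i$-parts without overlap — and the proof hinges on verifying that $E_i = (C^*_i\cap\widehat A)\setminus\bigcup_{j\ne i}D_j$ together with the $D_i$'s tiles $\widehat A$ correctly. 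Once that disjointness is nailed down, the rest is the linear-algebra-free accumulation of error terms described above.
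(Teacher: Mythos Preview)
Your proposal is correct and follows essentially the same route as the paper's proof: apply local optimality to each single-swap solution $\calM^i$, invoke Lemma~\ref{lem:costgood-delta-part1} to get the per-cluster inequality $-\tfrac{\eps}{n}\cost(\local)+\sum_{x\in E_i}l_x+\sum_{x\in D_i}l_x\le\sum_{x\in E_i}g_x+\sum_{x\in D_i}\reas(x)+\tfrac{\eps}{1-\eps}\sum_{x\in E_i}(g_x+l_x)$, sum over $C^*_i\in C^*\setminus Z^*$, regroup the $\reas$ terms by home cluster and apply Lemma~\ref{lem:reassign}, and collect the two error contributions $\tfrac{2\eps}{(1-\eps)^2}$ and $\tfrac{\eps}{1-\eps}$ into $\tfrac{3\eps}{(1-\eps)^2}$. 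One small cleanup: since in fact $E_i\subseteq A(\local(i))$ and $E_i\cup D_i=A(\local(i))\setminus C(Z^*)$, the ``restricting to $\widehat A\setminus C(Z^*)$'' step you describe is automatic once you cancel the $\sum_{x\in A(\local(i))\setminus(E_i\cup D_i)}l_x$ term---no $l_x$ with $x\in C(Z^*)$ survives on the left, so nothing needs to be discarded.
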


The proof of Theorem \ref{thm:beta-delta} follows from (1) summing the equations
from Lemmas~\ref{lem:costbad-delta} and~\ref{lem:costgood-delta}
and (2) Lemma~\ref{lem:structbeta}. The comparison 
of the structure of the local solution
to the structure of $C^*$ is an immediate corollary of Lemma~\ref{lem:structbeta}.

\section{Perturbation Resilience}
\label{sec:perturbation-resilient}
We first give the definition of $\alpha$-perturbation-resilient instances.

\begin{defn}\label{defn:perturb}
  Let $I = (\clients,\candidates, \cost, k)$ be an instance for the $k$-clustering problem.
  For $\alpha \ge 1$, $I$ is \emph{$\alpha$-perturbation-resilient} if there exists a unique optimal set of
  centers $C^* = \{c^*_1,\ldots,c^*_k\}$ 
  and for any instance $I' = (\clients,\candidates, \cost', k,p)$, such that 
  $$\forall~a,b \in \mathcal{P},~\cost(a,b) \le \cost'(a,b) \le \alpha \cost(a,b),$$
  the unique optimal set of centers is $C^* = \{c^*_1,\ldots,c^*_k\}$.
\end{defn}

For ease of exposition, we assume that $\cost(a,b) = \dist(a,b)$ 
(\ie we work with the $k$-median problem). 
Given solution $S_0$, 
we say that $S_0$ is \emph{$1/\eps$-locally optimal} if any solution $S_1$ such that 
$|S_0 \setminus S_1| + |S_1 \setminus S_0| \le 2/\eps$ has at least $\cost(S_0)$.

\begin{theorem}
\label{thm:perturbation}
  Let $\alpha > 3$. For any instance of the $k$-median problem that is $\alpha$-perturbation-resilient,
  any $2(\alpha-3)^{-1}$-locally optimal solution is the optimal set of centers $\{c^*_1,\ldots,c^*_k\}$.  
\end{theorem}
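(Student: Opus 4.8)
The plan is to argue by contradiction: suppose $S$ is a $2(\alpha-3)^{-1}$-locally optimal solution that is \emph{not} the optimal set $S^* = \{c^*_1,\dots,c^*_k\}$. I will construct a perturbation $\cost'$ of the metric under which $S$ is strictly better than $S^*$ (or at least no worse, and $S \ne S^*$), contradicting $\alpha$-perturbation-resilience. The natural perturbation to use, following Angelidakis--Makarychev--Makarychev, is the one that ``rewards'' the assignment induced by $S^*$: for each client $a$ served by $c^*_i$ in the optimal clustering, leave $\cost'(a, c^*_i) = \dist(a,c^*_i)$, and set $\cost'(a,b)$ to be as large as possible (up to the factor $\alpha$) on all other relevant pairs, then take the shortest-path metric closure so that $\cost'$ is a genuine metric satisfying $\dist \le \cost' \le \alpha\dist$. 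Under such a perturbation the cost of $S^*$ is unchanged, so by resilience $S^*$ is still the unique optimum under $\cost'$; the goal is then to show $\cost'(S) \le \cost'(S^*)$ unless $S=S^*$, which is the contradiction.

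The quantitative core is a local-optimality / reassignment argument. Since $S \ne S^*$ and both have size $\le k$, I can pair up centers: there is a set $T \subseteq S^* \setminus S$ and a way to swap in a center of $T$ while swapping out a suitable center (or empty slot) of $S$, changing at most $2/\eps = \alpha - 3$ centers at a time. For each such candidate swap $S \to S'$, local optimality gives $\cost(S') \ge \cost(S)$. Summing these inequalities over a carefully chosen collection of swaps (one per optimal center in $S^* \setminus S$, reassigning each optimal cluster $C^*_i$ to its ``nearby'' center in $S$ when $c^*_i$ is not itself swapped in) yields an inequality of the form $\cost(S) \le \cost(S^*) + (\text{const})\cdot(\text{extra reassignment cost})$, where the extra cost is controlled using the triangle inequality: a client $a \in C^*_i$ currently served by some $\ell \in S$ at distance $\dist(a,\ell)$, when forced onto $c^*_i$, pays at most $\dist(a,\ell) + \dist(\ell, c^*_i)$, and $\dist(\ell, c^*_i)$ is in turn bounded by distances of clients near $c^*_i$. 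This is the same triangle-inequality bookkeeping as in the Arya et al.\ $k$-median local search analysis, which is why the neighborhood size $2/\eps$ must scale like $\alpha - 3$.

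The perturbation-resilience hypothesis is what upgrades this from an approximation bound to an exact recovery statement. The key structural fact I expect to need (and which holds for $\alpha$-perturbation-resilient instances, cf.\ the AMM framework) is a kind of ``center-proximity'' or stability of optimal clusters: every client is closer to its own optimal center than $\alpha$ times the distance to any other optimal center, and more usefully, the optimal clustering is robust — no client can be cheaply reassigned. This lets me argue that in the summed inequality the ``extra reassignment cost'' term is actually strictly dominated, so that the chain of inequalities forces $\cost(S) \ge \cost(S^*)$ with equality only when $S$ serves every client exactly as $S^*$ does, i.e.\ $S = S^*$ by uniqueness of the optimum. The precise accounting is where the threshold $\alpha > 3$ and the locality parameter $2(\alpha-3)^{-1}$ come from: we need $\alpha - 3 > 0$ for the swap budget to be positive, and the factor-$3$ slack absorbs the triangle-inequality overhead (one unit for the client's own displacement, two units for bounding a center-to-center distance via a witness client).

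The main obstacle I anticipate is the combinatorial matching step: choosing, for each optimal center not in $S$, exactly one swap so that (i) each such swap changes at most $2/\eps$ centers, (ii) summing the local-optimality inequalities does not double-count the ``gain'' from any client, and (iii) the total reassignment overhead telescopes into $O(1)\cdot \cost(S^*)$ rather than something depending on $k$. Handling centers of $S$ that are ``close'' to several optimal centers (and hence would be swapped out by several candidate moves) requires the standard trick of grouping optimal centers by their nearest $S$-center and treating groups of size $\ge 2$ separately from singletons — this is exactly the part of the Arya et al.\ argument that is delicate, and adapting it to also exploit (rather than merely tolerate) the perturbation-resilience structure to get exact equality is the crux. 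I also expect to need to verify that the claimed perturbation $\cost'$ is genuinely a metric within the factor-$\alpha$ window; the metric-completion step is routine but must be stated.
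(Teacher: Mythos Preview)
Your perturbation is pointed in the wrong direction, and this breaks the argument. You keep $\cost'(a,c^*_i)=\dist(a,c^*_i)$ and inflate everything else, so $\cost'(S^*)=\cost(S^*)$ while $\cost'(S)\ge \cost(S)$. Since $S^*$ is already optimal in the original instance, $\cost(S)\ge\cost(S^*)$, hence $\cost'(S)\ge\cost'(S^*)$ automatically; you will never derive the strict inequality $\cost'(S)\le\cost'(S^*)$ you need for a contradiction. The Arya-et-al.\ swap bound you invoke only tells you $\cost(S)\le(3+2\eps)\cost(S^*)$ in the \emph{original} metric, and under your perturbation the cost of $S$ has gone \emph{up}, so this gives no leverage.

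The paper flips the perturbation: for each client $c$ (in the relevant set), keep the distance from $c$ to its center $\ell_c$ \emph{in the local solution} $\local$ unchanged and multiply all other distances from $c$ by $\alpha$. Now $\cost'(\local)=\cost(\local)$, while $\cost'(C^*)$ has increased: specifically $\cost'(C^*)\ge \alpha\sum_{c\in N_{\local}(\tilde\local)}g_c + \sum_{c\in N_{C^*}(\tilde C^*)\setminus N_{\local}(\tilde\local)}g_c + \sum_{\text{rest}}g_c$. The Local-Approximation Theorem (precisely the refined Arya-et-al.\ swap analysis you describe, yielding the constant $3+2\eps$ on the $N_{\local}(\tilde\local)$ part) then gives $\cost(\local)\le$ this same quantity when $\eps=(\alpha-3)/2$. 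Hence $\cost'(\local)\le\cost'(C^*)$, and $\alpha$-perturbation-resilience forces $\local=C^*$. Note also that by Definition~\ref{defn:perturb} the perturbed $\cost'$ need only satisfy $\cost\le\cost'\le\alpha\cost$; it is not required to be a metric, so the metric-completion step you flag as an obstacle is unnecessary. Your swap-matching discussion is on target for proving the $3+2\eps$ bound itself, but you must couple it with the perturbation that freezes $\local$, not $S^*$.
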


Moreover, define $l_c$ to be the cost for client $c$ in solution $\local$ and $g_c$ to be its cost in the 
optimal solution $C^*$. Finally, for any sets of centers $S$ and $S_0 \subset S$, define $N_S(S_0)$
to be the set of clients served by a center of $S_0$ in solution $S$, i.e.:
$N_S(S_0) = \{x \mid \exists s \in S_0, \dist(x,s) = \min_{s' \in S} \dist(x,s') \}$.

The proof of Theorem \ref{thm:perturbation} relies on the following theorem of particular interest.

\begin{theorem}[Local-Approximation Theorem.]
  \label{thm:main:perturbation}
  Let $\local$ be a $1/\eps$-locally optimal solution 
  and $\globalS$ be any solution.
  Define $S = \local \cap \globalS$ and 
  $\localt = \local \setminus S$ and $\globalSt = \globalS \setminus S$.
  Then 
  $$\sum_{c \in \Nglob(\globalSt) - \Nloc(\localt)} l_c + 
  \sum\limits_{c \in \Nloc(\localt)} l_c \le 
  \sum_{c \in \Nglob(\globalSt) - \Nloc(\localt)} g_c + 
  (3+2\eps) 
  \sum\limits_{c \in \Nloc(\localt)}  g_c.$$
\end{theorem}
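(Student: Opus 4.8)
The plan is to exhibit a "swap" test solution that is legal for the local-optimality condition, namely the solution $\calM = S \cup \globalSt = \local \setminus \localt \cup \globalSt$ obtained from $\local$ by swapping out the centers $\localt$ and swapping in $\globalSt$. Since $|\localt| + |\globalSt| \le |\localt| + |\globalSt|$, provided $|\localt|=|\globalSt| \le 1/\eps$ (which holds as $\localt,\globalSt$ are the symmetric-difference parts of two $k$-center sets, so $|\localt|=|\globalSt|$; if this is too large one first reduces to a bounded-size sub-swap, the standard trick from~\cite{AGKMMP04,KMNPSW04}), local optimality of $\local$ gives $\cost(\local) \le \cost(\calM)$. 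The key point is then to upper bound $\cost(\calM)$ by a reassignment argument: in $\calM$, every client keeps its center of $\local$ unless that center lies in $\localt$, in which case it must be reassigned to some surviving center.

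The heart of the argument is a careful reassignment of the clients in $\Nloc(\localt)$. For a client $c \in \Nloc(\localt)$, the natural move is to send it to its center in $\globalS$; if that center lies in $S$ we are done since $S \subseteq \calM$ and the new cost is $g_c$. If its $\globalS$-center lies in $\globalSt$, it is again in $\calM$ and the new cost is $g_c$. The problematic clients are those in $\Nglob(S \cap \globalS)$ whose $\globalS$-center is in $S$ but whose $\local$-center is in $\localt$ — wait, those are covered. The genuinely delicate clients are in $\Nloc(\localt) \cap \Nglob(S)$: their $\globalS$-center is in $S\subseteq\calM$, so they reassign at cost $g_c$. Conversely, clients in $\Nglob(\globalSt) \setminus \Nloc(\localt)$ are served in $\local$ by a center of $S$, hence unchanged, contributing $l_c$ on both sides. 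Collecting terms, $\cost(\calM) \le \sum_{c \in \Nloc(\localt)} g_c' + \sum_{c \notin \Nloc(\localt)} l_c$, where $g_c'$ is the reassignment cost. Using the triangle inequality to bound $g_c'$ for the hard clients — those whose $\globalS$-center $c^*_j \in S$ is reached via a bounded detour — is where the factor $3+2\eps$ enters: one routes $c$ through its optimal center $c^*_j$ and then, since $c^*_j \in S = \local \cap \globalS$, this center is available, giving $g_c' \le g_c$ directly. The $3$ must instead come from handling clients in $\Nglob(\globalSt)-\Nloc(\localt)$ that get displaced indirectly; following~\cite{AGKMMP04}, one pairs each swapped-in center $c^*_j \in \globalSt$ with a "captured" center in $\localt$ and bounds the induced detour by $g_c + (l_{c'} + g_{c'})$ for a witness client $c'$, then controls the accumulated witness cost by the $\eps$-slack, yielding the $2\eps$ term.

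The main obstacle I expect is the bookkeeping for the partition of swapped-in/swapped-out centers — ensuring every center of $\localt$ is the designated reassignment target of exactly the right set of $\globalSt$-centers so that no client is double-charged and the witness costs telescope correctly. This is precisely the capture/assignment lemma of Arya et al.; the perturbation-resilience hypothesis itself is \emph{not} used in this theorem (it is Theorem~\ref{thm:main:perturbation}, a purely local-search statement), so the only subtlety is the combinatorial matching and the choice of which swaps to perform within the $2/\eps$ budget. The inequality as stated is one-sided and asymmetric ($l$'s on the left for $\Nloc(\localt)$, $g$'s with the $3+2\eps$ factor on the right), which matches exactly the output of one round of the Arya-et-al. swap analysis restricted to the centers that actually differ between $\local$ and $\globalS$; I would therefore structure the proof as (i) reduce to bounded-size swaps, (ii) invoke the capture lemma, (iii) sum the per-swap inequalities and simplify, cancelling the common $l_c$ over $\Nglob(\globalSt)-\Nloc(\localt)$ that appears on both sides of the intermediate bound.
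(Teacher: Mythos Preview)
Your high-level plan is right and matches the paper: this is the Arya et al.\ swap analysis restricted to the centers in $\localt$ and $\globalSt$, carried out via a capture graph and bounded-size swaps, and perturbation resilience plays no role here. Your three-step outline (reduce to bounded sub-swaps, invoke the capture structure, sum the per-swap inequalities) is exactly how the paper proceeds.

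However, your middle paragraph misidentifies the hard clients and where the constants come from. The clients in $\Nglob(\globalSt)\setminus\Nloc(\localt)$ are \emph{never} displaced: they are served in $\local$ by a center of $S$, which survives every sub-swap, so their $l_c$ simply persists on both sides. The clients that require the nontrivial reassignment are those in $\Nloc(\Rloc)\setminus\Nglob(\Rglob)$ for a given swap pair $(\Rloc,\Rglob)$: their local center is being removed but their global center is not being inserted. For such a client $c$ with global center $f\in\globalSt$, one routes $c\to f\to \ell$, where $\ell$ is the center of $\localt$ closest to $f$; the capture structure guarantees $\ell\notin\Rloc$, and the triangle inequality gives $\dist(c,\ell)\le g_c+\dist(f,\ell)\le g_c+\dist(f,\ell_c)\le 2g_c+l_c$ where $\ell_c$ is $c$'s own local center. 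There is no separate ``witness client'' $c'$; the bound is direct and uses only $c$'s own $g_c$ and $l_c$. The $l_c$ on the right then cancels against the left, leaving a factor $2$ per swap; the extra $1$ that makes it $3$ appears only at the end when you rewrite $\Nglob(\globalSt)\cup\Nloc(\localt)$ as the disjoint union and absorb the $g_c$ over $\Nloc(\localt)$. The $2\eps$ does not come from a generic slack but specifically from the high-degree case: when some $\ell\in\localt$ captures more than $1/\eps$ centers of $\globalSt$, you cannot do a single swap, so you perform $|\Ngam|$ single swaps and one extra swap whose reassignment cost is bounded by the minimum over $\ge 1/\eps$ choices, yielding the additional $\eps$ factor.
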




We first show how Theorem \ref{thm:main:perturbation} 
allows us to prove Theorem \ref{thm:perturbation}.
\begin{proof}[Proof of Theorem \ref{thm:perturbation}]
  Given an instance $(\clients, \candidates, \dist,k)$, we define 
  the following instance $I' 
  = (\clients, \candidates, \dist',k)$,
  where $\dist'(a,b)$ is a distance function defined 
  over $\clients \cup \candidates$ that
  we detail below.
  For each client $c \in \Nloc(\localt) \cup \Nglob(\globalSt)$, 
  let $\ell_c$ be the center of $\local$ that 
  serves it in $\local$, for any point $p \neq \ell_c$, we 
  define $\dist'(c,p) = \alpha \dist(c,p)$ and
  $\dist'(c,\ell_c) = \dist(c,\ell_c)$.
  For the other clients we set $\dist' = \dist$.
  Observe that by local optimality, the clustering induced 
  by $\local$ is $\{c^*_1,\ldots,c^*_k\}$ if and
  only if $\local = \globalS$.
  Therefore, the cost of $\globalS$ in instance $I'$ is equal to 
  $$\alpha \sum\limits_{c \in  \Nloc(\localt)} g_c + 
  \sum\limits_{c \in \Nglob(\globalSt) - \Nloc(\localt)} \min(\alpha g_c,l_c)
  + \sum\limits_{c \not\in \Nglob(\globalSt) \cup \Nloc(\localt)} g_c.$$
  On the other hand, the cost of $\local$ in $I'$ is the same as in
  $I$.
  By Theorem \ref{thm:main:perturbation}
  $$\sum\limits_{c \in \Nglob(\globalSt) - \Nloc(\localt)} l_c + 
  \sum\limits_{c \in \Nloc(\localt)} l_c +
  \le \sum\limits_{c \in \Nglob(\globalSt) - \Nloc(\localt)} g_c + 
  (3+\frac{2(\alpha-3)}{2}) \sum\limits_{c \in \Nloc(\localt)}  g_c$$ 
  and by definition of $S$ we have, for each element $c \notin 
  \Nglob(\globalSt) \cup \Nloc(\localt)$, $l_c = g_c$.

  Thus the cost of $\local$ in $I'$ is at most
  $$  (3+\frac{2(\alpha-3)}{2}) \sum\limits_{c \in \Nloc(\localt)}  g_c+
  \sum\limits_{c \in \Nglob(\globalSt) - \Nloc(\localt)} g_c 
  + \sum\limits_{c \not\in \Nglob(\globalSt) \cup \Nloc(\localt)} g_c$$

  Now, observe that for the clients in 
  $\Nglob(\globalSt) - \Nloc(\localt) = \Nglob(\globalSt) \cap \Nloc(S)$,
  we have $l_c \ge g_c$.

  Therefore, we have that the cost of $\local$ is at most the cost of
  $\globalS$ in $I'$ and so by definition of $\alpha$-perturbation-resilience, 
  we have that the clustering $\{c^*_1,\ldots,c^*_k\}$ is the unique
  optimal solution in $I'$.
  Therefore $\local = \globalS$ and the Theorem follows.
\end{proof}

We now turn to the proof of Theorem \ref{thm:main:perturbation}.

Consider the following bipartite graph $\Gamma = (\localt \cup \globalSt, \mathcal{E})$ where
$\mathcal{E}$ is defined as follows. For any center $f \in \globalSt$, we have
$(f,\ell) \in \mathcal{E}$ where $\ell$ is the center of $\localt$ that is the closest
to $f$. Denote $N_{\Gamma}(\ell)$ the neighbors of the point corresponding to 
center $\ell$ in $\Gamma$.

For each edge $(f,\ell) \in \mathcal{E}$,
for any client $c \in \Nglob(f) \setminus \Nloc(\ell)$, we define $\reas_c$ as the cost of reassigning client $c$
to $\ell$. We derive the following lemma.
\begin{lemma}\label{lem:reas}
  For any client $c$,  $\reas_c \le l_c + 2 g_c$.
\end{lemma}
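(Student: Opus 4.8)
\textbf{Plan for Lemma~\ref{lem:reas}.}
The plan is to reassign a client $c \in \Nglob(f) \setminus \Nloc(\ell)$ along the three-leg path $c \to \ell_c \to f \to \ell$, where $\ell_c$ is the center of $\local$ currently serving $c$ (so $l_c = \dist(c,\ell_c)$), $f \in \globalSt$ is the center of $\globalS$ serving $c$ (so $g_c = \dist(c,f)$), and $\ell = \Ngam(f)$ is the center of $\localt$ closest to $f$ in the graph $\Gamma$. By the triangle inequality, $\reas_c = \dist(c,\ell) \le \dist(c,\ell_c) + \dist(\ell_c,f) + \dist(f,\ell)$. The middle and last legs must both be bounded by quantities controlled by $l_c$ and $g_c$.

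First I would handle the leg $\dist(\ell_c,f)$: since $\ell_c \in \local$ serves $c$, we have $\dist(c,\ell_c) \le \dist(c,f')$ for every $f' \in \local$; in particular, $\dist(\ell_c,f) \le \dist(\ell_c,c) + \dist(c,f) = l_c + g_c$. Next, for the leg $\dist(f,\ell)$: by construction of $\Gamma$, $\ell$ is the center of $\localt$ nearest to $f$, so $\dist(f,\ell) \le \dist(f,\ell')$ for every $\ell' \in \localt$. The key point is that $\ell_c$, the center of $\local$ serving $c$, actually lies in $\localt = \local \setminus S$ rather than in $S$: indeed, if $\ell_c \in S = \local \cap \globalS$, then $c$ would be served in $\globalS$ by a center of $S$ — contradicting $c \in \Nglob(f)$ with $f \in \globalSt$ (or, more carefully, $c$'s service cost in $\globalS$ would already be $\le l_c$ via $\ell_c \in S \subseteq \globalS$, so $c \notin \Nloc(\ell)$ combined with $c \in \Nglob(\globalSt)$ forces $\ell_c \in \localt$). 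Hence $\dist(f,\ell) \le \dist(f,\ell_c) \le l_c + g_c$ by the previous bound.

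Combining the three legs gives $\reas_c \le l_c + (l_c + g_c) + (l_c + g_c) = 3 l_c + 2 g_c$, which is weaker than the claimed $l_c + 2 g_c$. So the plan needs a sharper routing: instead of going $c \to \ell_c \to f \to \ell$, I would bound $\dist(f,\ell)$ directly against $\dist(f,\ell_c)$ and then use a tighter estimate on $\dist(c,\ell)$ that reuses the leg $\dist(c,\ell_c)$ only once. Concretely, $\reas_c = \dist(c,\ell) \le \dist(c,f) + \dist(f,\ell) \le g_c + \dist(f,\ell_c) \le g_c + \dist(f,c) + \dist(c,\ell_c) = 2 g_c + l_c$, where the second inequality uses $\dist(f,\ell) \le \dist(f,\ell_c)$ (valid because $\ell_c \in \localt$ and $\ell$ is the $\Gamma$-neighbor of $f$, i.e.\ the closest point of $\localt$ to $f$), and the third is the triangle inequality. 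This yields exactly $\reas_c \le l_c + 2 g_c$.

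\textbf{Main obstacle.} The only subtle step is justifying $\ell_c \in \localt$, i.e.\ that the center serving $c$ in $\local$ is not shared with $\globalS$. This is where the hypothesis $c \notin \Nloc(\localt)$ versus $c \in \Nglob(\globalSt)$ must be used carefully: a client in $\Nglob(f)$ for $f \in \globalSt$ whose $\local$-server lay in $S$ would be in $\Nloc(S) = A \setminus \Nloc(\localt)$, which is consistent — so in fact for clients in $\Nglob(\globalSt) \setminus \Nloc(\localt)$ we genuinely do \emph{not} have $\ell_c \in \localt$, and one must instead observe that $\ell_c \in S \subseteq \globalS$ then gives $g_c \le l_c$ directly, making the bound $\reas_c \le \dist(c,f) + \dist(f,\ell)$ route through a valid center; handling both cases (whether $\ell_c \in S$ or $\ell_c \in \localt$) and checking that $\ell = \Ngam(f)$ is always a legitimate reassignment target in the locally optimal swap argument is the one place that needs genuine care rather than mechanical triangle inequalities.
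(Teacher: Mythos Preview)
Your ``sharper routing'' is exactly the paper's proof: $\reas_c = \dist(c,\ell) \le \dist(c,f) + \dist(f,\ell) = g_c + \dist(f,\ell)$, then bound $\dist(f,\ell) \le \dist(f,\ell_c) \le \dist(f,c) + \dist(c,\ell_c) = g_c + l_c$ via the center $\ell_c$ serving $c$ in $\local$.

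Your ``Main obstacle'' raises a genuine subtlety---the step $\dist(f,\ell) \le \dist(f,\ell_c)$ requires $\ell_c \in \localt$, since by the definition of $\Gamma$ the center $\ell$ is only the closest to $f$ among $\localt$, not all of $\local$ (the paper's own proof in fact writes ``closest in $\local$'', glossing over this). But your attempted fix for the case $\ell_c \in S$ does not work: knowing $g_c \le l_c$ gives no control whatsoever on $\dist(f,\ell)$, and one can build configurations with $\ell_c \in S$ where $\reas_c$ is arbitrarily larger than $l_c + 2g_c$. The actual resolution is simpler than the case analysis you sketch: the lemma is only ever applied (in Lemmas~\ref{lem:cost:<=p} and~\ref{lem:cost:>p}) to clients $c \in \Nloc(\Rloc)$ with $\Rloc \subseteq \localt$, so $\ell_c \in \localt$ holds automatically in every use and the problematic case never arises.
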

\begin{proof}
  By definition we have $\reas_c = \dist(c, \ell)$. By the triangle inequality
  $\dist(c,\ell) \le \dist(c,f) + \dist(f,\ell)$. Since $f$ serves $c$ in $\globalS$
  we have $\dist(c, f) = g_c$, hence $\dist(c,\ell) \le g_c + \dist(f,\ell)$.
  We now bound $\dist(f,\ell)$. Consider the center $\ell'$ that serves $c$ in solution $\local$.
  By the triangle inequality we have $\dist(f, \ell') \le \dist(f,c) + \dist(c,\ell') = g_c + l_c$.
  Finally, since $\ell$ is the closest center of $f$ in $\local$, we have 
  $\dist(f,\ell) \le \dist(f, \ell') \le g_c + l_c$ and the lemma follows.
\end{proof}

We partition the centers of $\localt$ as follows.
Let $\localt_0$ be the set of centers of $\localt$ that have degree 0 in $\Gamma$.
Let $\localt_{\le \eps^{-1}}$ be the set of centers of $\localt$ that have degree at least one and at most $1/\eps$ in $\Gamma$.
Let $\localt_{>\eps^{-1}}$ be the set of centers of $\localt$ that have degree greater than $1/\eps$ in $\Gamma$.

We now partition the centers of $\localt$ and $\globalSt$ using the neighborhoods of the vertices 
of $\localt$ in $\Gamma$.
We start by iteratively constructing two set of pairs $S_{\le \eps^{-1}}$ and $S_{>\eps^{-1}}$.
For each center $\ell \in \localt_{\le \eps^{-1}} \cup \localt_{>\eps^{-1}}$, we pick a set $A_{\ell}$ of $|\Ngam|-1$ centers 
of $\localt_0$ and define a pair $(\{\ell\} \cup  A_{\ell}, \Ngam)$. We then remove $A_{\ell}$ from $\localt_0$ and repeat.
Let $S_{\le \eps^{-1}}$ be the pairs that contain a center of $\localt_{\le \eps^{-1}}$ and let $S_{> \eps^{-1}}$ be the remaining
pairs.

The following lemma follows from the definition of the pairs.
\begin{lemma}\label{lem:presence}
  Let $(\Rloc,\Rglob)$ be a pair in $S_{\le \eps^{-1}} \cup S_{> \eps^{-1}}$.
  If $\ell \in \Rloc$, then for any $f$ such that 
  $(f,\ell) \in \mathcal{E}$, $f \in \Rglob$.
\end{lemma}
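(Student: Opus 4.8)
The plan is to unwind the definitions. By construction, every pair $(\Rloc,\Rglob)$ in $S_{\le\eps^{-1}}\cup S_{>\eps^{-1}}$ is produced by the iterative procedure as $\Rloc=\{\ell\}\cup A_{\ell}$ and $\Rglob=\Ngam$ for some center $\ell\in\localt_{\le\eps^{-1}}\cup\localt_{>\eps^{-1}}$, where the padding set $A_{\ell}$ consists of $|\Ngam|-1$ centers drawn from $\localt_{0}$. The first thing I would record is exactly this structural description of the pairs, since the claim then becomes almost immediate: the only members of $\Rloc$ are $\ell$ itself and a bunch of degree-$0$ vertices.

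Then I would do a two-case analysis on the center $\ell'\in\Rloc$ appearing in the lemma. In the first case $\ell'=\ell$: any $f$ with $(f,\ell)\in\mathcal{E}$ is by definition of $\Gamma$ (and hence of $\Ngam$) a $\Gamma$-neighbour of $\ell$, so $f\in\Ngam=\Rglob$, which is what we want. In the second case $\ell'\in A_{\ell}\subseteq\localt_{0}$: since $\localt_{0}$ is by definition the set of centers of $\localt$ of degree $0$ in $\Gamma$, there is no edge of $\mathcal{E}$ incident to $\ell'$ at all, so there is no $f$ with $(f,\ell')\in\mathcal{E}$ and the implication in the lemma holds vacuously.

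There is no real obstacle here; the only point worth being careful about is that the procedure repeatedly removes vertices from $\localt_{0}$, so I should note that deleting vertices from a vertex set never creates new edges, and hence every vertex that is ever placed into some $A_{\ell}$ had degree $0$ in $\Gamma$ from the outset. Granting that, the degree-$0$ property of each element of $A_{\ell}$ is genuine, and the lemma follows directly from the definitions of $\mathcal{E}$ and of $\Ngam$.
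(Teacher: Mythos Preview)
Your proposal is correct and is exactly the natural unwinding the paper has in mind; the paper itself does not spell out a proof at all, merely stating that the lemma ``follows from the definition of the pairs,'' and your two-case analysis (generating center versus padding center of degree~$0$) is the intended one.
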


\begin{lemma}
  \label{lem:cost:<=p}
  For any pair $(\Rloc, \Rglob) \in S_{\le \eps^{-1}}$ we have that 
  $$\sum_{c \in \Nglob(\Rglob)
  } l_c \le 
  \sum_{c \in \Nglob(\Rglob)
  } g_c +   
  2 \sum_{\Nloc(\Rloc) } g_c.$$  
\end{lemma}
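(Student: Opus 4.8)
The goal is to prove, for a pair $(\Rloc,\Rglob)\in S_{\le\eps^{-1}}$, that $\sum_{c\in\Nglob(\Rglob)} l_c \le \sum_{c\in\Nglob(\Rglob)} g_c + 2\sum_{c\in\Nloc(\Rloc)} g_c$. The natural route is to use the local optimality of $\local$ via a carefully chosen swap, namely the swap that removes $\Rloc$ from $\local$ and inserts $\Rglob$. First I would check that this swap is admissible: by construction of the pairs in $S_{\le\eps^{-1}}$, we have $|\Rloc| = |\Rglob|$ (each pair is $(\{\ell\}\cup A_\ell, \Ngam)$ with $|A_\ell| = |\Ngam|-1$), and since $\ell\in\localt_{\le\eps^{-1}}$ we have $|\Ngam|\le 1/\eps$, so $|\Rloc|+|\Rglob| = 2|\Ngam| \le 2/\eps$; hence the solution $\local' = (\local\setminus\Rloc)\cup\Rglob$ is within the allowed neighborhood size and $\cost(\local')\ge\cost(\local)$ by $1/\eps$-local optimality.

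\textbf{Bounding the cost change.} The plan is then to exhibit an assignment of clients to $\local'$ whose cost is small, and combine with local optimality. Clients not served by $\Rloc$ in $\local$ keep their center (which still lies in $\local'$), contributing $l_c$ each. For a client $c\in\Nloc(\Rloc)$, I distinguish two cases according to where $c$ is served in the global solution $\globalS$. If $c$ is served in $\globalS$ by a center of $\Rglob$ (i.e.\ $c\in\Nglob(\Rglob)$), reassign $c$ to that center, costing $g_c$. If instead $c$ is served in $\globalS$ by a center in $S=\local\cap\globalS$, then that center is already in $\local'$, so $c$ can go there at cost $g_c$ as well — but actually it suffices to observe $c$'s original center might not be in $\Rloc$; the real subtlety is clients of $\Nloc(\Rloc)$ whose $\globalS$-center is in $\globalSt\setminus\Rglob$. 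For those I would use the reassignment route: by Lemma~\ref{lem:presence}, if $\ell'\in\Rloc$ is the center of $\localt$ serving $c$, then every $f$ with $(f,\ell')\in\mathcal E$ lies in $\Rglob$; and the reassignment cost $\reas_c\le l_c+2g_c$ from Lemma~\ref{lem:reas} lets us send such $c$ to a center of $\Rglob$ at cost at most $l_c+2g_c$. Putting this together, $\cost(\local') \le \sum_{c\notin\Nloc(\Rloc)} l_c + \sum_{c\in\Nloc(\Rloc)\cap\Nglob(\Rglob)} g_c + \sum_{c\in\Nloc(\Rloc)\setminus\Nglob(\Rglob)}(\text{reassignment or }g_c)$, and the precise accounting must cancel the $l_c$ terms over $\Nloc(\Rloc)$ against the $\sum_{c\notin\Nloc(\Rloc)} l_c$ part when subtracting $\cost(\local)=\sum_c l_c$ from both sides.

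\textbf{The main obstacle.} The delicate point is the bookkeeping for clients in $\Nglob(\Rglob)\setminus\Nloc(\Rloc)$: these are counted on the left-hand side (which sums $l_c$ over $\Nglob(\Rglob)$) but in $\local'$ they simply stay at their $\local$-center, so they contribute nothing new — the inequality we want is actually about replacing $l_c$ by $g_c$ only for the $\Nglob(\Rglob)$ clients while also paying $2g_c$ for the $\Nloc(\Rloc)$ clients. So I would organize the argument as: $\cost(\local') - \cost(\local) \le -\sum_{c\in\Nloc(\Rloc)} l_c + \sum_{c\in\Nloc(\Rloc)} (\text{new cost of }c)$, where the new cost of $c\in\Nloc(\Rloc)$ is at most $g_c$ if $c\in\Nglob(\Rglob)$ (reassign to its $\globalS$-center, in $\Rglob$) and at most $\reas_c\le l_c+2g_c$ otherwise (reassign via Lemma~\ref{lem:reas}, landing in $\Rglob$ by Lemma~\ref{lem:presence}). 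Since local optimality gives $\cost(\local')-\cost(\local)\ge 0$, rearranging yields $\sum_{c\in\Nloc(\Rloc)} l_c \le \sum_{c\in\Nloc(\Rloc)\cap\Nglob(\Rglob)} g_c + \sum_{c\in\Nloc(\Rloc)\setminus\Nglob(\Rglob)}(l_c+2g_c)$, i.e.\ $\sum_{c\in\Nloc(\Rloc)\cap\Nglob(\Rglob)} l_c \le \sum_{c\in\Nloc(\Rloc)\cap\Nglob(\Rglob)} g_c + 2\sum_{c\in\Nloc(\Rloc)\setminus\Nglob(\Rglob)} g_c \le \sum_{c\in\Nloc(\Rloc)} g_c + \sum_{c\in\Nloc(\Rloc)} g_c$. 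Finally, adding $\sum_{c\in\Nglob(\Rglob)\setminus\Nloc(\Rloc)} l_c \le \sum_{c\in\Nglob(\Rglob)\setminus\Nloc(\Rloc)} g_c$ — which holds because for such $c$ its $\globalS$-center lies in $\globalSt$ and hence not in $\local$, so $\local$'s assignment is at least as cheap... wait, that direction is $l_c\le g_c$ only if the $\local$ assignment beats the $\globalS$ one, which is false in general; instead I should note these clients are exactly those not served by $\Rloc$, so in the swap they stay put and this term needs separate care. I expect reconciling exactly which client sets appear on each side — the interplay of $\Nloc(\Rloc)$, $\Nglob(\Rglob)$, and $S$ — to be the part requiring the most care, and I would settle it by writing $\Nglob(\Rglob)$ as the disjoint union of $\Nglob(\Rglob)\cap\Nloc(\Rloc)$ and $\Nglob(\Rglob)\cap\Nloc(S)$ (using Lemma~\ref{lem:presence} to rule out $\Nglob(\Rglob)\cap\Nloc(\localt\setminus\Rloc)$), handling the second set by the trivial bound $l_c\le g_c$ valid there since those clients could be reassigned to their $\globalS$-center which is in $S\subseteq\local$.
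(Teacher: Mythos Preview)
Your overall strategy---swap $\Rloc$ out for $\Rglob$, invoke $1/\eps$-local optimality, and use Lemma~\ref{lem:reas} for the clients that lose their center---is exactly the paper's. The gap is in how you handle clients of $\Nglob(\Rglob)\setminus\Nloc(\Rloc)$, and both of your proposed patches for that set are incorrect.

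First, Lemma~\ref{lem:presence} is a statement about \emph{centers} in the bipartite graph $\Gamma$, not about clients. It says that if $\ell\in\Rloc$ then every $f\in\globalSt$ whose nearest $\localt$-center is $\ell$ lies in $\Rglob$. It does \emph{not} preclude a client $c\in\Nglob(\Rglob)$ from being served in $\local$ by some $\ell'\in\localt\setminus\Rloc$; in general $\Nglob(\Rglob)\cap\Nloc(\localt\setminus\Rloc)$ is nonempty. Second, your ``trivial bound $l_c\le g_c$'' for $c\in\Nglob(\Rglob)\cap\Nloc(S)$ is wrong for the stated reason: if $c\in\Nglob(\Rglob)$ then by definition its $\globalS$-center lies in $\Rglob\subseteq\globalSt$, which is disjoint from $S$ and hence from $\local$, so you cannot conclude $l_c\le g_c$.

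The fix is simpler than either patch. In the swapped solution $M=(\local\setminus\Rloc)\cup\Rglob$ you have $\Rglob\subseteq M$, so \emph{every} client $c\in\Nglob(\Rglob)$---not just those in $\Nloc(\Rloc)$---can be sent to its $\globalS$-center at cost $g_c$. Thus the case split for $m_c$ is: $g_c$ if $c\in\Nglob(\Rglob)$; $\reas_c$ if $c\in\Nloc(\Rloc)\setminus\Nglob(\Rglob)$ (Lemma~\ref{lem:presence} ensures the reassignment target remains in $M$); and $l_c$ otherwise. Local optimality then gives
\[
\sum_{c\in\Nglob(\Rglob)} l_c + \sum_{c\in\Nloc(\Rloc)\setminus\Nglob(\Rglob)} l_c \;\le\; \sum_{c\in\Nglob(\Rglob)} g_c + \sum_{c\in\Nloc(\Rloc)\setminus\Nglob(\Rglob)} \reas_c,
\]
and applying $\reas_c\le l_c+2g_c$ cancels the second sum on the left and yields the claim directly, with no need to split $\Nglob(\Rglob)$ further.
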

\begin{proof}
  Consider the mixed solution $M = \local \setminus \Rloc \cup \Rglob$.
  For each point $c$, let $m_c$ denote the cost of $c$ in solution $M$.
  We have the following upper bounds
  $$m_c \le
  \begin{cases}
    g_c &\mbox{if $c \in \Nglob(\Rglob)$.}\\
    \reas_c &\mbox{if $c \in \Nloc(\Rloc) - \Nglob(\Rglob)$ and by Lemma \ref{lem:presence}.}\\
    l_c &\mbox{Otherwise}.
  \end{cases}
  $$
  Now, observe that the solution $M$ differs from $\local$ by at most $2/\eps$ centers.
  Thus, by $1/\eps$-local optimality we have 
  $\cost(\local) \le \cost(M)$. Summing over all clients and simplifying,
  we obtain 
  $$\sum_{c \in \Nglob(\Rglob)} l_c + 
  \sum_{c \in \Nloc(\Rloc) - \Nglob(\Rglob)} l_c \le 
  \sum_{c \in \Nglob(\Rglob) } g_c +
  \sum_{c \in \Nloc(\Rloc) - \Nglob(\Rglob)} \reas_c.$$  
  The lemma follows by combining with Lemma \ref{lem:reas}.
\end{proof}

We now analyze the cost of the clients served by a center of $\local$ 
that has degree greater than $\eps^{-1}$ in $\Gamma$.
The argument is very similar.

\begin{lemma}
 \label{lem:cost:>p}
 For any pair $(\Rloc, \Rglob) \in S_{>\eps^{-1}}$ we have that 
 $$\sum_{c \in \Nglob(\Rglob) 
 } l_c \le \sum_{c \in \Nglob(\Rglob)
 } g_c + 2(1+\eps)\sum_{\Nloc(\Rloc) 
 } g_c.$$  
\end{lemma}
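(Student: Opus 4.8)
The plan is to mimic the proof of Lemma \ref{lem:cost:<=p} but with a more careful accounting of the reassignment cost, since here we cannot simply swap in all of $\Rglob$: the pair $(\Rloc, \Rglob)$ has $|\Rglob| = |\Ngam| > 1/\eps$ centers in $\globalSt$ but only $|\Rloc| = |\Ngam|$ centers total (one center $\ell$ of degree $>1/\eps$ together with $|\Ngam|-1$ degree-zero centers of $\localt_0$), so the mixed solution $M = \local \setminus \Rloc \cup \Rglob$ differs from $\local$ by $2|\Ngam| > 2/\eps$ centers, and $1/\eps$-local optimality does \emph{not} directly apply. First I would instead argue via a \emph{fractional} or \emph{averaging} swap: consider, for each $f \in \Rglob = \Ngam(\ell)$, the candidate mixed solution $M_f = \local \setminus \{\ell\} \cup \{f\}$ (which differs from $\local$ by exactly two centers, so $1/\eps$-local optimality applies and gives $\cost(\local) \le \cost(M_f)$). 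In $M_f$ we can reassign: every client of $\Nglob(f)$ to $f$ at cost $g_c$; every client of $\Nloc(\ell)$ either to $f$ (if it lies in $\Nglob(f)$) or, by Lemma \ref{lem:reas}, reassigned at cost $\reas_c \le l_c + 2g_c$; and every other client keeps its center in $\local$ at cost $l_c$.

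Next I would average these $|\Ngam|$ inequalities over all $f \in \Ngam(\ell)$. Summing $\cost(\local) \le \cost(M_f)$ over $f$ and dividing by $|\Ngam|$, the clients of $\Nglob(\Rglob) = \bigcup_{f \in \Ngam(\ell)} \Nglob(f)$ each appear in exactly one term $\Nglob(f)$ and thus contribute $\tfrac{1}{|\Ngam|}g_c$ on average plus $\tfrac{|\Ngam|-1}{|\Ngam|} l_c$ from the other terms — but more directly, after cancellation of the common $l_c$ terms on both sides, what survives on the right is: the clients of $\Nglob(\Rglob)$ contributing $g_c$ (in the one term where their serving center is swapped in) versus $l_c$ elsewhere, and the clients of $\Nloc(\Rloc) = \Nloc(\ell)$ contributing a reassignment cost $\reas_c \le l_c + 2g_c$ in every one of the $|\Ngam|$ terms. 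The key point is that $\Nloc(\ell)$ is reassigned in \emph{every} term, so its per-client contribution after averaging is still $l_c + 2g_c$, giving the $+2\sum_{\Nloc(\Rloc)} g_c$ term; meanwhile each client of $\Nglob(\Rglob)$ gets its $l_c$ replaced by $g_c$ in a $1/|\Ngam|$ fraction of the terms and kept at $l_c$ in the remaining $1 - 1/|\Ngam|$ fraction, so its net saving is only a $1/|\Ngam|$-fraction of $(l_c - g_c)$. Rearranging and using $|\Ngam| > 1/\eps$, i.e. $1/|\Ngam| < \eps$, turns the coefficient into the claimed $2(1+\eps)$ after absorbing a $\sum_{\Nloc(\Rloc)} g_c$-type slack.

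The main obstacle — and the part requiring care — is getting the constant exactly right when the $l_c$ terms fail to cancel cleanly: since each $c \in \Nglob(\Rglob)$ is served in $\local$ by some center in $\Nloc(\Rloc)$ or outside it, one must track whether $\Nglob(\Rglob) \cap \Nloc(\Rloc)$ is double-counted, and bound $\sum_{c \in \Nglob(\Rglob) \cap \Nloc(\Rloc)} l_c$ in terms of $\sum_{\Nloc(\Rloc)} g_c$ using Lemma \ref{lem:reas} again. I expect the clean way is: move all $l_c$ over $\Nglob(\Rglob)$ to the left, all $g_c$ over $\Nglob(\Rglob)$ to the right, note the $\Nloc(\Rloc)$ clients contribute $\reas_c - l_c \le 2g_c$ each on the right in every averaged term, and the "error" from incomplete cancellation of $\Nglob(\Rglob)$ clients is at most $\tfrac{1}{|\Ngam|}\sum_{c \in \Nglob(\Rglob)} l_c \le \eps \sum_{c \in \Nglob(\Rglob)} l_c$, which can then be folded back into the left-hand side (changing $\sum l_c$ to $(1-\eps)\sum l_c$) or, after a further application of $l_c \le \reas_c$ style bounds on the relevant subset, into the $2(1+\eps)\sum_{\Nloc(\Rloc)} g_c$ term. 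Then Lemma \ref{lem:cost:>p} follows, and combining Lemmas \ref{lem:cost:<=p} and \ref{lem:cost:>p} by summing over all pairs in $S_{\le \eps^{-1}} \cup S_{>\eps^{-1}}$ (whose $\Rloc$'s partition $\localt$ and whose $\Rglob$'s partition $\globalSt$, by Lemma \ref{lem:presence}) will yield Theorem \ref{thm:main:perturbation}.
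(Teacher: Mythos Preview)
Your proposal has a genuine gap in the reassignment step. You propose swaps $M_f = \local \setminus \{\hat\ell\} \cup \{f\}$ that remove the \emph{high-degree} center $\hat\ell$. But then a client $c \in \Nloc(\hat\ell)$ whose $\globalS$-center is some $f' \in \Rglob = \Ngam(\hat\ell)$ with $f' \neq f$ has no valid bound: its local center $\hat\ell$ is gone, its global center $f'$ is not in $M_f$, and the $\reas_c$ mechanism of Lemma~\ref{lem:reas} would reassign it precisely to the $\localt$-center closest to $f'$, which is $\hat\ell$ itself --- the center you just removed. So Lemma~\ref{lem:reas} is inapplicable for exactly the clients in $\Nloc(\hat\ell)\cap\Nglob(\Rglob\setminus\{f\})$, and there is no control on their cost in $M_f$. (A secondary issue: you write $\Nloc(\Rloc)=\Nloc(\ell)$, but $\Rloc$ also contains $|\Ngam|-1$ degree-zero centers; this is minor compared to the above.)

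The paper's argument avoids this by doing the opposite: it \emph{keeps} $\hat\ell$ and instead removes the degree-zero centers of $\Rloc$ one at a time. For each $\ell \in \Rloc\setminus\{\hat\ell\}$ it pairs $\ell$ with a distinct $f(\ell)\in\Rglob$ and considers $M^\ell=\local\setminus\{\ell\}\cup\{f(\ell)\}$. Since $\ell$ has degree $0$ in $\Gamma$, every client of $\Nloc(\ell)$ has its $\reas_c$ target different from $\ell$ and hence still present in $M^\ell$; this is why Lemma~\ref{lem:reas} applies cleanly. Summing these $|\Ngam|-1$ swaps accounts for all of $\Rglob$ except one leftover $\tilde f$; that one is handled by pairing $\tilde f$ with the $\ell^*\in\Rloc\setminus\{\hat\ell\}$ of minimum reassignment cost, so the extra reassignment is at most a $1/(|\Ngam|-1)\le\eps$ fraction of the total, yielding the factor $2(1+\eps)$. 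The key structural idea you are missing is that the degree-zero centers are the ``safe'' ones to delete.
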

\begin{proof}
  Consider the center $\hat \ell \in \Rloc$ that has in-degree 
  greater than $\eps^{-1}$.
  Let $\hat L = \Rloc \setminus \{\hat \ell\}$.
  For each $\ell \in \hat L$, we associate a center $f(\ell)$ in $\Rglob$ in such
  a way that each $f(\ell) \neq f(\ell')$, for $\ell \neq \ell'$.
  Note that this is possible since $|\hat L| = |\Rglob| - 1$. Let $\tilde f $ be the 
  center of $\Rglob$ that is not associated with any center of $\hat L$.

  Now, for each center $\ell$ of $\hat L$ we consider the mixed solution
  $M^{\ell} = \local - \{\ell\} \cup \{f(\ell)\}$.
  For each client $c$, we bound its cost $m^{\ell}_c$ in solution $M^{\ell}$.
  We have 
  $$m^{\ell}_c =
  \begin{cases}
    g_c &\mbox{if $c \in \Nglob(f(\ell))$.}\\
    \reas_c &\mbox{if $c \in \Nloc(\ell) - \Nglob(f(\ell))$ and by Lemma \ref{lem:presence}.}\\
    l_c &\mbox{Otherwise.}
  \end{cases}
  $$
  Summing over all center $\ell \in \hat L$, 
  we have by $\eps^{-1}$-local optimality
  \begin{align}
    \label{eq:bdcost1}
    \sum_{c \in \Nglob(\Rglob) - \Nglob(\tilde{f})} l_c 
    + \sum_{\ell \in \Rloc} 
    \sum_{c \in \Nloc(\ell)} l_c 
    &\le \nonumber \\ 
    \sum_{c \in \Nglob(\Rglob) - \Nglob(\tilde{f}) } g_c
    &+ \sum_{\ell \in \Rloc} 
    \sum_{c \in \Nloc(\ell)} \reas_c.
  \end{align}
  
  We now complete the proof of the lemma by analyzing the cost of the clients in $\Nglob(\tilde f)$.
  We consider the center $\ell^* \in \hat L$ that minimizes the reassignment cost of its clients.
  Namely, the center $\ell^*$ such that $\sum_{c \in \Nloc(\ell^*)} \reas_c$ is minimized.
  We then consider the solution $M^{(\ell^*,\tilde f)} = \local - \{\ell^*\} \cup \{\tilde f\}$.
  For each client $c$, we bound its cost $m^{(\ell^*,\tilde f)}_c$ in solution $M^{(\ell^*, \tilde f)}$.
  We have 
  $$m^{(\ell^*, \tilde f)}_c \le
  \begin{cases}
    g_c &\mbox{if $c \in \Nglob(\tilde f) $.}\\
    \reas_c &\mbox{if $c \in \Nloc(\ell^*) - \Nglob(\tilde f )$ and by Lemma \ref{lem:presence}.}\\
    l_c &\mbox{Otherwise.}
  \end{cases}
  $$
  Thus, summing over all clients $c$, we have by local optimality
  \begin{equation}
    \label{eq:bdcost2}
    \sum_{c \in \Nglob(\tilde f) } l_c     +
    \sum_{c \in \Nloc(\ell^*) \setminus \Nglob(f(\ell^*) )} l_c 
    \le 
    \sum_{c \in \Nglob(\tilde f) } g_c +
    \sum_{c \in \Nloc(\ell^*) \setminus \Nglob(f(\ell^*) )} \reas_c.
  \end{equation}  
  By Lemma \ref{lem:reas}, combining Equations \ref{eq:bdcost1} and \ref{eq:bdcost2} and averaging over
  all centers of $\hat L$ we have
  $$\sum_{c \in \Nglob(\Rglob) } l_c \le 
  \sum_{c \in \Nglob(\Rglob) } g_c + 2(1+\eps) \sum_{\Nloc(\Rloc) 
  } g_c.$$
\end{proof}

We now turn to the proof of Theorem \ref{thm:main:perturbation}.
\begin{proof}[Proof of Theorem \ref{thm:main:perturbation}]
  Observe first that for any $c \in \Nloc(\localt) \setminus \Nglob(\globalSt)$, we have $l_c \le g_c$.
  This follows from the fact that the center that serves $c$ in $\globalS$ is in $S$ and so in $\local$
  and thus, we have  $l_c \le g_c$.
  Therefore 
  \begin{equation}
    \label{eq:costofNS0}
    \sum_{c \in \Nloc(\localt)  \setminus \Nglob(\globalSt)
    } l_c \le \sum_{c \in \Nloc(\localt)  \setminus \Nglob(\globalSt)
    } g_c.
  \end{equation}

  
  We now sum the equations of Lemmas \ref{lem:cost:<=p} and \ref{lem:cost:>p} over all pairs
  and obtain
  \begin{align*}
    \sum\limits_{(\Rloc,\Rglob)} \sum\limits_{c\in \Nglob(\Rglob) \cup \Nloc(\Rloc)} l_c 
    &\le \sum\limits_{(\Rloc,\Rglob)}\left( \sum\limits_{c\in \Nglob(\Rglob) \cup \Nloc(\Rloc) } g_c   + (2+2\eps)  \sum_{\Nloc(\Rloc)  
      } g_c \right)\\
    \sum\limits_{c \in \Nglob(\globalSt) \cup \Nloc(\localt)} l_c &\le \sum_{c \in \Nglob(\globalSt) \cup \Nloc(\localt)} g_c + 
                                                                                (2+2\eps) \sum_{c \in \Nloc(\localt) 
                                                } g_c.\\
  \end{align*}
  Therefore,
$$ 
\sum\limits_{c \in \Nglob(\globalSt) - \Nloc(\localt)} l_c + \sum\limits_{ \Nloc(\localt)} l_c \le \sum_{c \in \Nglob(\globalSt) - \Nloc(\localt)} g_c + 
(3+2\eps) \sum_{c \in \Nloc(\localt)
} g_c.
$$
\end{proof}

Additionally, we show that the analysis is tight (up to a $(1+\eps)$ factor):
\begin{proposition}
  \label{prop:LB}
For any $\varepsilon>0$, there exists an infinite family of $3-\varepsilon$-perturbation-resilient instances such that for any constant $\eps > 0$,
  there exists a locally optimal solution that has cost at least $3 \opt$.
\end{proposition}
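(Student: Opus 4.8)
The goal is to exhibit, for every small $\varepsilon>0$, an instance that is $(3-\varepsilon)$-perturbation-resilient yet admits a locally optimal solution of cost at least $3\,\opt$. The natural approach is to build a small ``gadget'' — essentially a single cluster plus a few carefully placed auxiliary centers — in which a local optimum is forced to use a center that is a factor arbitrarily close to $3$ worse than the optimal center, while the global structure still satisfies the strong perturbation-resilience requirement. Since Theorem~\ref{thm:main:perturbation} gives a $(3+2\eps)$-type guarantee when $\globalS=C^*$ and $\localt$ has small degree, the lower bound instance must make the ``bad'' center receive many clients, so that the averaging in Lemma~\ref{lem:cost:>p} is essentially tight; this tells us the gadget should have one optimal center $c^*$, one ``decoy'' center $\hat\ell$ that will appear in the local optimum, and a large set of clients arranged so that moving from $\{c^*\}$ to $\{\hat\ell\}$ (or any single swap) does not decrease the cost.

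First I would describe the point configuration explicitly: take $N$ clients placed symmetrically so that $\dist(x,c^*)=1$ for all of them, and place $\hat\ell$ at distance (close to) $3$ from $c^*$ but such that $\dist(x,\hat\ell)\le$ (something slightly less than) $3$ for each client — for instance, put the clients on a sphere/simplex around $c^*$ and $\hat\ell$ ``on the far side'' so triangle inequality is tight. One then adds $k-1$ faraway, mutually distant ``trivial'' clusters (each a single coincident client/center pair) so that $k$ is as desired and these play no role. The candidate local optimum is $\local=\{\hat\ell\}\cup(\text{the }k-1\text{ trivial centers})$; its cost is $\approx 3N$, while $\opt$ uses $c^*$ and has cost $N$, giving the ratio $3$. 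The main verification has two parts: (i) \emph{local optimality} — any swap of up to $2/\eps$ centers from $\local$ fails to improve; since the only nontrivial center available is $c^*$ (or other candidates we can choose to put at distance $\ge 3$ from every client, or simply take $F=A\cup\{\hat\ell, c^*\}$), and because swapping in $c^*$ while swapping out $\hat\ell$ changes cost from $3N$ to $N$ — wait, that \emph{would} improve it — so the configuration must be arranged so that $c^*$ is \emph{not} a legal center, or the clients are split so no single swap helps. The cleanest fix is to make $c^*\notin F$: let $F$ contain only the clients themselves and $\hat\ell$; then the optimal solution is realized by some client-center, and one tunes distances (a ``star'' metric with the clients as leaves and $\hat\ell$ as a special point) so that the optimum over $F$ has cost $\approx N$ while every center reachable by a bounded swap from $\local$ is no better than $\hat\ell$.

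Second, I would verify \emph{$(3-\varepsilon)$-perturbation resilience}: one must show that for every distance perturbation $\cost\le\cost'\le(3-\varepsilon)\cost$, the unique optimum is still the intended $C^*$. This reduces to checking that the optimal solution beats every other candidate solution by more than a factor $3-\varepsilon$ in the original metric — i.e., every non-optimal set of $k$ centers costs $\ge(3-\varepsilon)\,\opt$, with the tightest competitor being exactly the ``decoy'' solution $\local$ of cost $\approx 3\,\opt$. This is where the ``$3$'' and the ``$-\varepsilon$'' come from simultaneously: the gap between $\opt$ and the best non-optimal solution is essentially $3$, so resilience holds up to (just below) $3$, and the same gap is what makes $\local$ a $3\,\opt$ local optimum. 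The main obstacle is engineering a single metric that does all of this at once: the triangle inequality must be respected, the $k-1$ trivial clusters must be far enough that no beneficial cross-cluster swap exists, and the client/decoy distances must be simultaneously close to $3$ from $c^*$'s cost (for resilience and for the cost ratio) while leaving no profitable local move; balancing these with a clean closed-form metric (e.g.\ a weighted star or a small Euclidean-realizable point set) is the crux of the construction. Once the metric is fixed, local optimality is a finite case check over the $O(1)$-many swap types and resilience is a direct inequality $\cost'(\local)\ge\cost'(C^*)$ following from $\cost(\local)\ge(3-\varepsilon)\cost(C^*)$ and $\cost'\le(3-\varepsilon)\cost$ on one side, $\cost'\ge\cost$ on the other — modulo the boundary clients whose distances were scaled, which must be tracked carefully exactly as in the proof of Theorem~\ref{thm:perturbation}.
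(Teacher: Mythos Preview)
Your proposal has a genuine gap: the single-gadget approach cannot produce a locally optimal solution with ratio~$3$. You noticed this yourself (``swapping in $c^*$ while swapping out $\hat\ell$ changes cost from $3N$ to $N$ --- wait, that \emph{would} improve it''), but the patch you propose --- removing $c^*$ from $F$ and ``tuning distances'' --- does not fix it. If the optimum has cost $\approx N$, then \emph{some} center in $F$ realizes cost $\approx N$ on that cluster, and the single swap $\hat\ell\to(\text{that center})$ improves $\local$. Adding $k-1$ far-away trivial clusters changes nothing: they are decoupled from the gadget, so the same single swap still helps. No choice of metric on one nontrivial cluster can make a factor-$3$ bad center locally optimal against a neighborhood of size $\ge 1$.

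What is missing is an \emph{interleaving} of all $k$ clusters. In the paper's construction there are $k$ optimal centers $O_1,\dots,O_k$, $k$ local centers $L_1,\dots,L_k$, and $k^2$ clients $C_{i,j}$ with $\dist(O_i,C_{i,j})\approx 1$ and $\dist(L_j,C_{i,j})=3$ (all other distances by shortest paths). Each optimal center serves a ``row'' $\{C_{i,1},\dots,C_{i,k}\}$; each local center serves a ``column'' $\{C_{1,j},\dots,C_{k,j}\}$. When you swap out $s$ local centers and swap in $s$ optimal centers, only $s^2$ of the $sk$ displaced clients land near a new optimal center; the remaining $s(k-s)$ must be reassigned to a surviving $L_{j'}$ at distance $\approx 5$, which is \emph{worse} than the $3$ they paid before. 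For $k$ large relative to $s$ this reassignment penalty dominates the gain, so $L$ is locally optimal for any constant neighborhood size. Your construction has no analogue of this cross-cluster reassignment cost, and without it local optimality fails immediately. The perturbation-resilience check then comes essentially for free from the same arithmetic (the second-best solution is $L$, at cost $3k^2$ versus $\opt\approx k^2$), but you never reach that step because the local-optimality half of the argument does not go through.
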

\begin{proof}
Consider a tripartite graph with nodes $O$, $C$, and $L$, where $O$ is the set of optimal centers, $L$ is the set of centers of a locally optimal solution, and $C$ is the set of clients. 
We have $|O|=|L|=k$ and $|C|=k^2$.
We specify the distances as follows.
First, assume some arbitrary but fixed ordering on the elements of $O$, $L$, and $C$.
Then $\dist(O_i)(C_{i,j}) = 1+\eps/3$ and $\dist(L_i)(C_{j,i})=3$ for any $i,j\in [k]$.
All other distances are induced by the shortest path metric along the edges of the graph, i.e. $\dist(O_i)(C_{j,\ell}) = 7+\eps/3$ and $\dist(L_i)(C_{j,\ell}) = 5+2\eps/3$ for $j,\ell\neq i$.
We first note that $O$ is indeed the optimal solution with a cost of $k^2\cdot (1+\varepsilon/3)$.
Multiplying the distances $\dist(O_i)(C_{i,j})$ by a factor of $(3-\varepsilon)$ for all $i\in [k]$ and $j\mod k = i$, still ensures that $O$ is an optimal solution with a cost of $k^2\cdot (1+\varepsilon/3)\cdot (3-\varepsilon)= k^2\cdot 3(1-\varepsilon^2)$, which shows that the instance is $(3-\varepsilon)$-perturbation resilient.

What remains to be shown is that $L$ is locally optimal.
Assume that we swap out $s$ centers. Due to symmetry, we can consider the solution $\{O_i| i \in [s]\}\cup \{L_i | i\in [k]\setminus [s]\}$.
Each of centers $\{O_i| i \in [s]\}$ serve $k$ clients with a cost of $k\cdot s\cdot (1+\varepsilon/3)$.
The remaining clients are served by $\{L_i | i\in [k]\setminus [s]\}$, as $5+2\varepsilon/3<7+\varepsilon/3$.
The cost amounts to $s\cdot(k-s)\cdot 5+2\varepsilon/3$ for the clients that get reassigned and $(k-s)^2\cdot 3$ for the remaining clients. Combining these three figures gives us a cost of $k^2\cdot 3 + ks\varepsilon -s^2\cdot(2+2\varepsilon/3)> k^2\cdot 3+ ks\varepsilon+s^2\cdot 3$.
For $k>\frac{3s}{\varepsilon}$, this is greater than $k^2 3$, the cost of $L$. 
\end{proof}

\section{Spectral Separability}
\label{sec:euclid}

In this section we will study the spectral separability condition for the Euclidean $k$-means problem.

\begin{defn}[Spectral Separation \cite{KuK10}\footnote{The proximity condition of Kumar and Kannan~\cite{KuK10} implies the spectral separation condition.}]
\label{defn:Ssep}
Let $(A,\mathbb{R}^d, ||\cdot||^2, k)$ be an input for $k$-means clustering in Euclidean space and let $\{C^*_1,\ldots C^*_k\}$ denote an optimal clustering of $A$ with centers $S=\{c^*_1,\ldots c^*_k\}$. 
Denote by $C$ an $n\times d$ matrix such that the row $C_{i} = \underset{c_j^*\in S}{\text{argmin }}||A_i-c^*_j||^2$.
Denote by $||\cdot||_2$ the spectral norm of a matrix.
Then $\{C^*_1,\ldots C^*_k\}$ is \emph{$\gamma$-spectrally separated}, if for any pair $(i,j)$ the following condition holds:
\[||c^*_i-c^*_j|| \geq \gamma\cdot\left(\frac{1}{\sqrt{|C^*_i|}} + \frac{1}{\sqrt{|C^*_j|}}\right) ||A-C||_2.\]
\end{defn}

Nowadays, a standard preprocessing step in Euclidean $k$-means clustering is to project onto the subspace spanned by the rank $k$-approximation.
Indeed, this is the first step of the algorithm by Kumar and Kannan~\cite{KuK10} (see Algorithm~\ref{alg:specmeans}).
\begin{algorithm}
  \caption{$k$-means with spectral initialization~\cite{KuK10}}
  \label{alg:kmeanspec}
  \begin{algorithmic}[1]
    \State{Project points onto the best rank $k$ subspace}
    \State{Compute a clustering $C$ with constant approximation factor on the projection}
    \State{Initialize centroids of each cluster of $C$ as centers in the original space} 
    \State{Run Lloyd's $k$-means until convergence}
  \end{algorithmic}
    \label{alg:specmeans}
\end{algorithm}

In general, projecting onto the best rank $k$ subspace and computing a constant approximation on the projection results in a constant approximation
in the original space.
Kumar and Kannan~\cite{KuK10} and later Awasthi and Sheffet~\cite{AwS12} gave tighter bounds if the spectral separation is large enough. 
Our algorithm omits steps 3 and 4.
Instead, we project onto slightly more dimensions and subsequently use Local Search as the constant factor approximation in step 2.
To utilize Local Search, we further require a candidate set of solutions, which is described in Section~\ref{sec:euclidperturb}.
For pseudocode, we refer to Algorithm~\ref{alg:specls}.
Our main result is to show that, given spectral separability, this algorithm is PTAS for $k$-means (Theorem~\ref{thm:spectralptas}).

 \begin{algorithm}
   \caption{SpectralLS}
   \begin{algorithmic}[1]
     \State{Project points $A$ onto the best rank $k/\varepsilon$ subspace}
     \State{Embed points into a random subspace of dimension $O(\varepsilon^{-2}\log n)$}
     \State{Compute candidate centers (Corollary~\ref{cor:discrete})}
     \State{Local Search($\Theta(\varepsilon^{-4})$)}
    	\State{Output clustering}
   \end{algorithmic}
     \label{alg:specls}
 \end{algorithm}

 \begin{theorem}
 \label{thm:spectralptas}
 Let $(A,\mathbb{R}^d,||\cdot||^2,k)$ be an instance of Euclidean $k$-means clustering with optimal clustering $C=\{C_1^*, \ldots C_k^* \}$ and centers $S=\{c_1^*,\ldots c_k^*\}$. 
 If $C$ is more than $3\sqrt{k}$-spectrally separated, then Algorithm~\ref{alg:specls} is a polynomial time approximation scheme.
 \end{theorem}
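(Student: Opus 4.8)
The plan is to derive Theorem~\ref{thm:spectralptas} from the distribution-stability analysis of Theorem~\ref{thm:beta-delta}. I would show that after the preprocessing in Steps~1--3 of Algorithm~\ref{alg:specls}, the (snapped image of the) optimal clustering becomes a $(\beta,\delta)$-distribution-stable reference with $\beta=\Theta(\eps)$ and a constant $\delta<1$. Theorem~\ref{thm:beta-delta} then applies with neighbourhood size $4\eps^{-3}\beta^{-1}+O(\eps^{-2}\beta^{-1})=\Theta(\eps^{-4})$ --- exactly the size used in Step~4 --- and outputs a solution whose cost is $(1+\eta\eps)$ times the cost of the reference clustering; since the preprocessing is cost-preserving this is $(1+O(\eps))\opt$ on the original instance. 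As every step runs in time polynomial in $n,d,k$ for fixed $\eps$, this makes the algorithm a PTAS.

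First I would handle the preprocessing bookkeeping. Step~1 (projection onto the best rank-$\lceil k/\eps\rceil$ subspace) and Step~2 (a Johnson--Lindenstrauss embedding into $O(\eps^{-2}\log n)$ dimensions) together change the $k$-means cost of \emph{every} $k$-clustering by at most a $(1\pm\eps)$ factor, so $\opt$ is preserved up to $(1\pm\eps)$ and any $(1+O(\eps))$-approximate clustering on the preprocessed instance lifts to a $(1+O(\eps))$-approximate clustering on the original one. Step~3 (Corollary~\ref{cor:discrete}) supplies a polynomial-size candidate set $F$ that contains, for each optimal cluster, a point close enough to its centroid that snapping the optimal centers $S^*$ to $F$ costs only a further $(1+\eps)$ factor and changes each induced cluster in only a $\delta$-fraction of its points. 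Hence it suffices to analyse the preprocessed instance with facility set $F$ and reference pair $(\tilde C^*,\tilde S^*)$, where $\tilde C^*$ is the image of $C^*$, $\tilde S^*\subseteq F$ the snapped centers, and $\widetilde{\opt}$ the optimal $k$-means value there.

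The structural core is to show that $\gamma$-spectral separation with $\gamma>3\sqrt k$ makes $(\tilde C^*,\tilde S^*)$ a $(\beta,\delta)$-stable reference with $\beta=\Theta(\eps)$ and $\delta<1$. Write $\tilde A,\tilde C$ for the point and center matrices of the preprocessed instance. Fix a cluster $\tilde C^*_i$, an index $j\ne i$, and let $v$ be the unit vector along $\tilde c^*_j-\tilde c^*_i$. From $\|\tilde x-\tilde c^*_j\|\ge\|\tilde c^*_i-\tilde c^*_j\|-|\langle \tilde x-\tilde c^*_i,v\rangle|$, a point $\tilde x\in\tilde C^*_i$ can violate the stability inequality for the pair $(i,j)$ only if $|\langle \tilde x-\tilde c^*_i,v\rangle|\ge\tfrac12\|\tilde c^*_i-\tilde c^*_j\|$; using the directional-variance bound $\sum_{\tilde x}\langle \tilde x-\tilde c^*_{\sigma(\tilde x)},v\rangle^2\le\|\tilde A-\tilde C\|_2^2$ (sum over all points, $\sigma(\tilde x)$ denoting the cluster of $\tilde x$) together with a separation bound $\|\tilde c^*_i-\tilde c^*_j\|\ge(1-\eps)\gamma(|C^*_i|^{-1/2}+|C^*_j|^{-1/2})\|\tilde A-\tilde C\|_2$ inherited from Step~1, the number of such points is at most $\tfrac{4\min(|C^*_i|,|C^*_j|)}{((1-\eps)\gamma)^2}$. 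Summing over the $k-1$ choices of $j$ and using $\gamma>3\sqrt k$ and $\eps<1/3$ leaves at most $\delta|C^*_i|$ bad points with $\delta=\tfrac{4}{9(1-\eps)^2}<1$. For every surviving $\tilde x$ we get $\|\tilde x-\tilde c^*_j\|^2\ge\tfrac14\|\tilde c^*_i-\tilde c^*_j\|^2\ge\tfrac{((1-\eps)\gamma)^2}{4}|C^*_j|^{-1}\|\tilde A-\tilde C\|_2^2$, and since $\tilde A-\tilde C$ has rank at most $\lceil k/\eps\rceil$ we have $\|\tilde A-\tilde C\|_2^2\ge\tfrac{\eps}{k+\eps}\|\tilde A-\tilde C\|_F^2\ge\tfrac{\eps}{k+\eps}\widetilde{\opt}$ (the reference clustering is feasible), so $\|\tilde x-\tilde c^*_j\|^2\ge\beta\,\widetilde{\opt}/|C^*_j|$ with $\beta=\Theta(\gamma^2\eps/k)=\Theta(\eps)$. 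Adjusting the two uses of ``$\eps$'' so that $\eps<\min(1-\delta,1/3)$, this is exactly $(\beta,\delta)$-distribution stability, and Theorem~\ref{thm:beta-delta} closes the argument as in the first paragraph; the small errors from snapping $S^*$ and from Johnson--Lindenstrauss are absorbed as $(1\pm\eps)$ perturbations of $\beta$ and $\delta$.

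The main obstacle is the one deferred above: transferring the spectral-separation guarantee through Step~1 with only a $(1-\eps)$ loss, i.e.\ showing $\|\tilde c^*_i-\tilde c^*_j\|\ge(1-\eps)\|c^*_i-c^*_j\|$ (the spectral norm can only shrink under the projection, so the other side of the inequality is harmless). A priori, projection onto the top $\lceil k/\eps\rceil$ singular directions of $A$ could shrink a center difference arbitrarily if the center span is nearly orthogonal to that subspace, and the union bound over cluster pairs leaves essentially no slack beyond the $3\sqrt k$ threshold, so a constant-factor loss here would break the argument. The key lemma to prove is therefore that the top-$\Theta(k/\eps)$ right singular subspace of $A$ captures the span of the optimal centers up to a $(1-\eps)$ factor --- a subspace-perturbation statement in the spirit of Davis--Kahan (or of Vempala--Wang-type analyses of well-separated mixtures), which should hold precisely because $\gamma$-spectral separation forces every inter-center distance to dominate $\|A-C\|_2$, hence to dominate the ``noise'' subspace that the projection discards. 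I expect the bulk of the technical work, and the only genuinely delicate estimate, to lie in this lemma.
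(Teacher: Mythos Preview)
Your high-level plan matches the paper's: reduce to distribution stability with $\beta=\Theta(\eps)$, then invoke Theorem~\ref{thm:beta-delta}. The execution differs in two places, and what you call the ``main obstacle'' is in fact elementary.

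\textbf{Where the paper differs.} Rather than showing that the projected image of $C^*$ is $(\beta,\delta)$-stable with a large constant $\delta$, the paper (Lemma~\ref{lem:mainspec}) builds a \emph{new} clustering $K$ on $A_m$, using the same projected centers $c_i^m=c_iV_mV_m^T$, that is fully $\beta$-stable ($\delta=0$) and has cost at most $\|A_m-XX^TA_m\|_F^2$. The construction is a three-case reassignment of each point depending on where its rank-$m$ and rank-$k$ projections fall relative to the projected centers; the outcome is that \emph{every} point of $K_i$ is at distance at least $\tfrac{\gamma-3}{2}(\Delta_i+\Delta_j)$ from every wrong center $c_j^m$ (here $\Delta_i=\sqrt{k/|C_i|}\,\|A-XX^TA\|_2$). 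Your $(\beta,\delta)$ route avoids this reassignment and is arguably simpler, but pays with $\delta$ close to $1$ when $\gamma$ is close to $3\sqrt{k}$; this still feeds into Theorem~\ref{thm:beta-delta} but with less slack.

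\textbf{The center-distance step is elementary.} You flag $\|\tilde c_i-\tilde c_j\|\ge(1-\eps)\|c_i-c_j\|$ as the crux and propose Davis--Kahan. The paper instead proves the one-line Lemma~\ref{lem:centermove}: for any $m\ge k$, each centroid moves by at most $\sqrt{k/|C_i|}\,\|A-XX^TA\|_2$ under the rank-$m$ projection, because $|C_i|\,\|c_i-c_i^m\|^2\le\|XX^T(A-A_m)\|_F^2\le k\,\|A-A_m\|_2^2\le k\,\|A-XX^TA\|_2^2$. By the triangle inequality this yields $\|c_i^m-c_j^m\|\ge(\gamma-\sqrt{k})\bigl(|C_i|^{-1/2}+|C_j|^{-1/2}\bigr)\|A-C\|_2$, an \emph{additive} $\sqrt{k}$ loss in the separation parameter rather than a multiplicative $(1-\eps)$. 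Since $\gamma>3\sqrt{k}$, this is all you need for your counting argument and for the $\beta$-bound; no subspace-perturbation theory is required.

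\textbf{Ordering of the analysis.} You run the spectral-norm argument on the instance \emph{after} Step~2. Johnson--Lindenstrauss preserves pairwise distances, not spectral norms, so your directional-variance bound $\sum_x\langle\tilde x-\tilde c_{\sigma(x)},v\rangle^2\le\|\tilde A-\tilde C\|_2^2$ is not controlled at that stage. The paper establishes distribution stability on $A_m$ (after Step~1 only) and then applies Corollary~\ref{cor:discrete}, which bundles JL with the $\eps$-net discretization and is proved to preserve $\beta$-stability up to a constant factor. Move your stability argument to before Step~2.
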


We first recall the basic notions and definitions for Euclidean $k$-means.
Let $A\in \mathbb{R}^{n\times d}$ be a set of points in $d$-dimensional Euclidean space, where the row $A_i$ contains the coordinates of the $i$th point. 
The singular value decomposition is defined as $A=U\Sigma V^T$, where $U\in \mathbb{R}^{n\times d}$ and $V\in \mathbb{R}^{d\times d}$ are orthogonal and $\Sigma\in \mathbb{R}^{d\times d}$ is a diagonal matrix containing the singular values where per convention the singular values are given in descending order, i.e. $\Sigma_{1,1}=\sigma_1 \geq \Sigma_{2,2}=\sigma_2 \geq \ldots \Sigma_{d,d}=\sigma_{d}$. 
Denote the Euclidean norm of a $d$-dimensional vector $x$ by $||x||=\sqrt{\sum_{i=1}^d x_i^2}$.
The spectral norm and Frobenius norm are defined as $||A||_2 = \sigma_1$ and $||A||_F = \sqrt{\sum_{i=1}^{d} \sigma_i^2}$, respectively. 

The best rank $k$ approximation $\underset{\text{rank}(X)=k}{\min} ||A-X||_F$ is given via $A_k=U_k\Sigma V^T = U\Sigma_k V^T = U\Sigma V_k^T$, where $U_k$, $\Sigma_k$ and $V_k^T$ consist of the first $k$ columns of $U$, $\Sigma$ and $V^T$, respectively, and are zero otherwise.
The best rank $k$ approximation also minimizes the spectral norm, that is $||A-A_k||_2 = \sigma_{k+1}$ is minimal among all matrices of rank $k$.
The following fact is well known throughout $k$-means literature and will be used frequently throughout this section.
\begin{fact}
\label{fact:magicformula}
Let $A$ be a set of points in Euclidean space and denote by $c(A)=\frac{1}{|A|}\sum_{x\in A}x$ the centroid of $A$. Then the $1$-means cost of any candidate center $c$ can be decomposed via
\[\sum_{x\in A}||x-c||^2 = \sum_{x\in A}||x-c(A)||^2 + |A|\cdot ||c(A)-c||^2\]
and
\[\sum_{x\in A}||x-c(A)||^2 = \frac{1}{2 \cdot |A|}\sum_{x\in A}\sum_{y\in A}||x-y||^2.\]
\end{fact}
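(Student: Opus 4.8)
The plan is to prove both identities by expanding the squared Euclidean norms and exploiting the single defining property of the centroid, namely $\sum_{x\in A}(x-c(A)) = 0$, which holds because $\sum_{x\in A}x = |A|\cdot c(A)$. Everything else is bookkeeping; this is the classical parallel-axis (bias–variance) decomposition together with its pairwise-distance reformulation.

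For the first identity, I would write, for each $x\in A$, $x-c = (x-c(A)) + (c(A)-c)$ and expand:
\[\|x-c\|^2 = \|x-c(A)\|^2 + 2\langle x-c(A),\,c(A)-c\rangle + \|c(A)-c\|^2.\]
Summing over $x\in A$: the middle term becomes $2\big\langle \sum_{x\in A}(x-c(A)),\,c(A)-c\big\rangle = 0$ by the centroid property; the first term sums to $\sum_{x\in A}\|x-c(A)\|^2$; and the last term sums to $|A|\cdot\|c(A)-c\|^2$ since it does not depend on $x$. This is exactly the claimed decomposition, and it holds for an arbitrary candidate center $c$.

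For the second identity, the cleanest route is to apply the first identity with the candidate center taken to be each point $y\in A$ in turn, and then sum over $y$:
\[\sum_{y\in A}\sum_{x\in A}\|x-y\|^2 = \sum_{y\in A}\Big(\sum_{x\in A}\|x-c(A)\|^2 + |A|\cdot\|c(A)-y\|^2\Big) = 2|A|\sum_{x\in A}\|x-c(A)\|^2,\]
where in the last step the first inner sum contributes $|A|\sum_{x\in A}\|x-c(A)\|^2$ (it is independent of $y$) and the second contributes the same quantity after relabeling $y$ as $x$. Dividing by $2|A|$ gives the result. Equivalently, one can expand $\|x-y\|^2 = \|x-c(A)\|^2 + \|y-c(A)\|^2 - 2\langle x-c(A),\,y-c(A)\rangle$ directly and note that the double sum of the inner-product term factors as $\big\langle \sum_{x\in A}(x-c(A)),\,\sum_{y\in A}(y-c(A))\big\rangle = 0$.

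There is no genuine obstacle: the only point deserving attention is that every vanishing cross term is an instance of the same first-moment identity $\sum_{x\in A}(x-c(A))=0$, so both parts collapse to routine algebra once the norms are expanded, and no metric or dimensional hypotheses beyond working in an inner-product space are used.
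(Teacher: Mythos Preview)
Your proof is correct; both identities follow exactly as you describe from the single first-moment identity $\sum_{x\in A}(x-c(A))=0$. The paper does not actually prove this fact---it is stated as ``well known throughout $k$-means literature''---so there is nothing to compare against, and your argument is the standard one.
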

Note that the centroid is the optimal $1$-means center of $A$. 
For a clustering $C=\{C_1,\ldots C_k\}$ of $A$ with centers  $S=\{c_1,\ldots c_k\}$, the cost is then $\sum_{i=1}^k \sum_{p\in C_i}||p-c_i||^2$.
Further, if $c_i = \frac{1}{|C_i|} \sum_{p\in C_i} p$, we can rewrite the objective function in matrix form by associating the $i$th point with the $i$th row of some matrix $A$ and using the cluster matrix $X\in \mathbb{R}^{n\times k}$ with $X_{i,j} = \begin{cases} \frac{1}{\sqrt{|C_j^*|}} & \text{if } A_i\in C_j^* \\
0 & \text{else} \end{cases}$ to denote membership.
Note that $X^TX = I$, i.e. $X$ is an orthogonal projection and that $||A-XX^TA||_F^2$ is the cost of the optimal $k$-means clustering.
$k$-means is therefore a constrained rank $k$-approximation problem.

We first restate the separation condition.
\begin{defn}[Spectral Separation]
Let $A$ be a set of points and let $\{C_1,\ldots C_k\}$ be a clustering of $A$ with centers $\{c_1,\ldots c_k\}$. Denote by $C$ an $n\times d$ matrix such that $C_{i} = \underset{j\in \{1,\ldots,k\}}{\text{argmin }}||A_i-c_j||^2$. Then $\{C_1,\ldots C_k\}$ is \emph{$\gamma$ spectrally separated}, if for any pair of centers $c_i$ and $c_j$ the following condition holds:
\[||c_i-c_j|| \geq \gamma\cdot\left(\frac{1}{\sqrt{|C_i|}} + \frac{1}{\sqrt{|C_j|}}\right) ||A-C||_2.\]
\end{defn}

The following crucial lemma relates spectral separation and distribution stability.

\begin{lemma}
\label{lem:mainspec}
For a point set $A$, let $C=\{C_1,\ldots ,C_k\}$ be an optimal clustering with centers $S=\{c_1,\ldots ,c_k\}$ associated clustering matrix $X$ that is at least $\gamma\cdot\sqrt{k}$ spectrally separated, where $\gamma>3$. 
For $\varepsilon>0$, let $A_m$ be the best rank $m= k/\varepsilon$ approximation of $A$.
Then there exists a clustering $K=\{C_1',\ldots C_2'\}$ and a set of centers $S_k$, such that
\begin{enumerate}
\item the cost of clustering $A_m$ with centers $S_k$ via the assignment of $K$ is less than $||A_m-XX^TA_m||_F^2$ and
\item $(K,S_k)$ is $\Omega((\gamma-3)^2\cdot \varepsilon)$-distribution stable.
\end{enumerate}
\end{lemma}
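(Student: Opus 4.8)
The plan is to make the clustering $K$ and the centers $S_k$ explicit, deduce condition~1 essentially for free, and reduce condition~2 to a single estimate: that the projection onto the top $m=k/\varepsilon$ singular directions barely shrinks the pairwise distances between the optimal centers. Let $P=V_mV_m^T$ be the orthogonal projection onto the span of the top $m$ right singular vectors of $A$, so $A_m=AP$. Because $C$ is optimal, each $c_i^*$ is the centroid of $C_i^*$ and $C=XX^TA$; consequently the centroid of the projected cluster $\{(A_m)_\ell:A_\ell\in C_i^*\}$ is exactly $c_i^*P$. I take $S_k=\{c_1^*P,\dots,c_k^*P\}$, write $(S_k)_i:=c_i^*P$, and let $K$ be the clustering of the rows of $A_m$ induced by $S_k$, with $C_i'$ the part assigned to $(S_k)_i$. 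Clustering $A_m$ with the partition $C^*$ and centers $S_k$ costs $\sum_i\sum_{A_\ell\in C_i^*}\|(A_m)_\ell-(S_k)_i\|^2=\|(A-C)P\|_F^2=\|A_m-XX^TA_m\|_F^2$, and replacing $C^*$ by the nearest-center partition $K$ cannot increase this; that is condition~1.

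For condition~2 the key claim is $\|(c_i^*-c_j^*)(I-P)\|\le\bigl(\tfrac{1}{\sqrt{|C_i^*|}}+\tfrac{1}{\sqrt{|C_j^*|}}\bigr)\sigma_{m+1}$. To see this, write $c_i^*-c_j^*=eC$ with the row vector $e=\tfrac{1}{|C_i^*|}\mathbf 1_{C_i^*}-\tfrac{1}{|C_j^*|}\mathbf 1_{C_j^*}$; then $(c_i^*-c_j^*)(I-P)=e\,XX^TA(I-P)$, and since $XX^T$ is an orthogonal projection, $\|XX^TA(I-P)\|_2\le\|A(I-P)\|_2=\sigma_{m+1}$, while $\|e\|=\bigl(\tfrac{1}{|C_i^*|}+\tfrac{1}{|C_j^*|}\bigr)^{1/2}\le\tfrac{1}{\sqrt{|C_i^*|}}+\tfrac{1}{\sqrt{|C_j^*|}}$. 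Since $C$ has rank at most $k$ we have $\sigma_{m+1}\le\sigma_{k+1}\le\|A-C\|_2$, so the $\gamma\sqrt k$-spectral separation of $C$ gives $\|(c_i^*-c_j^*)(I-P)\|\le\tfrac{1}{\gamma\sqrt k}\|c_i^*-c_j^*\|$, hence $\|(S_k)_i-(S_k)_j\|=\|(c_i^*-c_j^*)P\|\ge(1-\tfrac{1}{\gamma^2k})^{1/2}\|c_i^*-c_j^*\|$. Next, $\opt(A_m)\le\|(A-C)P\|_F^2\le m\|(A-C)P\|_2^2\le\tfrac{k}{\varepsilon}\|A-C\|_2^2$, because $(A-C)P$ has rank at most $m$. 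Combining these, for any $x\in C_i'$ and $j\ne i$ the nearest-center property yields $\|x-(S_k)_j\|\ge\tfrac12\|(S_k)_i-(S_k)_j\|$, and together with the separation condition and $\bigl(\tfrac{1}{\sqrt{|C_i^*|}}+\tfrac{1}{\sqrt{|C_j^*|}}\bigr)^2\ge\tfrac{1}{|C_j^*|}$ one gets $\|x-(S_k)_j\|^2\ge\tfrac14(1-\tfrac{1}{\gamma^2k})\gamma^2k\cdot\tfrac{1}{|C_j^*|}\|A-C\|_2^2\ge\tfrac14(1-\tfrac{1}{\gamma^2k})\gamma^2\varepsilon\cdot\tfrac{\opt(A_m)}{|C_j^*|}$, which is $\Omega((\gamma-3)^2\varepsilon)\cdot\opt(A_m)/|C_j^*|$ once $\gamma>3$.

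The remaining point is to rewrite this bound with $|C_j'|$ in place of $|C_j^*|$, i.e. to show that the re-clustering $K$ is faithful to $C^*$. A point of $C_j^*$ whose projection leaves $C_j'$ moves to some cell $i$ and therefore lies at distance at least $\tfrac12\|(S_k)_i-(S_k)_j\|$ from $(S_k)_j$; applying Markov's inequality to $\sum_{A_\ell\in C_j^*}\|(A_m)_\ell-(S_k)_j\|^2$, which is at most $\min(|C_j^*|,m)\|A-C\|_2^2$ since the corresponding residual submatrix has rank at most $\min(|C_j^*|,m)$ and spectral norm at most $\|A-C\|_2$, bounds the escaping fraction of $C_j^*$. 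I expect this last step to be the main obstacle: one must show that the mass shuffled between clusters in passing from $C^*$ to $K$ is small enough that $|C_j'|=\Theta(|C_j^*|)$ for every $j$, and it is precisely here that one has to spend part of the separation budget (the two triangle inequalities above together with the reassignment step), which is why the final stability parameter degrades to $\Omega((\gamma-3)^2\varepsilon)$ rather than $\Omega(\gamma^2\varepsilon)$. Once $|C_j'|=\Theta(|C_j^*|)$ is established, substituting into the displayed inequality completes the proof; and even if faithfulness of $K$ can only be obtained up to a small fraction of each cluster, the same computation yields $(\beta,\delta)$-distribution stability with $\beta=\Omega((\gamma-3)^2\varepsilon)$ and $\delta$ bounded away from $1$, which by Theorem~\ref{thm:beta-delta} is all that is needed downstream.
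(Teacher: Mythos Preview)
Your construction of $S_k$ matches the paper's, and your argument for condition~1 is correct. Your bound on $\|(c_i^*-c_j^*)(I-P)\|$ via $e\,XX^TA(I-P)$ is in fact cleaner than the paper's route (which bounds each $\|c_i-c_i^m\|$ separately by $\sqrt{k/|C_i|}\,\|A-C\|_2$ via Lemma~\ref{lem:centermove}).

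The gap is exactly where you flagged it, and your Markov argument does not close. With $K$ the Voronoi partition of $A_m$, your per-cluster bound $\sum_{A_\ell\in C_j^*}\|(A_m)_\ell-(S_k)_j\|^2\le\min(|C_j^*|,m)\,\|A-C\|_2^2$ is correct, but feeding it into Markov gives an escaping fraction of order $\min(|C_j^*|,m)/(\gamma^2 k)$. For any cluster with $|C_j^*|$ larger than a constant times $k$ this is useless: the $m$-branch yields $O(1/(\gamma^2\varepsilon))$, which exceeds $1$ for small $\varepsilon$. Your fallback to $(\beta,\delta)$-stability does not help either: the issue is not that some fraction of points fail the distance lower bound (your Voronoi half-distance bound holds for \emph{all} $x\in C_i'$), but that you cannot convert the denominator $|C_j^*|$ to $|C_j'|$ without $|C_j'|\ge c\,|C_j^*|$ for \emph{every} $j$, and that is precisely what your bound fails to give.

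The paper's fix is a detour through the rank-$k$ projection. It does \emph{not} take $K$ to be the Voronoi partition in $A_m$; instead a point $p\in C_i$ is reassigned to its nearest center in $A_m$ only if it is \emph{also} not nearest to $c_i^k$ in the rank-$k$ projection $A_k$; otherwise (``Case~3'') it stays with $c_i^m$ even though $c_i^m$ is not its nearest center. Thus every point that changes cluster already escapes in rank~$k$, and there one has $\|A_k-XX^TA\|_F^2\le 8k\,\|A-C\|_2^2$ (with $k$, not $m=k/\varepsilon$, in front), so the escaping fraction is $O(1/(\gamma-1)^2)$ independently of $\varepsilon$; this is Lemma~\ref{lem:specsize}. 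The price is that for Case~3 points the Voronoi half-distance no longer applies; instead one uses $p^k=p^mV_kV_k^T$ (from $(V_m-V_k)^TV_k=0$) to pull the rank-$k$ half-distance up to rank~$m$, losing one more $\Delta_j$ via $\|c_j^m-c_j^k\|$. It is this Case-3 bound that produces the factor $(\gamma-3)/2$ and hence the $(\gamma-3)^2$ in the statement. So your instinct that ``part of the separation budget'' must be spent here is right, but it is spent on the point-to-center distance for Case~3, not on the Markov step; and the cluster-size control comes entirely from the rank-$k$ detour, which your purely rank-$m$ argument is missing.
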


We note that this lemma would also allow us to use the PTAS of Awasthi et al.~\cite{ABS10}.
Before giving the proof, we outline how Lemma~\ref{lem:mainspec} helps us prove Theorem~\ref{thm:spectralptas}.
We first notice that if the rank of $A$ is of order $k$, then elementary bounds on matrix norm show that spectral separability implies distribution stability.
We aim to combine this observation with the following theorem due to Cohen et al.~\cite{CfLocalEMMP15}.
Informally, it states that for every rank $k$ approximation, (an in particular for every constrained rank $k$ approximation such as $k$-means clustering), 
projecting to the best rank $k/\varepsilon$ subspace is cost-preserving.

\begin{theorem}[Theorem 7 of \cite{CfLocalEMMP15}]
\label{thm:embedding}
For any $A\in \mathbb{R}^{n\times d}$, let $A'$ be the rank $\lceil k/\varepsilon\rceil$-approximation of $A$. Then there exists some positive number $c$ such that for any rank $k$ orthogonal projection $P$,
\[||A-PA||_F^2 \leq ||A'-PA'||_F^2+ c \leq (1+\varepsilon) ||A-PA||_F^2.\]
\end{theorem}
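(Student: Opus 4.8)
\textbf{Proof proposal for Theorem~\ref{thm:embedding} (the Frobenius-norm sandwich).}

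The plan is to reduce the statement to a comparison of singular values, since both $A'$ and the rank-$k$ projection $P$ act on the singular subspaces of $A$ in a controlled way. First I would write the singular value decomposition $A=U\Sigma V^T$ and note $A'=A_{\lceil k/\varepsilon\rceil}=U\Sigma_{\lceil k/\varepsilon\rceil}V^T$, so that $A-A'$ has Frobenius norm squared equal to $\sum_{i>\lceil k/\varepsilon\rceil}\sigma_i^2$. The key structural observation is that $A'$ lies in the row space of $A$ and, more importantly, the tail of $A$ (the part killed by truncation to rank $\lceil k/\varepsilon\rceil$) is orthogonal, in the Frobenius inner product sense, to everything supported on the top $\lceil k/\varepsilon\rceil$ right-singular vectors. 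Hence for any rank-$k$ orthogonal projection $P$ one can decompose $\|A-PA\|_F^2 = \|A'-PA'\|_F^2 + \|(A-A')-P(A-A')\|_F^2 + (\text{cross terms})$, and I would argue the cross terms vanish by orthogonality of the singular subspaces. This immediately gives the lower bound $\|A-PA\|_F^2 \ge \|A'-PA'\|_F^2$, and in fact identifies the ``$c$'' in the statement: set $c := \|A-A'\|_F^2 - \min_P \|(A-A')-P(A-A')\|_F^2$, or more simply observe that the difference $\|A-PA\|_F^2 - \|A'-PA'\|_F^2$ equals $\|(I-P)(A-A')\|_F^2$, which is \emph{not} constant in $P$; so the correct move is to define $c$ via a more careful accounting. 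This is where I expect the main subtlety.

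Resolving the constant-$c$ issue is the heart of the proof. The point is that $\|(I-P)(A-A')\|_F^2$ ranges between $\|A-A'\|_F^2 - (\text{something of size at most }\sum_{k<i\le\lceil k/\varepsilon\rceil}\sigma_i^2)$ — wait, that is not quite it either. The clean approach, which I would pursue, is: fix $P$, and compare against the \emph{best} rank-$k$ projection. Let $P^*$ be the optimal rank-$k$ projection for $A$ (i.e.\ projection onto the top-$k$ right singular vectors), with cost $\|A-P^*A\|_F^2 = \sum_{i>k}\sigma_i^2$. For the constrained problem ($k$-means), $P$ is restricted to cluster-indicator projections, and $\|A-PA\|_F^2 \ge \sum_{i>k}\sigma_i^2$ always. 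Then I claim $c = \sum_{i>k}\sigma_i^2 - \sum_{i>\lceil k/\varepsilon\rceil}\sigma_i^2 = \sum_{k<i\le\lceil k/\varepsilon\rceil}\sigma_i^2$ works, but only for the \emph{upper} bound, and the upper bound requires that $\|A-PA\|_F^2$ is large enough relative to this tail. Concretely: the lower bound $\|A-PA\|_F^2 \le \|A'-PA'\|_F^2 + c$ follows once we show $\|A-PA\|_F^2 - \|A'-PA'\|_F^2 = \|(I-P)(A-A')\|_F^2 \le \|A-A'\|_F^2 \le \sum_{i>\lceil k/\varepsilon\rceil}\sigma_i^2 =: c$, using $\|I-P\|_2\le 1$. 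For the right inequality $\|A'-PA'\|_F^2 + c \le (1+\varepsilon)\|A-PA\|_F^2$, I would use $\|A'-PA'\|_F^2 \le \|A-PA\|_F^2$ (the lower bound applied with roles understood) together with $c \le \sum_{i>k/\varepsilon}\sigma_i^2$, and crucially the averaging bound $\sum_{i>\lceil k/\varepsilon\rceil}\sigma_i^2 \le \varepsilon \sum_{i>k}\sigma_i^2 \le \varepsilon\|A-PA\|_F^2$: this is a counting argument, since among the $k/\varepsilon$ singular values $\sigma_{k+1},\dots$ (in blocks of $k$), the tail past position $k/\varepsilon$ is at most a $\varepsilon$-fraction of the first such block's worth by monotonicity — more precisely $\sum_{i>k/\varepsilon}\sigma_i^2 \le \varepsilon\sum_{i>k}\sigma_i^2$ by comparing $k/\varepsilon$ equal-length blocks, each dominated by the preceding one.

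So the skeleton of steps I would carry out, in order, is: (i) set up the SVD and the orthogonal decomposition of $A$ into its top-$\lceil k/\varepsilon\rceil$ part $A'$ and tail $A-A'$, noting these live in orthogonal row-subspaces; (ii) prove the Pythagorean identity $\|A-PA\|_F^2 = \|A'-PA'\|_F^2 + \|(A-A')-P(A-A')\|_F^2$ by expanding and killing cross terms via $\langle A', A-A'\rangle_F = 0$ and $\langle PA', P(A-A')\rangle_F = \langle A', P(A-A')\rangle_F$ (using $P^2=P=P^T$) combined with $P A'$ staying in the top subspace — here I'd need that $P$ being a \emph{cluster-indicator} projection, or just careful handling, makes the cross term vanish; if it does not vanish exactly, bound it; (iii) from the identity, read off $0 \le \|A-PA\|_F^2 - \|A'-PA'\|_F^2 \le \|A-A'\|_F^2 = \sum_{i>\lceil k/\varepsilon\rceil}\sigma_i^2$, and set $c := \sum_{i>\lceil k/\varepsilon\rceil}\sigma_i^2$, giving the left inequality directly and the middle inequality of the chain; (iv) prove the block-averaging bound $\sum_{i>\lceil k/\varepsilon\rceil}\sigma_i^2 \le \varepsilon\sum_{i>k}\sigma_i^2$; (v) combine $\|A'-PA'\|_F^2 + c \le \|A-PA\|_F^2 + c \le \|A-PA\|_F^2 + \varepsilon\sum_{i>k}\sigma_i^2 \le (1+\varepsilon)\|A-PA\|_F^2$, using $\sum_{i>k}\sigma_i^2 \le \|A-PA\|_F^2$ which holds since $P$ has rank $k$. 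The main obstacle I anticipate is step (ii): verifying the cross terms vanish (or are harmless) for an arbitrary rank-$k$ orthogonal projection $P$ — this is exactly where one must be careful that $PA'$ does not leak mass into the tail subspace in a way that breaks the Pythagorean split; if it does, the argument must instead bound the cross term by Cauchy–Schwarz and absorb it, at the cost of a slightly larger $c$, which is still fine since the statement only asks for \emph{some} positive $c$.
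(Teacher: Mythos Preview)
The paper does not prove this theorem; it is quoted verbatim from Cohen et al.\ and used as a black box. So there is no ``paper's proof'' to compare against. What follows evaluates your argument on its own terms.

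Your skeleton (i)--(iii) is correct, and in fact step (ii) works cleanly without any caveats: writing $A'=U\Sigma'V^T$ and $A-A'=U\Sigma''V^T$ with $\Sigma'\Sigma''=0$, the cross term $\langle (I-P)A',(I-P)(A-A')\rangle_F=\mathrm{tr}\bigl(\Sigma'(U^TPU)\Sigma''\bigr)$ has only off-diagonal support, so its trace vanishes for \emph{any} projection $P$. You get the exact Pythagorean identity
\[
\|A-PA\|_F^2=\|A'-PA'\|_F^2+\|(I-P)(A-A')\|_F^2,
\]
and with $c:=\|A-A'\|_F^2$ the left inequality follows immediately.

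The gap is step (iv). The claimed inequality $\sum_{i>\lceil k/\varepsilon\rceil}\sigma_i^2\le\varepsilon\sum_{i>k}\sigma_i^2$ is \emph{false}: take all singular values equal to $1$ with $n\gg k/\varepsilon$, so both sides are roughly $n$ and the ratio is near $1$, not $\varepsilon$. No block-averaging argument can rescue this, because monotonicity of the blocks gives no control on how many blocks there are. Consequently your step (v), which needs $c\le\varepsilon\|A-PA\|_F^2$, fails.

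The fix is to avoid bounding $c$ outright and instead use the Pythagorean identity a second time. From $\|A-A'\|_F^2=\|(I-P)(A-A')\|_F^2+\|P(A-A')\|_F^2$ you get
\[
\|A'-PA'\|_F^2+c-\|A-PA\|_F^2=\|P(A-A')\|_F^2.
\]
Now $P(A-A')$ has rank at most $k$ and spectral norm at most $\sigma_{m+1}$, so $\|P(A-A')\|_F^2\le k\,\sigma_{m+1}^2$. Since $\sigma_{m+1}^2\le\frac{1}{m-k}\sum_{i=k+1}^{m}\sigma_i^2\le\frac{1}{m-k}\|A-PA\|_F^2$ with $m-k\ge k(1/\varepsilon-1)$, this yields $\|P(A-A')\|_F^2\le\frac{\varepsilon}{1-\varepsilon}\|A-PA\|_F^2$, which gives the right inequality up to rescaling $\varepsilon$. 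The crucial point is that only $\sigma_{m+1}^2$, not the whole tail $\sum_{i>m}\sigma_i^2$, needs to be small relative to $\sum_{i>k}\sigma_i^2$.
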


The combination of the low rank case and this theorem is not trivial as points may be closer to a wrong center after projecting, see also Figure~\ref{fig:spec}.
Lemma~\ref{lem:mainspec} determines the existence of a clustering whose cost for the projected points $A_m$ is at most the cost of $C^*$.
Moreover, this clustering has constant distribution stability as well which, combined with the results from Section~\ref{sec:euclidperturb}, allows us to use Local Search.
Given that we can find a clustering with cost at most $(1+\varepsilon)\cdot ||A_m-XX^TA_m||_F^2$, Theorem~\ref{thm:embedding} implies that we will have a $(1+\varepsilon)^2$-approximation overall.

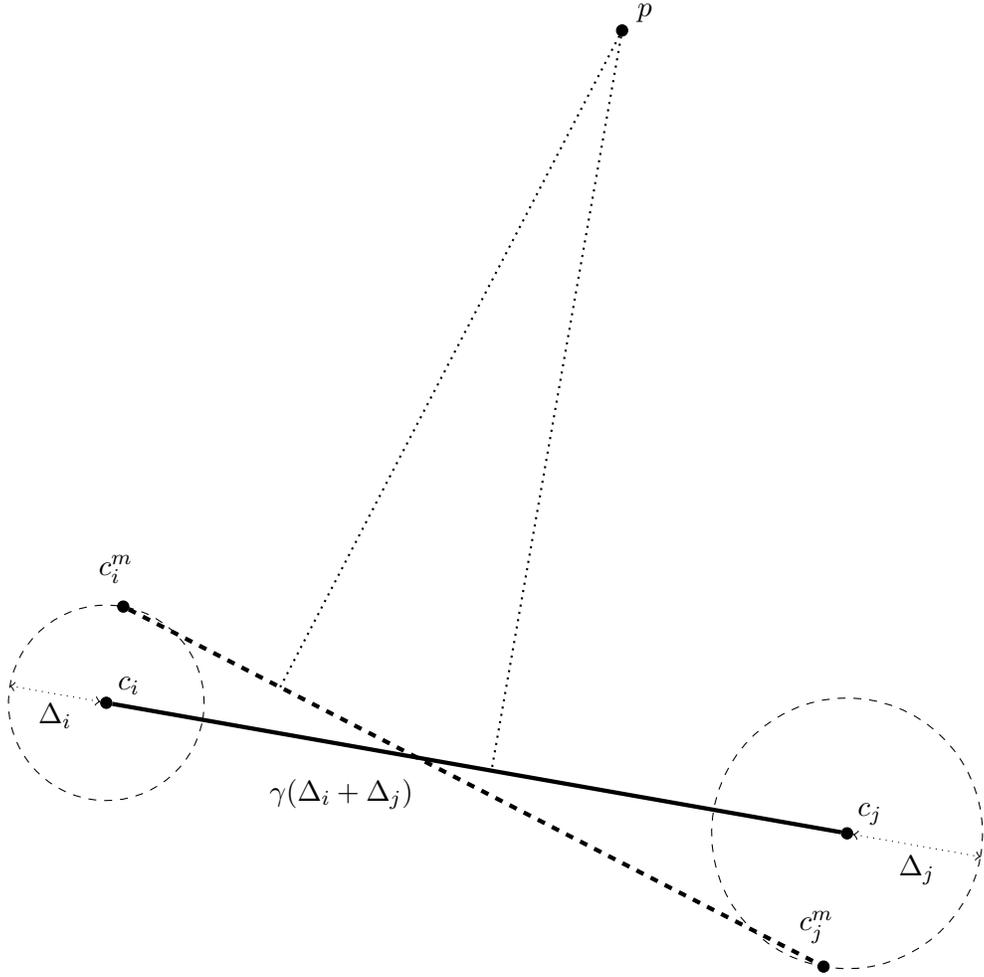
\begin{figure}
\begin{center}
\begin{tikzpicture}[rotate = -10]


\node[draw=black,fill=black, shape=circle, inner sep = 1.5pt, label={[xshift=0.3cm, yshift=-0.1cm]$c_i$}] (c1) at (0,2) {};
\draw[dashed] (0,2) circle (1.3cm);

\node[draw=black,fill=black, shape=circle, inner sep = 1.5pt, label={[xshift=-0.1cm, yshift=0.1cm]$c_i^m$}] (i) at (0,3.3) {};

\node[draw=black,fill=black, shape=circle, inner sep = 1.5pt, label={[xshift=0.3cm, yshift=-0.1cm]$c_j$}] (c2) at (10,2) {};
\draw[dashed] (10,2) circle (1.8cm);
\node[draw=black,fill=black, shape=circle, inner sep = 1.5pt, label={[xshift=-0.1cm, yshift=0.1cm]$c_j^m$}] (j) at (10,0.2) {};

\draw[ultra thick] (c1) --  node[below,label={[xshift=-1.8cm, yshift=-0.7cm]$\gamma(\Delta_i + \Delta_j)$}] {} (c2);
\draw[ultra thick, dashed] (i) --  (j);

\node[draw=black,fill=black, shape=circle, inner sep = 1.5pt, label={[xshift=0.3cm, yshift=-0.1cm]$p$}] (p) at (5.2,12) {};

\draw[thick, dotted] (p) -- (5.2,2);
\draw[thick, dotted] ($(j)!(p)!(i)$) -- (p);

\draw[<->,dotted] (c1) -- node[below] {$\Delta_i$}  (-1.3,2);
\draw[<->,dotted] (c2) -- node[below] {$\Delta_j$}  (10+1.8,2);
\end{tikzpicture}
\end{center}
\caption{Despite the centroids of each cluster being close after computing the best rank $m$ approximation, the projection of a point $p$ to the line connecting the centroid of cluster $C_i$ and $C_j$ can change after computing the best rank $m$ approximation. In this case $||p-c_j||<||p-c_i||$ and $||p-c_i^m|| < ||p-c_j^m||$. (Here $\Delta_i = \sqrt{\frac{k}{|C_i|}}||A-XX^TA||_2$.)}
\label{fig:spec}
\end{figure}

To prove the lemma, we will require the following steps:
\begin{itemize}
\item A lower bound on the distance of the projected centers $||c_iV_mV_m^T - c_jV_mV_m^T|| \approx ||c_i - c_j||$.
\item Find a clustering $K$ with centers $S_m^*=\{c_1V_mV_m^T,\ldots , c_k^*V_mV_m^T\}$ of $A_m$ with cost less than $||A_m-XX^TA_m||_F^2$.
\item Show that in a well-defined sense, $K$ and $C^*$ agree on a large fraction of points.
\item For any point $x\in K_i$, show that the distance of $x$ to any center not associated with $K_i$ is large.
\end{itemize}

We first require a technical statement.


\begin{lemma}
\label{lem:centermove}
For a point set $A$, let $C=\{C_1,\ldots C_k\}$ be a clustering with associated clustering matrix $X$ and let $A'$ and $A''$ be optimal low rank approximations where without loss of generality $k\leq \text{rank}(A') < \text{rank}(A'')$. Then for each cluster $C_i$ 
\[\displaystyle \left\vert\left\vert\frac{1}{|C_i|}\sum_{j\in C_i} \left( A_j''-A_j'\right) \right\vert\right\vert_2 \leq  \sqrt{\frac{k}{|C_i|}}\cdot ||A-XX^TA||_2.\]
\end{lemma}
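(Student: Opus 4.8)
\textbf{Proof proposal for Lemma~\ref{lem:centermove}.}

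The plan is to recognize the average displacement vector $\frac{1}{|C_i|}\sum_{j\in C_i}(A_j'' - A_j')$ as (a scaled version of) a single row of the matrix product $X^T(A''-A')$, and then to bound the spectral norm of that product. Concretely, if $e_i$ denotes the $i$th standard basis vector of $\mathbb{R}^k$, then by the definition of the clustering matrix $X$ we have $e_i^T X^T = \frac{1}{\sqrt{|C_i|}} \mathbf{1}_{C_i}^T$, where $\mathbf{1}_{C_i}$ is the indicator vector of cluster $C_i$. Hence $e_i^T X^T (A''-A') = \frac{1}{\sqrt{|C_i|}} \sum_{j \in C_i} (A_j'' - A_j')$, and therefore
\[
\left\| \frac{1}{|C_i|}\sum_{j\in C_i}(A_j'' - A_j') \right\|_2 = \frac{1}{\sqrt{|C_i|}} \left\| e_i^T X^T(A''-A') \right\|_2 \le \frac{1}{\sqrt{|C_i|}} \left\| X^T(A''-A') \right\|_2,
\]
using that $\|e_i\|=1$ so extracting a row cannot increase the spectral norm. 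Since $X$ has orthonormal columns ($X^TX=I$), multiplication by $X^T$ is norm-nonincreasing, so $\|X^T(A''-A')\|_2 \le \|A''-A'\|_2$.

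The remaining step is to show $\|A'' - A'\|_2 \le \sqrt{k}\,\|A - XX^TA\|_2$. Write $A' = A_r$ and $A'' = A_s$ for the best rank-$r$ and rank-$s$ approximations with $k \le r < s$. In terms of the SVD $A = U\Sigma V^T$, we have $A_s - A_r = U(\Sigma_s - \Sigma_r)V^T$, which is the matrix keeping only singular values $\sigma_{r+1}, \ldots, \sigma_s$; thus $\|A_s - A_r\|_2 = \sigma_{r+1} \le \sigma_{k+1}$, since $r \ge k$ and singular values are nonincreasing. On the other hand, $\|A - XX^TA\|_2$ is the spectral norm of the residual of a rank-$k$ orthogonal projection of $A$, and $\sigma_{k+1} = \|A - A_k\|_2$ is the minimum spectral-norm residual over all rank-$k$ matrices; hence $\sigma_{k+1} \le \|A - XX^TA\|_2$. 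Combining, $\|A_s - A_r\|_2 \le \|A - XX^TA\|_2 \le \sqrt{k}\,\|A-XX^TA\|_2$ (the $\sqrt{k}$ is a generous slack and is not even needed for the spectral-norm version, but it matches the statement).

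The one point that needs a little care, and which I expect to be the main subtlety, is the claim $\|e_i^T M\|_2 \le \|M\|_2$ for extracting a row: this is immediate since $e_i^T M = e_i^T M$ and $\|e_i^T M\| \le \|e_i\|\cdot\|M\|_2 = \|M\|_2$ when $\|M\|_2$ denotes the operator norm, but one should be consistent that the left-hand side $\|\cdot\|_2$ in the lemma statement is the Euclidean norm of a vector while $\|\cdot\|_2$ on matrices is the spectral norm — these agree on row vectors viewed as $1\times d$ matrices, so no inconsistency arises. Putting the three inequalities together yields exactly the claimed bound $\left\|\frac{1}{|C_i|}\sum_{j\in C_i}(A_j''-A_j')\right\|_2 \le \sqrt{\frac{k}{|C_i|}}\,\|A-XX^TA\|_2$.
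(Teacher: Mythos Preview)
Your proof is correct, and in fact your route is slightly cleaner than the paper's. The paper argues via the Frobenius norm: it bounds $|C_i|\cdot\bigl\|\frac{1}{|C_i|}\sum_{j\in C_i}(A_j''-A_j')\bigr\|^2$ by the total $\|XX^T(A''-A')\|_F^2$ (summing the centroid displacements over \emph{all} clusters), then uses $\|XX^T(A''-A')\|_F \le \|X\|_F\cdot\|A''-A'\|_2 = \sqrt{k}\,\|A''-A'\|_2$; this is where the factor $\sqrt{k}$ actually enters. You instead extract a single row of $X^T(A''-A')$ directly via the spectral norm, which --- as you correctly observe --- makes the $\sqrt{k}$ unnecessary and in fact proves the stronger bound $\bigl\|\frac{1}{|C_i|}\sum_{j\in C_i}(A_j''-A_j')\bigr\| \le \frac{1}{\sqrt{|C_i|}}\,\|A-XX^TA\|_2$. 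Both arguments share the essential ingredients $\|A''-A'\|_2 = \sigma_{r+1} \le \sigma_{k+1} \le \|A-XX^TA\|_2$; yours simply avoids the Frobenius-norm detour over all clusters.
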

\begin{proof}
By Fact~\ref{fact:magicformula} $|C_i|\cdot ||\frac{1}{|C_i|}\sum_{j\in C_i} (A_i''-A_i') ||_2^2$ is, for a set of point indexes $C_i$, the cost of moving the centroid of the cluster computed on $A''$ to the centroid of the cluster computed on $A'$.
For a clustering matrix $X$, $||XX^TA'-XX^TA'||_F^2$ is the sum of squared distances of moving the centroids computed on the point set $A''$ to the centroids computed on $A'$. 
We then have \[|C_i|\cdot \left\vert\left\vert\frac{1}{|C_i|}\sum_{j\in C_i} (A_j''-A_j') \right\vert\right\vert_2^2 \leq  ||XX^TA''-XX^TA'||_F^2 \leq  ||X||_F^2 \cdot ||A''-A'||_2^2 \leq k\cdot \sigma_{k+1}^2 \leq k\cdot ||A-XX^TA||_2^2.\] 
\end{proof}


\begin{proof}[Proof of Lemma~\ref{lem:mainspec}]
For any point $p$ associated with some row of $A$, let $p^m = pV_mV_m^T$ be the corresponding row in $A_m$. 
Similarly, for some cluster $C_i$, denote the center in $A$ by $c_i$ and the center in $A_m$ by $c_i^m$.
Extend these notion analogously for projections $p^k$ and $c_i^k$ to the span of the best rank $k$ approximation $A_k$.

We have for any $m\geq k$ $i\neq j$
\begin{eqnarray}
\nonumber
||c_i^m-c_j^m|| &\geq & ||c_i-c_j|| - ||c_i-c_i^m|| - ||c_j-c_j^m|| \\
\nonumber
& \geq & \gamma\cdot \left( \frac{1}{\sqrt{C_i}} + \frac{1}{\sqrt{|C_j|}} \right) \sqrt{k}||A-XX^TA||_2 \\
\nonumber
& &- \frac{1}{\sqrt{C_i}} \sqrt{k}||A-XX^TA||_2 - \frac{1}{\sqrt{|C_j|}} \sqrt{k}||A-XX^TA||_2\\
\label{eq:distcenter}
&=& (\gamma-1)\cdot \left( \frac{1}{\sqrt{C_i}} + \frac{1}{\sqrt{|C_j|}} \right) \sqrt{k}||A-XX^TA||_2,
\end{eqnarray}
where the second inequality follows from Lemma~\ref{lem:centermove}.

In the following, let $\Delta_i = \frac{\sqrt{k}}{\sqrt{|C_i|}}||A-XX^TA||_2$.
We will now construct our target clustering $K$.
Note that we require this clustering (and its properties) only for the analysis.
We distinguish between the following three cases.

\begin{description}
\item[Case 1: $p\in C_i$ and $c_i^m=\underset{j\in\{1,\ldots,k\}}{\text{argmin}}||p^m-c_j||$:] ~

These points remain assigned to $c_i^m$.
The distance between $p_m$ and a different center $c_j^m$ is at least
$\frac{1}{2}||c_i^m-c_j^m||\geq \frac{\gamma-1}{2}\varepsilon (\Delta_i+\Delta_j)$ due to Equation~\ref{eq:distcenter}.

\item[Case 2:  $p\in C_i$, $c_i^m\neq \underset{j\in\{1,\ldots,k\}}{\text{argmin}}||p^m-c_j||$, and $c_i^k \neq \underset{j\in\{1,\ldots,k\}}{\text{argmin}}{||p^k-c_j^k||}$:] ~

These points will get reassigned to their closest center. 

The distance between $p_m$ and a different center $c_j^m$ is at least
$\frac{1}{2}||c_i^m-c_j^m||\geq \frac{\gamma-1}{2}\varepsilon (\Delta_i+\Delta_j)$ due to Equation~\ref{eq:distcenter}.
\item[Case 3: $p\in C_i$, $c_i^m\neq \underset{j\in\{1,\ldots,k\}}{\text{argmin}}{||p^m-c_j^m||}$, and $c_i^k = \underset{j\in\{1,\ldots,k\}}{\text{argmin}}{||p^k-c_j^k||}$:] ~
 
We assign $p^m$ to $c^m_i$ at the cost of a slightly weaker movement bound on the distance between $p^m$ and $c^m_j$.
Due to orthogonality of $V$, we have for $m>k$, $(V_m-V_k)^TV_k=V_k^T(V_m-V_k)=0$. 
Hence $V_mV_m^TV_k = V_mV_k^TV_k + V_m(V_m-V_k)^TV_k = V_kV_k^TV_k + (V_m-V_k)V_k^TV_k = V_kV_k^TV_k = V_k$.
Then $p^k = pV_kV_k^T = pV_mV_m^TV_kV_k^T=p_mV_kV_k^T$.

Further, $||p^k-c_j^k|| \geq \frac{1}{2}||c_j^k - c_i^k|| \geq \frac{\gamma-1}{2}(\Delta_i+\Delta_j)$ due to Equation~\ref{eq:distcenter}.
Then the distance between $p_m$ and a different center $c_j^m$ 
\begin{eqnarray*}
||p^m-c_j^m|| &\geq & ||p^m - c_j^k|| - ||c_j^m - c_j^k|| = \sqrt{||p^m-p^k||^2 + ||p^k - c_j^k||^2} - ||c_j^m - c_j^k|| \\
&\geq & ||p^k-c_j^k|| - \Delta_j \geq \frac{\gamma-3}{2} (\Delta_i+\Delta_j),
\end{eqnarray*} 
where the equality follows from orthogonality and the second to last inequality follows from Lemma~\ref{lem:centermove}.
\end{description}

Now, given the centers $\{c_1^m,\ldots c_k^m\}$, we obtain a center matrix $M_K$ where the $i$th row of $M_K$ is the center according to the assignment of above.
Since both clusterings use the same centers but $K$ improves locally on the assignments, we have $||A_m-M_K||_F^2 \leq ||A_m-XX^TA_m||_F^2$, which proves the first statement of the lemma.
Additionally, due to the fact that $A_m-XX^TA_m$ has rank $m = k/\varepsilon$, we have
\begin{equation}
\label{eq:costbound}
||A_m-M_K||_F^2 \leq ||A_m-XX^TA_m||_F^2 \leq m\cdot ||A_m-XX^TA_m||_2^2 \leq k/\varepsilon \cdot ||A-XX^TA||_F^2
\end{equation}

To ensure stability, we will show that for each element of $K$ there exists an element of $C$, such that both clusters agree on a large fraction of points. 
This can be proven by using techniques from Awasthi and Sheffet~\cite{AwS12} (Theorem 3.1) and Kumar and Kannan~\cite{KuK10} (Theorem 5.4), which we repeat for completeness.

\begin{lemma}
\label{lem:specsize}
Let $K=\{C_1',\ldots C_k'\}$ and $C= \{C_1,\ldots C_k\}$ be defined as above. Then there exists a bijection $b:C\rightarrow K$ such that for any $i\in \{i,\ldots ,k\}$
\[\left( 1-\frac{32}{(\gamma-1)^2}\right) |C_i| \leq b(|C_i|)\leq \left( 1+\frac{32}{(\gamma-1)^2}\right) |C_i|.\]
\end{lemma}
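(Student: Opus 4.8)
The plan is to mimic the counting argument of Awasthi--Sheffet (Theorem~3.1) and Kumar--Kannan (Theorem~5.4), adapted to the projected instance $A_m$ and the clustering $K$ constructed in the proof of Lemma~\ref{lem:mainspec}. The starting point is the cost bound \eqref{eq:costbound}, which tells us that the row matrix $M_K$ satisfies $||A_m - M_K||_F^2 \le k/\varepsilon \cdot ||A-XX^TA||_F^2$, and hence in particular $||A_m - M_K||_2 \le \sqrt{m}\,||A-XX^TA||_2 = \sqrt{k/\varepsilon}\,||A-XX^TA||_2$. I would first define, for each cluster $C_i$ of $C$, the set $B_i$ of ``badly clustered'' points of $C_i$, i.e.\ those $p \in C_i$ that were reassigned to some $c_j^m$ with $j \ne i$ in the construction of $K$ (Cases~2 and~3). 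For such a point, the triangle inequality together with the movement bounds from Lemma~\ref{lem:mainspec} (the distance from $p^m$ to $c_j^m$ is at least $\frac{\gamma-3}{2}(\Delta_i+\Delta_j)$, and similarly a lower bound on $||p^m - c_i^m||$) forces $||A_m - M_K||$ to charge at least $\Theta((\gamma-1)^{-2})$ of a $\Delta$-sized contribution per bad point.

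The key step is the charging/counting argument. Using that $p$ is misclustered, both $||p^m - c_i^m||$ and $||p^m - c_j^m||$ must be comparable to $||c_i^m - c_j^m|| \ge (\gamma-1)(\Delta_i+\Delta_j)$ (up to the $\gamma-3$ slack in Case~3); in particular the contribution of $p$ to $||A_m-XX^TA_m||_F^2$ or to the ``moved mass'' is at least a constant times $\Delta_i^2 = \frac{k}{|C_i|}||A-XX^TA||_2^2$. Summing over all $p \in B_i$ and using $||A-XX^TA||_F^2 \le k \cdot ||A-XX^TA||_2^2$ (since $A-XX^TA$ has rank at most $k$... actually $||A-XX^TA||_F^2 = \sum_{i>k}\sigma_i^2$, bounded by the spectral norm times rank) together with \eqref{eq:costbound}, one gets $|B_i| \cdot \frac{1}{|C_i|} \le \frac{c}{(\gamma-1)^2}$ for an absolute constant, i.e.\ $|B_i| \le \frac{32}{(\gamma-1)^2}|C_i|$. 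This is the heart of the proof and where the Awasthi--Sheffet machinery must be invoked carefully: the bound must be summed simultaneously over all clusters and related back to a single global matrix norm, so that the per-cluster losses cannot ``overlap'' — this is exactly the role of the Frobenius-norm accounting.

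Once $|B_i| \le \frac{32}{(\gamma-1)^2}|C_i|$ for every $i$, I would define the bijection $b$. A natural choice: $b$ maps $C_i$ to the cluster $C_i'$ of $K$ that shares the center $c_i^m$. Then $C_i' \supseteq C_i \setminus B_i$ (the good points of $C_i$ stay), and $C_i' \setminus C_i \subseteq \bigcup_{j\ne i} B_j$ (anything new in $C_i'$ came misclustered from some other $C_j$). Hence
\[
|C_i'| \ge |C_i| - |B_i| \ge \Bigl(1 - \tfrac{32}{(\gamma-1)^2}\Bigr)|C_i|,
\]
giving the lower bound directly. For the upper bound, $|C_i'| \le |C_i| + \sum_{j \ne i}|B_j \cap C_i'|$, and I need to argue the extra mass pulled into $C_i'$ is at most $\frac{32}{(\gamma-1)^2}|C_i|$; here one uses that a point $p\in B_j$ reassigned to $c_i^m$ is far from $c_j^m$ and close to $c_i^m$, so its $\Delta_i$-scaled cost is charged to cluster $i$, and the same Frobenius-budget argument localized to center $c_i^m$ bounds the count by $\frac{32}{(\gamma-1)^2}|C_i|$ (one may need to balance the constant between the two directions, or simply take the max; the statement's symmetric constant $\tfrac{32}{(\gamma-1)^2}$ suggests taking a uniform bound).

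The main obstacle I anticipate is making the two-sided counting rigorous without circularity: the upper bound on $|C_i'|$ depends on how many foreign points land on $c_i^m$, and bounding that requires re-running the charging argument ``from the perspective of'' center $c_i^m$ rather than of source cluster $C_j$. Getting the constant to come out as exactly $32/(\gamma-1)^2$ — as opposed to some other absolute constant — will require following Awasthi--Sheffet's constants precisely, in particular the factor-of-two losses from $||p^m-c_j^m|| \ge \frac12||c_i^m-c_j^m||$ and from Markov-type averaging over the bad set. Since the paper says this ``can be proven by using techniques from'' those two papers and is ``repeat[ed] for completeness'', I would present the charging inequality cleanly and cite the constant-tracking to those works rather than re-deriving every factor.
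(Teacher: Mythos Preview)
Your overall charging strategy is right, but the choice of Frobenius budget is the crux and yours gives the wrong constant. You propose to work in the rank-$m$ projection and charge bad points against $||A_m - XX^TA_m||_F^2 \le (k/\varepsilon)\,||A-XX^TA||_2^2$ from \eqref{eq:costbound}; with the per-point lower bound of order $(\frac{\gamma-1}{2})^2\Delta_i^2$ this yields $|B_i| \le \frac{4}{\varepsilon(\gamma-1)^2}|C_i|$, off by a factor $1/\varepsilon$ from the stated $32/(\gamma-1)^2$. Your own parenthetical (``$||A-XX^TA||_F^2 \le k\cdot||A-XX^TA||_2^2$ since $A-XX^TA$ has rank at most $k$\ldots actually\ldots'') is exactly where the plan fails: $A-XX^TA$ has high rank in general, so no such inequality is available. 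Also note that only Case~2 points are reassigned in the construction of $K$; Case~3 points are kept at $c_i^m$, so your $B_i$ should consist of Case~2 points only.

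The paper's move is to drop to the rank-$k$ projection $A_k$ rather than $A_m$. It sets $T_{i\to j} = \{p\in C_i : ||p^k-c_i^k|| > ||p^k-c_j^k||\}$, which is precisely the condition governing Case~2. Since both $A_k$ and $XX^TA$ have rank at most $k$, the difference $A_k - XX^TA$ has rank at most $2k$, and hence
\[
||A_k - XX^TA||_F^2 \;\le\; 2k\,||A_k-XX^TA||_2^2 \;\le\; 2k\bigl(||A-A_k||_2 + ||A-XX^TA||_2\bigr)^2 \;\le\; 8k\,||A-XX^TA||_2^2 \;=\; 8|C_j|\Delta_j^2.
\]
Each $p\in T_{i\to j}$ contributes $||p^k-c_i||^2 \ge ||p^k-c_i^k||^2 \ge (\frac{\gamma-1}{2})^2(\Delta_i+\Delta_j)^2$ to this budget, so both $\sum_{i\ne j}|T_{i\to j}|$ and $\sum_{i\ne j}|T_{j\to i}|$ are bounded by $\frac{32}{(\gamma-1)^2}|C_j|$. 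This handles the in- and out-flow symmetrically from the same Frobenius budget (always charged to the \emph{original} cluster center), so there is no circularity to worry about.
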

\begin{proof}
Denote by $T_{i\rightarrow j}$ the set of points from $C_i$ such that $||c^k_i-p^k||> ||c^k_j-p^k||$. 
We first note that $||A_k-XX^TA||_F^2 \leq 2k\cdot ||A_k-XX^TA||_2^2 \leq 2k\cdot \left(||A-A_k||_2 + ||A-XX^TA||_2\right)^2 \leq 8k\cdot ||A-XX^TA||_2^2 \leq 8\cdot |C_i| \cdot \Delta_i^2$ for any $i\in\{1,\ldots ,k\}$.
The distance $||p^k-c_i^k||\geq \frac{1}{2}||c^k_i -c_j^k|| \geq \frac{\gamma -1}{2}\cdot\left( \frac{1}{\sqrt{C_i}} + \frac{1}{\sqrt{|C_j|}} \right) \sqrt{k}||A-XX^TA||_2^2$.
Assigning these points to $c^k_i$, we can bound the total number of points added to and subtracted from cluster $C_j$ by observing
\begin{eqnarray*}
\Delta_j^2 \sum_{i\neq j} |T_{i\rightarrow j}| \leq  \sum_{i\neq j} |T_{i\rightarrow j}|\cdot \left(\frac{\gamma-1}{2}\right)^2\cdot (\Delta_i+\Delta_j)^2 \leq ||A_k-XX^TA||_F^2 \leq 8\cdot |C_j|\cdot \Delta_j^2 \\
\Delta_j^2 \sum_{i\neq j} |T_{j\rightarrow i}| \leq  \sum_{j\neq i} |T_{j\rightarrow i}| \cdot \left(\frac{\gamma-1}{2}\right)^2\cdot (\Delta_i+\Delta_j)^2 \leq ||A_k-XX^TA||_F^2 \leq 8\cdot |C_j|\cdot \Delta_j^2.
\end{eqnarray*}
Therefore, the cluster sizes are up to some multiplicative factor of $\left(1\pm \frac{32}{(\gamma-1)^2}\right)$ identical.
\end{proof}

We now have for each point $p^m\in C_i'$ a minimum cost of
\begin{eqnarray*}
||p^m - c_j^m||^2 &\geq & \left(\frac{\gamma-3}{2}\cdot\left(\frac{1}{\sqrt{|C_i|}}+ \frac{1}{\sqrt{|C_j|}}\right)\cdot \sqrt{k}\cdot ||A-XX^TA||_2\right)^2 \\
&\geq & \left(\frac{\gamma-3}{2}\cdot \left(\sqrt{\frac{1}{\left(1+\frac{32}{(\gamma - 1)^2}\right)\cdot |C_i'|}}+ \sqrt{\frac{1}{\left(1+\frac{32}{(\gamma-1)^2}\right)\cdot |C_j'|}}\right)\cdot \sqrt{k}\cdot ||A-XX^TA||_2\right)^2 \\
&\geq & \frac{4\cdot(\gamma-3)^2}{81}\cdot\varepsilon \frac{||A_m-M_K||_F^2}{|C_j'|}
\end{eqnarray*}
where the first inequality holds due to Case 3, the second inequality holds due to Lemma~\ref{lem:specsize} and the last inequality follows from $\gamma>3$ and Equation~\ref{eq:costbound}. 
This ensures that the distribution stability condition is satisfied.
\end{proof}

\begin{proof}[Proof of Theorem~\ref{thm:spectralptas}]
Given the optimal clustering $C^*$ of $A$ with clustering matrix $X$, Lemma~\ref{lem:mainspec} guarantees the existence of a clustering $K$ with center matrix $M_K$ such that $||A_m-M_K||_F^2 \leq ||A_m-XX^TA_m||$ and that $C$ has constant distribution stability.
If $||A_m-M_K||_F^2$ is not a constant factor approximation, we are already done, as Local Search is guaranteed to find a constant factor approximation.
Otherwise due to Corollary~\ref{cor:discrete} (Section~\ref{sec:euclidperturb} in the appendix), there exists a discretization $(A_m,F,||\cdot||^2,k)$ of $(A_m,\mathbb{R}^d,||\cdot||^2,k)$ such that the clustering $C$ of the first instance has at most $(1+\varepsilon)$ times the cost of $C$ in the second instance and such that $C$ has constant distribution stability.
By Theorem~\ref{thm:beta-delta}, Local Search with appropriate (but constant) neighborhood size will find a clustering $C'$ with cost at most $(1+\varepsilon)$ times the cost of $K$ in $(A_m,F,||\cdot||^2,k)$.
Let $Y$ be the clustering matrix of $C'$.
We then have
$||A_m-YY^TA_m||_F^2 +||A-A_m||_F^2 \leq (1+\varepsilon)^2||A_m-M_K||_F^2+||A-A_m||_F^2 \leq (1+\varepsilon)^2||A_m-XX^TA_m||_F^2+||A-A_m||_F^2 \leq (1+\varepsilon)^3||A-XX^TA||_F^2$ due to Theorem~\ref{thm:embedding}.
Rescaling $\varepsilon$ completes the proof.
\end{proof}

\begin{rem}
\label{rem:spec}
Any $(1+\varepsilon)$-approximation will not in general agree with a target clustering. To see this consider two clusters: (1) with mean on the origin and (2) with mean $\delta$ on the the first axis and $0$ on all other coordinates. We generate points via a multivariate Gaussian distribution with an identity covariance matrix centered on the mean of each cluster. If we generate enough points, the instance will have constant spectral separability. However, if  $\delta$ is small and the dimension large enough, an optimal $1$-clustering will approximate the $k$-means objective.
\end{rem}

\section{A Brief Survey on Stability Conditions}
\label{sec:bibstability}
There are two general aims that shape the definitions of stability conditions.
First, we want the objective function to be appropriate. For instance, if the data is generated by mixture of Gaussians, the $k$-means objective will be more appropriate than the $k$-median objective.
Secondly, we assume that there exists some ground truth, i.e. a correct assignment of points into clusters. Our objective is to recover this ground truth as well as possible.
These aims are not mutually exclusive. For instance, an ideal objective function will allow us to recover the ground truth.
We refer to Figure~\ref{fig:tikzmap} for a visual overview of stability conditions and their relationships.

\subsection{Cost-Based Separation}
Given that an algorithm optimized with respect to some objective function, it is natural to define a stability condition as a property the optimum clustering is required to have.

\paragraph{ORSS-Stability~\cite{ORSS12}}
Assume that we want to cluster a data set with respect to the $k$-means objective, but have not decided on the number of clusters.
A simple way of determining the "correct" value of $k$ is to run a $k$-means algorithm for $k\{1,2,\ldots m\}$ until the objective value decreases only marginally (using $m$ centers). At this point, we set $k=m-1$.
The reasoning behind this method, commonly known as the \emph{elbow-method} is that we do not gain much information by using $m$ instead of $m-1$ clusters, so we should favor the simpler model.
Contrariwise, this implies that we did gain information going from $m-2$ to $m-1$ and, in particular, that the $m-2$-means cost was considerably larger than the $m-1$-means cost.

Ostrovsky et al.~\cite{ORSS12} considered whether such discrepancies in the cost also allow us to solve the $k$-means problem more efficiently, see also Schulman~\cite{Sch00} for an earlier condition for two clusters and the irreducibility condition by Kumar et al.~\cite{KSS10}.
Specifically, they assumed that the optimal $k$-means clustering has only an $\varepsilon^2$-fraction of the cost of the optimal $(k-1)$-means clustering.
For such cost separated instances, the popular $D^2$-sampling technique has an improved performance compared to the worst-case $O(\log k)$-approximation ratio~\cite{ArV07,BMORST11,JaG12,ORSS12}.
Awasthi et al.~\cite{ABS10} showed that if an instance is cost-stable, it also admits a PTAS.
In fact, they also showed that the weaker condition \emph{$\beta$-stability} is sufficient. $\beta$-stability states that the cost of assigning a point of cluster $C_i$ to another cluster $C_j$ costs at least $\beta$ times the total cost divided by the size of cluster $C_i$.
Despite its focus on the properties of the optimum,  $\beta$-stability has many connections to target-clustering (see below).
Nowadays, the cost-stable property is one of the strongest stability conditions, implying both distribution stability and
spectral separability (see below).
It is nevertheless the arguably most intuitive stability condition.

\paragraph{Perturbation Resilience}
The other main optimum-based stability condition is ~\emph{perturbation resilience}.
It was originally considered for the weighted max-cut problem by Bilu et al.~\cite{BiL12,BDLS13}. 
There, the optimum max cut is said to be $\alpha$-perturbation resilient, if it remains the optimum even if we multiply any edge weight up to a factor of $\alpha>1$.
This notion naturally extends to metric clustering problems, where, given a $n\times n$ distance matrix, the optimum clustering is $\alpha$-perturbation resilient if it remains optimal if we multiply entries by a factor $\alpha$.
Perturbation resilience has some similarity to smoothed analysis (see Arthur et al.~\cite{AMR11,ArV09} for work on $k$-means).
Both smoothed analysis and perturbation stability aim to study a smaller, more interesting part of the instance space as opposed to worst case analysis that covers the entire space.
Perturbation resilience assumes that the optimum clustering stands out among any alternative clustering and measures the degree by which it stands out via $\alpha$.
Smooth analysis is motivated by considering a problem after applying a random perturbation, which for example accounts for measurement errors.

Perturbation resilience is unique among the considered stability conditions in that we aim to recover the optimum solution, as opposed to finding a good $(1+\varepsilon)$ approximation.
Awasthi et al.~\cite{ABS12} showed that $3$-perturbation resilience is sufficient to find the optimum $k$-median clustering, which was further improved by Balcan and Liang to $1+\sqrt{2}$~\cite{BaL16}~\footnote{These results also holds for a slightly more general condition called the center proximity condition.} and finally to 2 by Angelidakis 
et al.~\cite{MM16}.
Ben-David and Reyzin~\cite{BeR14} showed that recovering the optimal clustering is NP-hard if the instance is less than $2$-perturbation resilient.
Balcan et al.~\cite{BHW16} gave an algorithm that optimally solves symmetric and asymmetric $k$-center on $2$-perturbation resilient instances.
Recently, Angelidakis et al. gave an algorithm that determines the optimum cluster for almost all used center-based clustering if the instance is $2$-perturbation resilient~\cite{MM16}.

\begin{figure*}[h]
\begin{center}
\begin{tikzpicture}
\node[rounded corners=3pt, draw, fill=red!30, text width = 5cm, align = center] (dist) at (4,1) {{\centering\textbf{Distribution Stability}} \\
Awasthi, Blum, Sheffet~\cite{ABS10}};

\node[rounded corners=3pt, draw, fill=red!10, text width = 5cm, align = center] (approx) at (4,-2) {{\centering\textbf{Approximation Stability}} \\
Balcan, Blum, Gupta~\cite{BBG09,BBG13}};

\node[rounded corners=3pt, draw, fill=red!10, text width = 5cm, align = center] (cost) at (4,4) {\textbf{Cost Separation} \\
Ostrovsky, Rabani, Schulman, Swamy~\cite{ORSS12}\\
Jaiswal, Garg~\cite{JaG12}};

\node[rounded corners=3pt, draw, fill=red!30, text width = 5cm, align = center] (spec) at (10,1) {\textbf{Spectral Separation} \\
Kumar, Kannan~\cite{KuK10} \\
Awasthi, Sheffet~\cite{AwS12}};

\node[rounded corners=3pt, draw, fill=red!30, text width = 5cm, align = center] (res) at (-2,1) {\textbf{Perturbation Resilience} \\
Bilu, Daniely, Linial, Saks~\cite{BDLS13,BiL12} \\
Awasthi, Blum, Sheffet~\cite{ABS12} \\
Balcan, Liang~\cite{BaL16}};

\node[rounded corners=3pt, draw, fill=red!10, text width = 5cm, align = center] (prox) at (-2,-2) {\textbf{Center Proximity} \\
Awasthi, Blum, Sheffet~\cite{ABS12}\\
Balcan, Liang~\cite{BaL16}};

\draw[decoration={markings,mark=at position 1 with
    {\arrow[scale=3,>=stealth]{>}}},postaction={decorate}] (approx) -- (dist);
\draw[decoration={markings,mark=at position 1 with
    {\arrow[scale=3,>=stealth]{>}}},postaction={decorate}](cost) -- (dist);
\draw[decoration={markings,mark=at position 1 with
    {\arrow[scale=3,>=stealth]{>}}},postaction={decorate}](cost) to (10,4) to (spec);
\draw[decoration={markings,mark=at position 1 with
    {\arrow[scale=3,>=stealth]{>}}},postaction={decorate}](res) -- (prox);
\end{tikzpicture}
\end{center}
\caption{An overview over all definitions of well-clusterability. Arrows correspond to implication. For example, if an instance is cost-separated then it is distribution-stable; therefore the algorithm by Awasthi, Blum and Sheffet~\cite{ABS10} also works for cost-separated instances. The 
  three highlighted stability definitions in the middle of the figure are considered in this paper.}
\label{fig:tikzmap}
\end{figure*}
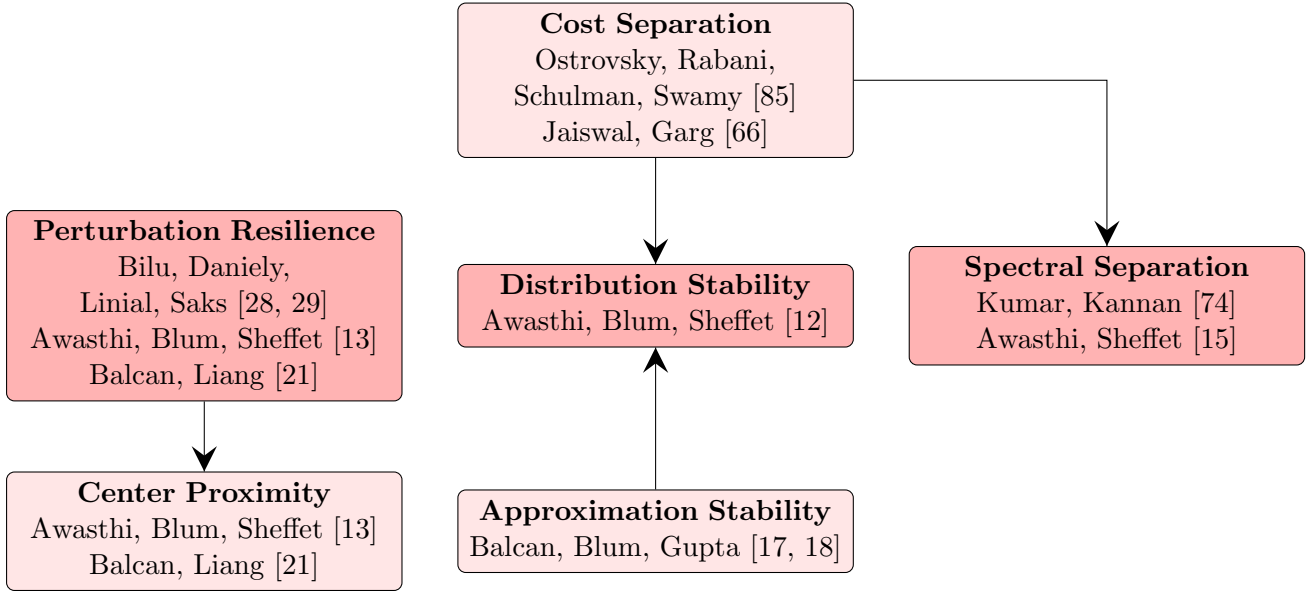

\subsection{Target-Based Stability}
The notion of finding a target clustering is more prevalent in machine learning than minimizing an objective function.
Though optimizing an objective value plays an important part in this line of research, our ultimate goal is to find a clustering $C$ that is close to the target clustering $C^*$. The distance between two clusterings is the fraction of points where $C$ and $C^*$ disagree when considering an optimal matching of clusters in $C$ to clusters in $C^*$.

When the points are generated from some (unknown) mixture model, we are also given an implicit target clustering.
As a result, much work has focused on finding such clusterings using probabilistic assumptions, see, for instance,~\cite{AcM05,ArK01,BeS10,BrV08,Coj10,DHKM07,Das99,DaS07,KSV08,McS01,VeW04}.
We would like to highlight two conditions that make no probabilistic assumptions and have a particular emphasis on the $k$-means and $k$-median objective functions.

\paragraph{Approximation Stability}
The first assumption is that finding the target clustering is related to optimizing the $k$-means objective function.
In the simplest case, the target clustering coincides with the optimum $k$-means clustering, but this a strong assumption that Balcan et al.~\cite{BBG09,BBG13} avoid.
Instead they consider instances where any clustering with cost within a factor $c$ of the optimum has a distance at most $\varepsilon$ to the target clustering, a condition they call $(c,\varepsilon)$-\emph{approximation stability}.
Balcan et al.~\cite{BBG09,BBG13} then showed that this condition is sufficient to both bypass worst-case lower bounds for the approximation factor, and to find a clustering with distance $O(\varepsilon)$ from the target clustering. 
The condition was extended to account for the presence of noisy data by Balcan et al.~\cite{BRT09}.
This approach was improved for other min-sum clustering objectives such as correlation clustering by Balcan and Braverman~\cite{BaB09}.
For constant $c$, $(c,\varepsilon)$ approximation stability also implies the $\beta$-stability condition of Awasthi et al.~\cite{ABS10} with constant $\beta$, if the target clusters are greater than $\varepsilon n$.

\paragraph{Spectral Separability}
Another condition that relates target clustering recovery via the $k$-means objective was introduced by Kumar and Kannan~\cite{KuK10}.
In order to give an intuitive explanation, consider a mixture model consisting of $k$ centers.
If the mixture is in a low-dimensional space, and assuming that we have, for instance, approximation stability with respect to the $k$-means objective, we could simply use the algorithm by Balcan et al.~\cite{BBG13}.
If the mixture has many additional dimensions, the previous conditions have scaling issues, as the $k$-means cost may increase with each dimension, even if many of the additional dimensions mostly contain noise.
The notion behind the \emph{spectral separability} condition is that if the means of the mixture are well-separated in the subspace containing their centers, it should be possible to determine the mixture even with the added noise.

Slightly more formally, Kumar and Kannan state that a point satisfies a proximity condition if the projection of a point onto the line connecting its cluster center to another cluster center is $\Omega(k)$ standard deviations closer to its own center than to the other.
The standard deviations are scaled with respect to the spectral norm of the matrix in which the $i$th row is the difference vector between the $i$th point and its cluster mean.
Given that all but an $\varepsilon$-fraction of points satisfy the proximity condition, Kumar and Kannan~\cite{KuK10} gave an algorithm that computes a clustering with distance $O(\varepsilon)$ to the target.
They also show that their condition is (much) weaker than the cost-stability condition by Ostrovsky et al.~\cite{ORSS12} and discuss some implications of cost-stability on approximation factors.
Awasthi and Sheffet~\cite{AwS12} later showed that $\Omega(\sqrt{k})$ standard deviations are sufficient to recover most of the results by Kumar and Kannan.

\section{Acknowledgments}
The authors thank their dedicated advisor for this project: Claire Mathieu. Without her, this collaboration would not have been possible.

The second author acknowledges the support by Deutsche Forschungsgemeinschaft within the Collaborative Research Center SFB 876, project A2, and the Google Focused Award on Web Algorithmics for Large-scale Data Analysis.

\section*{Appendix}
\appendix
\section{$(\beta,\delta)$-Stability}
\label{appx:beta}

\begin{replemma}{lem:sizeIR}
  Let $C^*_i$ be a cheap cluster. For any $\eps_0$,
we have $|\IR^{\eps_0}_i\cap C_i^*| > (1-{\eps^3}/{\eps_0}) |C^*_i|$.
\end{replemma}
\begin{proof}
  Observe that each client that is not in $IR^{\eps_0}_i$ is at a distance larger than $\eps_0\beta \cost(C^*)/|C^*_i|$ from $c^*_i$.
  Since $C^*_i$ is cheap, the total cost of the clients in $C^*_i=(\IR^{\eps_0}_i\cap C_i^*)\cup (C^*_i \setminus \IR^{\eps_0}_i)$ is at most $\eps^3 \beta \cost(C^*)$ and in particular, the total
  cost of the clients in $C^*_i \setminus \IR^{\eps_0}_i$ does not exceed $\eps^3\beta \cost(C^*)$.
  Therefore, the total number of such clients is at most $\eps^3\beta \cost(C^*)/(\eps_0\beta \cost(C^*)/|C^*_i|)=\eps^3|C^*_i|/\eps_0$.
\end{proof}

\begin{replemma}{lem:IRinter}
  Let $\delta + \frac{\varepsilon^3}{\eps_0} < 1$. If $C^*_i \neq C^*_j$ are cheap clusters, then $\IR^{\eps_0}_i \cap \IR^{\eps_0}_j = \emptyset$.
\end{replemma}
\begin{proof}
  Assume that the claim is not true and consider a client $x \in \IR^{\eps_0}_i \cap \IR^{\eps_0}_j$. Without loss of generality assume $|C^*_i| \ge |C^*_j|$.
  By the triangular inequality, we have
  $\cost(c^*_j, c^*_i) \le \cost(c^*_j, x) + \cost(x, c^*_i) \le \eps_0 \beta \cost(C^*)/|C^*_j|+\eps_0 \beta \cost(C^*)/|C^*_i|\le 2\eps_0 \beta \cost(C^*)/|C^*_j|$.
  Since the instance
  is $(\beta, \delta)$-distribution stable with respect to $(C^*,S^*)$ and due to Lemma~\ref{lem:sizeIR}, we have $|\Delta_i|+ |\IR^{\eps_0}_i\cap C_i^*| > (1-\delta)|C^*_i|+(1-{\eps^3}/{\eps_0}) |C^*_i|= (2-\delta-{\eps^3}/{\eps_0}) |C^*_i|$.  For $\delta + {\eps^3}/{\eps_0} < 1$, there exists a client $x' \in \IR^{\eps_0}_i \cap \Delta_i$. Thus, we have 
  $\cost(x',c^*_j)\le \cost(x',c^*_i) + \cost(c^*_j, c^*_i) \le 3\eps_0 \beta \cost(C^*)/|C^*_j|<\beta \cost(C^*)/|C^*_j|$. Since $x'$ is in $\Delta_i$, we have $\cost(x',c^*_j)\ge \beta\cost(C^*)/|C^*_j|$ resulting in a contradiction.
\end{proof}

\begin{replemma}{lem:nbout-delta}
  There exists a set $Z_2 \subseteq C^* \setminus Z_1$ of size at most 
  $11.25 \eps^{-1}\beta^{-1}$ such that for any cluster
  $C^*_j \in C^* \setminus Z_2$, the total number of clients $x \in 
  \bigcup_{i \neq j} \Delta_i$, that are served by $\local(j)$ in 
  $\local$, is at most $\eps  |\IR^{\eps^2}_i\cap C_i^*|$.
\end{replemma}
\begin{proof}
  Consider a cheap cluster $C^*_j\in C^* \setminus Z_1$
  such that the total number of clients $x \in \Delta_i$ for $i \neq j$, 
  that are served by $\local(j)$ in $\local$, is greater than
  $\eps  |\IR^{\eps^2}_j\cap C^*_j|$.
  By the triangular inequality and the definition of 
  $(\beta,\delta)$-stability, 
  the total cost for each $x \in \Delta_i$ with $i \neq j$ served by 
  $\local(j)$ is at least $(1-\eps)\beta \cost(C^*)/|C^*_j|$.
  Since there are at least $\eps  |\IR^{\eps^2}_j\cap C^*_j|$ such clients, 
  their total cost is
  at least $\eps |\IR^{\eps^2}_j\cap C^*_j| (1-\eps)\beta \cost(C^*)/|C^*_j|$. 
  By Lemma \ref{lem:sizeIR},
  this total cost is at least
  \begin{align*}
\eps |\IR^{\eps^2}_j\cap C^*_j| (1-\eps)\beta \frac{\cost(C^*)}{|C^*_j|} \geq    
    \eps (1-\eps)^2 |C^*_j| \beta \frac{\cost(C^*)}{|C^*_j|}. 
  \end{align*}
  Recall that by \cite{AGKMMP04}, $\local$ is a 5-approximation and so 
  there exist at most $11.25\cdot \eps^{-1}\beta^{-1}$ 
  such clusters. 
\end{proof}

\begin{replemma}{lem:reassign}
  Let $C^*_i$ be a cluster in $C^* \setminus Z^*$.
  Define the solution $\calM^i = \local \setminus \{ \local(i) \} \cup \{c^*_i\}$ and
  denote by $m^i_x$ the cost of client $x$ in solution $\calM^i$.
Then
  \begin{equation*}
    \sum_{x \in A} m^i_x 
    \le \sum_{\substack{x \in A\setminus \\
    (A(\local(i)) \cup E_i )}} l_x + \sum_{x \in E_i }  g_x  + 
    \sum_{x \in D_i} \reas(x) +
    \sum_{\substack{x \in A(\local(i)) \setminus \\ (E_i \cup D_i)}} l_x + 
    \frac{\eps}{(1-\eps)} (\sum_{x \in E_i} g_x + l_x).
  \end{equation*}
\end{replemma}
\begin{proof}
Consider a client $x \in C^*_i \setminus A(\local(i))$.
  By the triangular inequality,
  we have $\reas(x) = \cost(x, \local(i)) \le \cost(x, c^*_i) + \cost(c^*_i, \local(i)) = 
  g_x + \cost(c^*_i, \local(i))$.
  Then,
  $$ 
  \sum\limits_{x \in C^*_i \setminus A(\local(i))} \reas(x) \le \sum\limits_{x \in C^*_i \setminus A(\local(i))} g_x + | C^*_i \setminus A(\local(i))|
  \cdot \cost(c^*_i, \local(i)).
  $$
  
Now consider the clients in $C^*_i\cap A(\local(i))$.
  By the triangular inequality, we have $\cost(c^*_i,\local(i)) \le \cost(c^*_i, x') + \cost(x',\local(i)) \le g_x + l_x$.
  Therefore,
  $$
  \cost(c^*_i,\local(i)) \le \frac{1}{|C^*_i\cap A(\local(i))|} \sum\limits_{x \in C^*_i\cap A(\local(i))} (g_x + l_x).
  $$
  We now bound $\frac{| C^*_i \setminus A(\local(i))|}{|C^*_i\cap A(\local(i))|}$. Due to Lemma~\ref{lem:sizeIR}, we have $|\IR_i^{\varepsilon^2}\cap C^*_i|\geq (1-\varepsilon)|C^*_i|$ and due to Lemma~\ref{lem:structbeta}, we have $|\IR_i^{\varepsilon^2}\cap C^*_i\cap A(\local(i))| \geq (1-\varepsilon)|\IR_i^{\varepsilon^2}\cap C^*_i|$. 
  Therefore $|C^*_i\cap A(\local(i))|\geq (1-\varepsilon)^2|C^*_i|$ and $|C^*_i\setminus A(\local(i))|\leq (1-(1-\varepsilon)^2)|C^*_i|\leq 2\varepsilon|C^*_i|$, yielding $\frac{| C^*_i \setminus A(\local(i))|}{|C^*_i\cap A(\local(i))|}\leq \frac{2\varepsilon}{(1-\varepsilon)^2}$.
  
  Combining, we obtain
  \begin{align*}
  \sum\limits_{x \in C^*_i \setminus A(\local(i))} \reas(x) &\le \sum\limits_{x \in C^*_i \setminus A(\local(i))} g_x + 
  \frac{| C^*_i \setminus A(\local(i))|}{|C^*_i\cap A(\local(i))|}\sum\limits_{x \in C^*_i\cap A(\local(i))} (g_x + l_x)\\
                                                        &\le \sum\limits_{x \in C^*_i \setminus A(\local(i))} g_x +
                                                          \frac{2\varepsilon}{(1-\varepsilon)^2} \sum\limits_{x \in C^*_i\cap A(\local(i))} (g_x + l_x).
  \end{align*}
\end{proof}

\begin{replemma}{lem:costgood-delta-part1}
  Let $C^*_i$ be a cluster in $C^* \setminus Z^*$.
  Define the solution $\calM^i = \local \setminus \{ \local(i) \} \cup \{c^*_i\}$ and
  denote by $m^i_c$ the cost of client $c$ in solution $\calM^i$.
Then
  \begin{equation*}
    \sum_{x \in A} m^i_x 
    \le \sum_{\substack{x \in A\setminus \\
    (A(\local(i)) \cup \tilde C^*_i )}} l_x + \sum_{x \in \tilde C^*_i }  g_x  + 
    \sum_{x \in D_i} \reas(x) +
    \sum_{\substack{x \in A(\local(i)) \setminus \\ (\tilde C^*_i \cup D_i)}} l_x + 
    \frac{\eps}{(1-\eps)} (\sum_{x \in \tilde C^*_i} g_x + l_x).
  \end{equation*}
\end{replemma}
\begin{proof}
   For any client $x \in A\setminus A(\local(i))$,
  the center that serves it in $\local$ belongs to $\calM^i$. 
  Thus its cost is at most $l_x$.
  Moreover, observe that any client 
  $x \in  E_i \subseteq C^*_i$
  can now be served by $c^*_i$, and so its cost is at most $g_x$.
  For each client $x \in D_i$, we bound
  its cost by $\reas(x)$ since all the centers of 
  $\local$ except for $\local(i)$ are in $\calM^i$ and $x\in B^*_j \subseteq C^*_j \in C^*\setminus C(Z^*)$.

 Now, we bound the cost of a client $x \in A(\local(i)) \setminus (E_i \cup D_i) \subseteq A(\local(i))$.
  The closest center in $\calM^i$ for a client 
  $x' \in A(\local(i))$ 
  is not farther than $c^*_i$.
  By the triangular inequality, the cost of such client $x'$ is at 
  most $\cost(x', c^*_i) \le \cost(x',\local(i)) + \cost(\local(i),c^*_i)
  = l_{x'} + \cost(\local(i),c^*_i)$, 
  and so
  \begin{equation}
    \label{eq:cost_analysis_1-delta}
    \sum_{\substack{x \in A(\local(i)) \setminus \\ (E_i \cup D_i)}} m^i_x
    \le   |A(\local(i)) \setminus (E_i \cup D_i)|\cdot
    \cost(\local(i),c^*_i) + 
    \sum_{\substack{x \in A(\local(i)) \setminus \\ (E_i \cup D_i)}} l_x.
  \end{equation}
  Now, observe that, for any client $x \in  
  |A(\local(i))\cap E_i|$, by the triangular inequality, we have 
  $\cost(\local(i),c^*_i) \le \cost(\local(i),x) + \cost(x,c^*_i) 
  = l_x + g_x$. 
  Therefore,
  \begin{equation}
    \label{eq:cost_analysis_2-delta}
    \cost(\local(i),c^*_i) \le \frac{1}{|A(\local(i)) \cap E_i|}
    \sum\limits_{x \in A(\local(i)) \cap E_i} (l_x + g_x).    
  \end{equation}
  Combining Equations \ref{eq:cost_analysis_1-delta} and \ref{eq:cost_analysis_2-delta}, we have  
  \begin{eqnarray}
  \nonumber
    \sum_{\substack{ x \in A(\local(i)) \setminus \\ (E_i \cup D_i)}} m^i_x &\le & \sum_{\substack{ x \in A(\local(i)) \setminus \\ (E_i \cup D_i)}} l_x + 
    \frac{|A(\local(i)) \setminus (E_i \cup D_i)|}{|A(\local(i)) \cap E_i|}
    \sum\limits_{\substack{ x \in A(\local(i)) \cap E_i}} (l_x + g_x)\\
        \label{eq:cost_analysis_3-delta}
        &\leq &
 \sum_{\substack{ x \in A(\local(i)) \\ \setminus (E_i \cup D_i)}} l_x + 
    \frac{|A(\local(i)) \setminus E_i|}{|A(\local(i)) \cap E_i|}
    \sum\limits_{\substack{ x \in E_i}} (l_x + g_x).
  \end{eqnarray}
  We now remark that since $E_i$ is in $C^*\setminus Z^*$, 
  we have by Lemmas \ref{lem:nbin-delta} and \ref{lem:nbout-delta},
  $|A(\local(i)) \setminus E_i| \le \eps \cdot|IR^{\eps^2}_i\cap C^*_i|$ and $(1-\eps)\cdot |IR^{\eps^2}_i\cap C^*_i| \le |A(\local(i))\cap E_i|$.
  Thus, combining with Equation \ref{eq:cost_analysis_3-delta} yields the lemma.  
\end{proof}

\begin{replemma}{lem:costgood-delta}
We have
 $$-\eps \cdot \cost(\local) + \sum_{x \in \widehat A \setminus C(Z^*)} l_x \le \sum_{x \in \widehat A  \setminus C(Z^*)} g_x +  \frac{3\eps}{(1-\varepsilon)^2}\cdot
  (\cost(\local) + \cost(C^*)).$$
\end{replemma}
\begin{proof}
  We consider a cluster $C^*_i$ in $C^* \setminus Z^*$
  and the solution $\calM^i = \local \setminus \{ \local(i) \} \cup \{c^*_i\}$.
  Observe that $\calM^i$ and $\local$ only differ by $\local(i)$ and $c^*_i$.
  Therefore, by local optimality 
  we have $(1-\frac{\eps}{n})\cdot \cost(\local_i) \le \cost(\calM^i)$.
  Then Lemma \ref{lem:costgood-delta-part1} yields
  $$
  (1-\frac{\eps}{n})\cdot \cost(\local_i) \le  \sum_{\substack{x \in A\setminus \\
    (A(\local(i)) \cup E_i )}} l_x + \sum_{x \in E_i } g_x + \sum_{x \in D_i} \reas(x)
  + 
  \sum_{\substack { x \in A(\local(i)) \setminus \\ (E_i \cup D_i)}} l_x 
  +  \frac{\eps}{(1-\eps)}\cdot  \sum_{x \in E} (g_x + l_x)
  $$
  and so, simplifying
  $$
  -\frac{\eps}{n}\cdot \cost(\local_i) + \sum_{x \in E_i} l_x + \sum_{x \in D_i} l_x \le  \sum_{x \in E_i} g_x
  + \sum_{x \in D_i} \reas(x) + 
  \frac{\eps}{(1-\eps)} \cdot \sum_{x \in E_i} (g_x + l_x)
  $$
  We now apply this analysis to each cluster $ C^*_i \in C^* \setminus Z^*$.
  Summing over all clusters $C^*_i$, we obtain,
  \begin{align*}
    -\frac{\eps}{n} \cdot \cost(\local) + &\sum_{i=1}^{|C^* \setminus Z^*|} \left( \sum_{x \in E_i } l_x 
    + \sum_{x \in D_i} l_x \right) 
    \le \\
                                     &\sum_{i=1}^{|C^* \setminus Z^*|} \left(
    \sum_{x \in E_i} g_x + \sum_{x \in D_i} \reas(c) \right)
                                       + \frac{\eps}{(1-\eps)}\cdot (\cost(\local) + \cost(C^*))
  \end{align*}
  By Lemma \ref{lem:reassign} and the definition of $E_i$,
  \begin{eqnarray*}
    -\frac{\eps}{n}\cdot  \cost(\local) + \sum_{i=1}^{|C^* \setminus Z^*|} \sum_{x \in C^*_i \cap \widehat A} l_x \\
    \le \sum_{i=1}^{|C^* \setminus Z^*|}
    \sum_{x \in C^*_i\cap \widehat A} g_x + \left(\frac{\eps}{1-\varepsilon} + \frac{2\eps}{(1-\eps)^2}\right)\cdot (\cost(\local) + \cost(C^*)).
  \end{eqnarray*}
  Therefore, $\displaystyle -\frac{\eps}{n} \cdot \cost(\local) + \sum_{x \in \widehat A  \setminus C(Z^*)} l_x \le \sum_{x \in \widehat A \setminus C(Z^*)} g_x +  
  \frac{3\eps}{(1-\eps)^2} \cdot (\cost(\local) + \cost(C^*)).$
\end{proof}

\section{Euclidean Distribution Stability}
\label{sec:euclidperturb}

In this section we show how to reduce the Euclidean problem to the discrete version.
Our analysis is focused on the $k$-means problem, however we note that the discretization works for all values of $\cost = \dist^p$, where the dependency on $p$ grows exponentially.
For constant $p$, we obtain polynomial sized candidate solution sets in polynomial time.
For $k$-means itself, we could alternatively combine Matousek's approximate centroid set~\cite{Mat00} with the Johnson Lindenstrauss lemma and avoid the following construction; however this would only work for optimal distribution stable clusterings and the proof Theorem~\ref{thm:spectralptas} requires it to hold for non-optimal clusterings as well.

First, we describe a discretization procedure.
It will be important to us that the candidate solution preserves (1) the cost of any given set of centers and (2) distribution stability.

For a set of points $P$, a set of points $\mathcal{N}_{\varepsilon}$ is an \emph{$\varepsilon$-net} of $P$ if for every point $x\in P$ there exists some point $y\in \mathcal{N}_{\varepsilon}$ with $||x-y||\leq \varepsilon$.
It is well known that for unit Euclidean ball of dimension $d$, there exists an $\varepsilon$-net of cardinality $(1+2/\varepsilon)^d$, see for instance Pisier~\cite{Pis99}, though in this case the proof is non-constructive.
Constructive methods yield slightly worse, but asymptotically similar bounds of the form $\varepsilon^{-O(d)}$, see for instance Chazelle~\cite{Cha01} for an extensive overview on how to construct such nets. Note that having constructed an $\varepsilon$-net for the unit sphere, we also have an $\varepsilon\cdot r$-net for any sphere with radius $r$.
The following lemma shows that a sufficiently small $\varepsilon$-net preserves distribution stability.
Again for ease of exposition, we only give the proof for $p=1$, and assuming we can construct an appropriate $\varepsilon$-net, but similar results also hold for $(k,p)$ clustering as long as $p$ is constant.

\begin{lemma}
\label{lem:discrete}
Let $A$ be a set of $n$ points in $d$-dimensional Euclidean space and let $\beta,\varepsilon>0$ with $\min(\beta,\varepsilon)>2\eta>0$  be constants.
Suppose there exists a clustering $C=\{C_1,\ldots ,C_k\}$ with centers $S=\{c_1,\ldots c_k\}$ such that
\begin{enumerate}
\item $\text{cost}(C,S) = \sum_{i=1}^k\sum_{x\in C_i} ||x-c_i||$ is a constant approximation to the optimum clustering and \item $C$ is $\beta$-distribution stable. 
\end{enumerate} 
Then there exists a discretization $D$ of the solution space such that there exists a subset $S'=\{c_1',\ldots c_k'\}\subset D$ of size $k$ with
\begin{enumerate}
\item $\sum_{i=1}^k \sum_{x\in C_i} ||x-c_i'|| \leq (1+\varepsilon)\cdot \text{cost}(C,S)$ and
\item $C$ with centers $S'$ is $\beta/2$-distribution stable.
\end{enumerate}
The discretization consists of $O(n\cdot \log n\cdot \eta^{d+2})$ many points.
\end{lemma}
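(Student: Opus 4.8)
The plan is to build the discretization $D$ as a union of $\varepsilon$-nets centered at the data points, with geometrically increasing radii, so that every candidate center that could plausibly matter in a near-optimal distribution-stable clustering is approximated to within a small additive error, while the total number of net points stays near-linear in $n$. Concretely, first I would fix a crude upper bound $R$ on the diameter of the relevant part of the solution space: since $\cost(C,S)$ is a constant-factor approximation to the optimum, $\opt \le \cost(C,S) \le \gamma\cdot\opt$ for some constant $\gamma$, and any useful center lies within distance $O(\opt)$ of some point of $A$ (a center farther than that from all of $A$ can be replaced by the nearest data point without increasing cost by more than a constant, hence is never in an approximately optimal solution). For each point $a\in A$ and each scale $r_t = 2^t\cdot r_{\min}$ with $r_{\min}\approx \opt/n^{O(1)}$ and $t$ ranging over $O(\log n)$ values up to $R$, I place an $(\eta r_t)$-net of the ball $B(a,r_t)$; each such net has size $\eta^{-O(d)}$, and summing over the $n$ points and $O(\log n)$ scales gives $|D| = O(n\log n\cdot \eta^{-O(d)})$ — matching the claimed bound (up to the exact exponent, which I would track as $\eta^{d+2}$ by being careful with the net constants). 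Centers at distance below $r_{\min}$ from a data point can simply be snapped to that point, absorbing a negligible additive error into the $(1+\varepsilon)$ slack since $r_{\min}$ is polynomially smaller than $\opt/n$.

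Next I would define $S' = \{c_1',\dots,c_k'\}$ by rounding each optimal-for-$C$ center $c_i$ to its nearest point in $D$. By construction, if $r_t$ is the scale with $r_t/2 < \max_{x\in C_i}\dist(c_i,x) \le r_t$ (or the snapping case if that quantity is tiny), then $c_i$ lies in $B(a,r_t)$ for any $a\in C_i$, so $\|c_i - c_i'\| \le \eta r_t \le \eta\cdot 2\max_{x\in C_i}\dist(c_i,x)$. Applying the triangle inequality cluster-by-cluster, $\sum_{x\in C_i}\|x-c_i'\| \le \sum_{x\in C_i}\|x-c_i\| + |C_i|\cdot\|c_i-c_i'\| \le \sum_{x\in C_i}\|x-c_i\| + 2\eta|C_i|\max_{x\in C_i}\dist(c_i,x)$. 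The key observation is that $|C_i|\cdot\max_{x\in C_i}\dist(c_i,x)$ is at most a constant multiple of $\sum_{x\in C_i}\dist(c_i,x)$ — this is exactly where $\beta$-distribution stability enters: by Fact~4.1-type reasoning (Lemma~\ref{lem:sizeIR} with $\eps_0$ chosen appropriately), a constant fraction of $C_i$ lies at distance $\Omega(\beta\,\opt/|C_i|)$ — wait, more directly, distribution stability forces the cluster not to have a huge-radius outlier without a correspondingly huge intra-cluster cost; I would make this precise by noting that if $\max_{x\in C_i}\dist(c_i,x)$ were much larger than $\cost(C_i,c_i)/|C_i|$ in a way that broke the bound, the point achieving the max would itself contribute that much cost. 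So the additive error per cluster is $O(\eta)\cdot\cost(C_i,c_i)$, and summing gives total cost $\le (1 + O(\eta))\cost(C,S)$; choosing $\eta = \Theta(\varepsilon)$ (consistent with $\min(\beta,\varepsilon) > 2\eta$) yields the $(1+\varepsilon)$ bound, establishing the first conclusion.

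For the second conclusion — that $C$ with centers $S'$ is $\beta/2$-distribution stable — I would argue that distribution stability is a robust inequality: for $x\in C_i$ and $j\ne i$, $\dist(x,c_j') \ge \dist(x,c_j) - \|c_j - c_j'\|\ge \beta\,\opt/|C_j| - \eta r_{t(j)}$. Since $\eta r_{t(j)} = O(\eta)\cdot\cost(C_j,c_j)/|C_j| \le O(\eta\gamma)\cdot\opt/|C_j|$ and $\eta$ is much smaller than $\beta$, the right-hand side is at least $(\beta/2)\opt/|C_j|$; one must also check $\opt$ doesn't move — but $\opt$ here refers to the (fixed) optimum of the original instance, and the inequality $\dist(x,c_j')\ge (\beta/2)\opt/|C_j|$ is all that is required by Definition~\ref{defn:betadelta} with $\delta = 0$. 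The main obstacle, and the step I'd spend the most care on, is the counting: pinning down $|D| = O(n\log n\cdot\eta^{d+2})$ exactly requires being careful that (i) the number of scales is genuinely $O(\log n)$, which needs the ratio $R/r_{\min}$ to be $n^{O(1)}$ — this is where I'd invoke that all pairwise distances, hence $\opt$, can be assumed polynomially bounded after a standard rescaling/snapping preprocessing, and handle the degenerate case of coincident points separately — and (ii) that the constructive $\varepsilon$-net bound $\varepsilon^{-O(d)}$ can be taken as $\eta^{-(d+2)}$ or folded into that form; I would cite Chazelle~\cite{Cha01} and absorb the constant in the exponent. The cost-preservation argument itself is essentially routine triangle-inequality bookkeeping once the net scales are set up correctly.
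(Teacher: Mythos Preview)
Your construction of $D$ (geometric nets around each data point) is essentially the same as the paper's, but your analysis of the cost bound has a genuine gap. You choose a single scale per cluster, $r_t \approx \max_{x\in C_i}\dist(c_i,x)$, and obtain $\|c_i-c_i'\|\le 2\eta\max_{x\in C_i}\dist(c_i,x)$; the additive error for cluster $C_i$ is then $|C_i|\cdot\|c_i-c_i'\|\le 2\eta\,|C_i|\max_{x\in C_i}\dist(c_i,x)$. You then claim this is $O(\eta)\cdot\cost(C_i,c_i)$, but that would require $|C_i|\cdot\max_{x}\dist(c_i,x)\le O(1)\cdot\sum_{x}\dist(c_i,x)$, i.e.\ that the maximum intra-cluster distance is at most a constant times the average. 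This is simply false: take $|C_i|-1$ points coincident with $c_i$ and one point at distance $M$; the left side is $|C_i|\cdot M$ and the right side is $M$. Your attempted rescue via distribution stability does not work either --- Definition~\ref{defn:betadelta} only lower-bounds distances to \emph{other} centers $c_j$, $j\ne i$, and says nothing about distances within $C_i$. The same flawed bound $\eta r_{t(j)}=O(\eta)\cdot\cost(C_j,c_j)/|C_j|$ reappears in your argument for~(2).

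The paper's fix is a per-point rather than per-cluster argument, and it does not use distribution stability for part~(1) at all. For each individual client $p\in C_i$, the net \emph{around $p$} at the scale $\ell\approx\|p-c_i\|$ contains a point within $O(\eta)\cdot\|p-c_i\|$ of $c_i$; since $c_i'$ is the nearest point of $D$ to $c_i$, this yields $\|c_i-c_i'\|\le O(\eta)\,\|p-c_i\|$ for that particular $p$, and hence $\|p-c_i'\|\le(1+O(\eta))\|p-c_i\|$. Summing this multiplicative per-point bound gives the $(1+\varepsilon)$ cost bound directly. Equivalently, you could pick the scale based on $\min_{x\in C_i}\dist(c_i,x)$ rather than the max, which gives $|C_i|\cdot\|c_i-c_i'\|\le O(\eta)\,|C_i|\min_x\dist(c_i,x)\le O(\eta)\cdot\cost(C_i,c_i)$; this is what makes the argument go through. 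For part~(2) the paper similarly bounds $\|c_j-c_j'\|$ using a net around a point of $A$ \emph{close} to $c_j$, which yields $\|c_j-c_j'\|\ll\beta\,\opt/|C_j|$ regardless of the cluster's diameter.
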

\begin{proof}
Let $\opt$ being the cost of an optimal $k$-median clustering.
Define an exponential sequence to the base of $(1+\eta)$ starting at $(\eta\cdot \frac{\opt}{n})$ and ending at $(n\cdot \opt)$. 
The sequence contains $t=\log_{1+\eta} (n^2/\eta) \in O(\eta^{-1}\log n)$ many elements for $1/\eta<n$.
For each point $p\in A$, define $B(p,\ell_i)$ as the $d$-dimensional ball centered at $p$ with radius $(1+\eta)^i\cdot \eta\cdot \frac{\opt}{n}$.
We cover the ball $B(p,\ell_i)$ with an $\eta/8 \cdot\ell_i$ net denoted by $\mathcal{N}_{\eta/8}(p,\ell_i)$. 
As the set of candidate centers, we let $D=\cup_{p\in A}\cup_{i=0}^t \mathcal{N}_{\eta/8}(p,\ell_i)$.
Clearly, $|D|\in O(n\cdot\log n\cdot (1+16/\eta)^{d+2})$.

Now for each $c_i\in S$, set $c_i'=\underset{q\in D}{\text{argmin }}||q-c_i||$.
We will show that $S'=\{c_1',\ldots c_k'\}$ satisfies the two conditions of the lemma.

For (1), we first consider the points $p$ with $||p-c_i|| \leq \varepsilon/8\cdot \frac{\opt}{n}$. 
Then there exists a $c_i'$ such that $||p-c_i'|| \leq (\eta/8+ \varepsilon/8) \frac{\opt}{n} \leq \varepsilon/4 \frac{\opt}{n}$ and summing up over all such points, we have a total contribution to the objective value of at most $\varepsilon/4\cdot \opt$.

Now consider the remaining points. Since the $\text{cost}(C,S)$ is a constant approximation, the center $c_i$ of each point $p$ satisfies $(1+\eta)^i \cdot \eta\cdot \frac{\opt}{n} \leq ||c_i-p|| \leq  (1+\eta)^{i+1} \cdot \eta\cdot \frac{\opt}{n}$ for some $i\in \{0,\ldots t\}$.
Then there exists some point $q\in \mathcal{N}_{\eta/8}(p,\ell_{i+1})$ with $||q-c_i||\leq \eta/8\cdot (1+\eta)^{i+1} \cdot \eta\cdot \frac{\opt}{n} \leq \eta/8\cdot (1+\eta) ||p-c_i|| \leq \eta/4 ||p-c_i||.$
We then have $||p-c_i'|| \leq (1+\eta/4)||p-c_i||$.
Summing up over both cases, we have a total cost of at most $\varepsilon/4\cdot \opt + (1+\eta/4)\cdot \text{cost}(C,S') \leq (1+\varepsilon/2)\cdot \text{cost}(C,S')$.

To show (2), let us consider some point $p\notin C_j$ with $||p-c_j|| > \beta \cdot \frac{\opt}{|C_j|}$. 
Since $\beta \cdot \frac{\opt}{|C_j|}\geq 2\eta \cdot \frac{\opt}{n}$, there exists a point $q$ and an $i\in \{0,\ldots t\}$ such that $\beta/8 \cdot(1+\eta)^i\cdot \frac{\opt}{n} \leq ||c_i-q|| \leq  \beta/8 \cdot(1+\eta)^{i+1} \cdot \frac{\opt}{n}$.
Then $||c_j'-c_j||\leq  \beta\cdot (1+\eta)^{i+1} \cdot \frac{\opt}{n}$.
Similarly to above, the point $c_j'$ satisfies $||p-c_j'|| \geq ||p-c_j||-||c_j-c_j'|| \geq \beta\cdot \frac{\opt}{|C_j|} - \beta/8(1+\eta) \cdot \frac{\opt}{n}  \geq (1-1/4)\beta\cdot \frac{\opt}{|C_j|} > \beta/2\cdot \frac{\opt}{|C_j|}$.
\end{proof}

To reduce the dependency on the dimension, we combine this statement with the seminal theorem originally due to Johnson and Lindenstrauss~\cite{JoL84}.

\begin{lemma}[Johnson-Lindenstrauss lemma]
\label{lem:jllemma}
For any set of $n$ points $N$ in $d$-dimensional Euclidean space and any $0<\varepsilon<1/2$, there exists a distribution $\mathcal{F}$ over linear maps $f:\ell_2^d \rightarrow \ell_2^m$ with $m\in O(\varepsilon^{-2}\log n)$ such that
\[\mathbb{P}_{f\sim \mathcal{F}}[\forall x,y\in N,~ (1-\varepsilon)||x-y||\leq ||f(x)-f(y)|| \leq (1+\varepsilon) ||x-y||] \geq \frac{2}{3}.\]
\end{lemma}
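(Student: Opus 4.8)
The plan is to prove the lemma via the standard Gaussian random projection together with a $\chi^2$-concentration argument. I would let $f(x) = \frac{1}{\sqrt m}\, G x$, where $G$ is an $m\times d$ matrix with i.i.d.\ standard Gaussian entries, and take $\mathcal F$ to be the distribution this induces on linear maps $\ell_2^d \to \ell_2^m$. Since $f$ is linear, $f(x) - f(y) = f(x-y)$, so it suffices to control $\|f(v)\|$ for the at most $\binom{n}{2}$ difference vectors $v = x - y$ with $x,y \in N$; the lemma will then follow by a union bound over these vectors, provided each individual failure probability is at most $\tfrac{2}{3}\binom{n}{2}^{-1}$.

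First I would reduce to a single fixed unit vector. By homogeneity it is enough to show that for every fixed unit vector $u \in \R^d$,
\[
\Pr_{G}\!\big[\,\big|\,\|f(u)\|^2 - 1\,\big| > \varepsilon\,\big] \le \tfrac{2}{3}\binom{n}{2}^{-1},
\]
because $\sqrt{1-\varepsilon} \ge 1-\varepsilon$ and $\sqrt{1+\varepsilon} \le 1+\varepsilon$ for $0 < \varepsilon < 1/2$, so applying this to $u = v/\|v\|$ for each difference vector $v$ and taking square roots gives exactly the two-sided norm bound required. For a fixed unit $u$, each coordinate $(Gu)_i = \sum_j G_{ij} u_j$ is a standard Gaussian (as $\|u\|=1$) and the $m$ coordinates are independent, so $\|f(u)\|^2 = \frac1m \sum_{i=1}^m (Gu)_i^2$ equals $\frac1m$ times a $\chi^2_m$ random variable, whose mean is exactly $1$.

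The technical heart is a sub-exponential tail bound for $\frac1m \chi^2_m$: I would show $\Pr[\tfrac1m\chi^2_m \ge 1+\varepsilon] \le e^{-c m \varepsilon^2}$ and $\Pr[\tfrac1m\chi^2_m \le 1-\varepsilon] \le e^{-c m\varepsilon^2}$ for an absolute constant $c>0$ (one may take $c = 1/8$ for $\varepsilon \in (0,1)$) via the Chernoff/moment-generating-function method: $\mathbb{E}[e^{\lambda (Gu)_i^2}] = (1-2\lambda)^{-1/2}$ for $\lambda < 1/2$, hence $\mathbb{E}[e^{\lambda \|f(u)\|^2}] = (1-2\lambda/m)^{-m/2}$, and optimizing $\lambda(1\pm\varepsilon) - \tfrac m2 \log(1 - 2\lambda/m)$ over $\lambda$ together with the second-order estimate $\log(1-t) = -t - t^2/2 - \dots$ produces the stated exponent. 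Finally, taking $m = \lceil C\,\varepsilon^{-2}\log n \rceil$ with $C$ a large enough absolute constant (depending only on $c$) makes $2e^{-cm\varepsilon^2} \le \tfrac{2}{3n^2} \le \tfrac{2}{3}\binom{n}{2}^{-1}$, so the union bound over the difference vectors leaves success probability at least $2/3$, and $m \in O(\varepsilon^{-2}\log n)$ as claimed.

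The main obstacle is obtaining the $\varepsilon^2$ (rather than the weaker $\varepsilon$) dependence in the exponent of the $\chi^2$ tail bound; this is precisely what forces $m = O(\varepsilon^{-2}\log n)$ and requires the careful Taylor expansion of $\log(1-2\lambda/m)$ near the origin after the Chernoff optimization. Everything else --- linearity, the reduction to unit vectors, and the final union bound --- is routine. If one prefers not to reprove a classical result, an alternative is simply to cite a standard reference (e.g.\ \cite{JoL84} or one of the later simplified proofs); the construction above works verbatim with $\pm 1$ (Rademacher) entries in place of Gaussians.
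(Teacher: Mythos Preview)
Your proof is correct and follows the standard Gaussian-projection argument for the Johnson--Lindenstrauss lemma. The paper, however, does not prove this lemma at all: it is stated as a classical result with a citation to~\cite{JoL84} and used as a black box. Your proposal thus goes beyond what the paper does; the alternative you mention at the end---simply citing the result---is exactly what the authors do.

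One minor arithmetic slip: in the union-bound setup you say each individual failure probability should be at most $\tfrac{2}{3}\binom{n}{2}^{-1}$, which would only guarantee success probability $\ge 1/3$. The bound you actually establish at the end, $2e^{-cm\varepsilon^2} \le \tfrac{2}{3n^2}$, is in fact at most $\tfrac{1}{3}\binom{n}{2}^{-1}$, so the union bound does give success probability $\ge 2/3$ as claimed; only the intermediate sentence needs the constant adjusted.
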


It is easy to see  that Johnson-Lindenstrauss type embeddings preserve the Euclidean $k$-means cost of any clustering, as the cost of any clustering can be written in terms of pairwise distances (see also  Fact~\ref{fact:magicformula} in Section~\ref{sec:euclid}).
Since the distribution over linear maps $\mathcal{F}$ can be chosen obliviously with respect to the points, this extends to distribution stability of a set of $k$ candidate centers as well.

Combining Lemmas~\ref{lem:jllemma} and~\ref{lem:discrete} gives us the following corollary.

\begin{cor}
\label{cor:discrete}
Let $A$ be a set of points in $d$-dimensional Euclidean space with a clustering $C=\{C_1,\ldots C_k\}$ and centers $S=\{c_1,\ldots c_k\}$ such that $C$ is $\beta$-perturbation stable. 
Then there exists a $(A,F,||\cdot||^2,k)$-clustering instance with clients $A$, $n^{\text{poly}(\varepsilon^{-1}})$ centers $F$ and a subset $S'\subset F\cup A$ of $k$ centers such that $C$ and $S'$ is $O(\beta)$ stable and the cost of clustering $A$ with $S'$ is at most $(1+\varepsilon)$ times the cost of clustering $A$ with $S$.
\end{cor}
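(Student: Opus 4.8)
The plan is to derive the corollary by composing the Johnson--Lindenstrauss embedding of Lemma~\ref{lem:jllemma} with the discretization of Lemma~\ref{lem:discrete}, using the $p=2$ (i.e.\ $\|\cdot\|^2$) version of the latter, which is valid for any constant $p$ as noted after its statement.

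\textbf{Step 1: dimension reduction.} First I would apply Lemma~\ref{lem:jllemma} to the set $A\cup S$ of at most $2n$ points, obtaining a linear map $f:\ell_2^d\to\ell_2^m$ with $m=O(\eps^{-2}\log n)$ that preserves every pairwise distance among $A\cup S$ up to a factor $(1\pm\eps)$ (an event of probability at least $2/3$, which suffices for an existence claim). Since $f$ is linear it maps each cluster centroid to the centroid of the images, so $f(S)$ is exactly the center set of $f(C)=\{f(C_1),\dots,f(C_k)\}$. By Fact~\ref{fact:magicformula} the optimal $1$-means cost of a cluster is a fixed linear combination of its intra-cluster squared distances, so for \emph{every} clustering the optimal $k$-means cost scales by a factor in $[(1-\eps)^2,(1+\eps)^2]$ under $f$; in particular the optimum $\opt'$ of the projected instance lies in $[(1-\eps)^2,(1+\eps)^2]\,\opt$ and $\cost(f(C),f(S))\le(1+\eps)^2\cost(C,S)$, so the ``constant approximation'' hypothesis (condition~1 of Lemma~\ref{lem:discrete}) is preserved. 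For distribution stability, any point $x\in C_i$ with $\cost(x,c_j)\ge\beta\,\opt/|C_j|$ (some $j\neq i$) satisfies
\[\|f(x)-f(c_j)\|^2\ \ge\ (1-\eps)^2\,\beta\,\frac{\opt}{|C_j|}\ \ge\ \frac{(1-\eps)^2}{(1+\eps)^2}\,\beta\,\frac{\opt'}{|C_j|},\]
so $(f(C),f(S))$ is $\Omega(\beta)$-distribution stable; the obliviousness of $f$ (cf.\ the discussion preceding the corollary) is what lets all these estimates hold simultaneously.

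\textbf{Step 2: discretize in low dimension, then compose.} Next I would invoke Lemma~\ref{lem:discrete} on the projected instance $(f(A),\mathbb{R}^m,\|\cdot\|^2,k)$ with clustering $f(C)$, centers $f(S)$, its $\Omega(\beta)$ stability parameter, and error parameter $\eps$, choosing the net granularity $\eta$ to be a sufficiently small absolute constant (smaller than a fixed constant times $\min(\beta,\eps)$, so that the hypothesis $\min(\beta,\eps)>2\eta$ holds even after the constant-factor loss of Step~1). This produces a candidate set $F\subset\mathbb{R}^m$ of size $O(n\log n\,(1/\eta)^{m+2})=n^{\text{poly}(1/\eps)}$ and a size-$k$ subset $S'\subseteq F$ with $\cost(f(C),S')\le(1+\eps)\cost(f(C),f(S))$ and with $(f(C),S')$ still $\Omega(\beta)$-distribution stable. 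Identifying the clients of $(f(A),F,\|\cdot\|^2,k)$ with $A$ via the index correspondence, this is the desired instance: $|F|=n^{\text{poly}(1/\eps)}$, the pair $(C,S')$ is $\Omega(\beta)$-distribution stable, and $\cost(f(C),S')\le(1+\eps)^3\cost(C,S)$, so a constant rescaling of $\eps$ gives the stated $(1+\eps)$ guarantee.

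\textbf{Main obstacle.} The individual ingredients are essentially black boxes, so the only genuine point to get right is that distribution stability survives the projection: one must argue that $\opt$ neither drops nor blows up by more than a constant factor under $f$, so that the ratio $\opt/|C_j|$ appearing in Definition~\ref{defn:betadelta} is preserved up to a constant. This is handled cleanly by expressing optimal $1$-means cluster costs through intra-cluster pairwise distances (Fact~\ref{fact:magicformula}), which are exactly the quantities Lemma~\ref{lem:jllemma} controls. A secondary bookkeeping issue is coordinating $\eta$, $\eps$, and $\beta$ so the granularity hypothesis of Lemma~\ref{lem:discrete} is met after Step~1's losses.
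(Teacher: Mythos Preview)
Your proposal is correct and follows exactly the route the paper intends: the paper states only that ``Combining Lemmas~\ref{lem:jllemma} and~\ref{lem:discrete} gives us the following corollary'' without further detail, and you have supplied that detail (Johnson--Lindenstrauss to reduce dimension to $m=O(\eps^{-2}\log n)$, then the net construction of Lemma~\ref{lem:discrete} in that low dimension, yielding $|F|=(1/\eta)^{O(m)}\cdot n\log n=n^{\text{poly}(1/\eps)}$). Your handling of the main obstacle --- controlling $\opt$ under projection via Fact~\ref{fact:magicformula} so that the $\beta\,\opt/|C_j|$ threshold in Definition~\ref{defn:betadelta} survives up to constants --- is exactly the right idea and matches the paper's remark that JL ``preserves the Euclidean $k$-means cost of any clustering.''

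One minor cosmetic point: your sentence ``since $f$ is linear it maps each cluster centroid to the centroid of the images, so $f(S)$ is exactly the center set of $f(C)$'' is not needed and is not quite right as stated, because the hypothesis does not assume the $c_i$ are centroids. You already handle this correctly by including $S$ in the point set to which Lemma~\ref{lem:jllemma} is applied, so all distances $\|x-c_j\|$ are directly preserved; the centroid remark can simply be dropped.
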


\begin{rem}
This procedure can be adapted to work for general powers of cost functions. 
For Lemma~\ref{lem:discrete}, we simply rescale $\eta$.
The Johnson-Lindenstrauss lemma can also be applied in these settings, at a slightly worse target dimension of $O((p+1)^2\log((p+1)/\varepsilon)\varepsilon^{-3}\log n)$, see Kerber and Raghvendra~\cite{KeR15}.
\end{rem}

\section{Experimental Results}
\label{S:experiments}
In this section, we discuss the empirical applicability of
stability as a model to capture real-world data. 
Theorem~\ref{thm:beta-delta} states
that local search with neighborhood of size $n^{\Omega(\eps^{-3}\beta^{-1})}$
returns a solution of cost at most $(1+\eps)\opt$.
Thus, we ask the following question.
\MyFrame{For which values of $\beta$ are the random and real instances
  $\beta$-distribution-stable?
}

We focus on the $k$-means objective 
and we consider real-world and random instances with
ground truth clustering and study under which conditions
the value of the solution induced by the ground truth clustering
is close to the value of the optimal clustering with respect to 
the $k$-means objective.
Our aim is to determine (a range of) values of $\beta$ for which various data sets satisfy distribution stability.

\subsection*{Setup}
The machines used for the experiments have a processor
\texttt{Intel(R) Core(TM) i7\-3770 CPU, 3.40GHz} with four cores and a total virtual memory of 8GB running on an Ubuntu 12.04.5 LTS operating system.
We implemented the Algorithms in C++ and Python. The C++ compiler
is \texttt{g++ 4.6.3}.
Our experiments always used Local Search with a neighborhood of size $1$.
At each step, the neighborhood of the current solution was explored 
in parallel: 8 threads were created by a Python script and each of 
them correspond to a C++ subprocess that explores a 1/8 fraction of 
the space of the neighboring solutions.
The best neighboring solution found by the 8 threads was taken for the 
next step.
For Lloyd's algorithm we use the C++ implementation by Kanungo et al.~\cite{KMNPSW04} available online.

To determine the stability parameter $\beta$, we also required a lower bound on the cost. This was done via a linear relaxation describe in Algorithm~\ref{alg:LPkmeans}.
The LP for the linear program was generated via a Python script and solved using the solver \texttt{CPLEX}.
The average ratio between our upper bound given via Local Search and lower bounds given via Algorithm~\ref{alg:LPkmeans} is 1.15 and the variance for the value 
of the optimal fractional solution 
that is less than $0.5\%$ of the value of the optimal solution. 
Therefore, our estimate of $\beta$ is quite accurate.

\begin{algorithm}
  \textbf{Input:} A set of clients $\clients$, a set of candidates 
  centers $\candidates$, a number of centers $k$, a distance function 
  $\dist$.
  $$\min \sum_{a \in \clients} \sum_{b \in \candidates} x_{a,b}\cdot 
  \dist(a,b)^2
  $$ 
  subject to,
  $$
  \begin{array}{l@{\quad} r c c c}
    &\sum_{b \in \candidates} y_b &\le& k\\ 
    \forall a \in \clients,& \sum_{b \in \candidates} x_{a,b} &=& 1\\
    \forall a \in \clients,~\forall b \in \candidates,&
    y_b &\ge& x_{a,b}\\
    \forall a \in \clients,~\forall b \in \candidates,& x_{a,b}
                                  &\ge&0
  \end{array}
  $$
  \caption{Linear relaxation for the $k$-means problem.}
  \label{alg:LPkmeans}
\end{algorithm}



\subsection{Real Data}
In this section, we focus on four classic real-world datasets with
ground truth clustering: \texttt{abalone}, 
\texttt{digits}, \texttt{iris}, and \texttt{mo\-ve\-ment\_libras}.
\texttt{abalone}, \texttt{iris}, and \texttt{mo\-ve\-ment\_libras}
have been used in various works 
(see~\cite{DMRBP09,DH73,FSBG02,F36,SRG04}
for example) and are available online at the UCI Machine learning 
repository~\cite{L13}.

The \texttt{abalone} dataset consists of 8 physical characteristics 
of all the individuals of a population of abalones. 
Each abalone corresponds to a point in a 
8-dimensional Euclidean space.
The ground truth clustering consists in 
partitioning the points according to the age of the abalones.

The \texttt{digits} dataset consists of 8px-by-8px images of handwritten
digits from the standard machine learning library 
scikit-learn~\cite{Petal11}. Each image is associated to a point
in a 64-dimensional Euclidean space where each pixel corresponds to a 
coordinate. The ground truth clustering consists in partitioning 
the points according to the number depicted in their corresponding
images.

The \texttt{iris} dataset consists of the sepal and petal lengths 
and widths of all the individuals of a population of iris plant 
containing 3 different types of iris plant. Each plant is associated
to a point in 4-dimensional Euclidean space. The ground truth
clustering consists in partitioning the points according to the type of 
iris plant of the corresponding individual.

The \texttt{Movement\_libras} dataset consists of a set of instances
of 15 hand movements in LIBRAS\footnote{LIBRAS is the official Brazilian
sign language}. Each instance is a curve that is mapped in a 
representation with 90 numeric values representing the coordinates 
of the movements. 
The ground truth clustering consists in partitioning the 
points according to the type of the movement they correspond to.

\begin{table}[h]
  \centering
  \footnotesize
  \begin{tabular}{|l|r|r|r|r|}
    \hline
    Properties & \texttt{Abalone} & \texttt{Digits} & \texttt{Iris} & \texttt{Movement\_libras}\\
      \hline
      Number of points  & 636 & 1000 & 150 & 360\\
      \hline
      Number of clusters & 28 & 10 & 3 & 15 \\
      \hline
      Value of ground truth clustering & 169.19 & 938817.0 & 96.1 & 780.96\\ 
      \hline
      Value of fractional relaxation &4.47& 855567.0&83.96 & 366.34\\
      \hline
      Value of Algorithm~\ref{alg:LS} &4.53& 855567.0&83.96 &369.65\\ 
      \hline
      $\%$ of pts correct. class. by Alg.~\ref{alg:LS}
       & 17& 76.2 & 90 & 39\\
      \hline
      $\beta$-stability & 1.27e-06& 0.0676&0.2185 & 0.0065\\ 
      \hline
    \end{tabular}
    \caption{Properties of the real-world instances with ground truth clustering. The neighborhood size for Algorithm~\ref{alg:LS} is 1.}
    \label{Table:realinst}
  \end{table}

Table~\ref{Table:realinst} shows the properties of the four instances.

For the \texttt{Abalone} and \texttt{Movement\_libras} instances,
the values of an optimal solution is much smaller than the value
of the ground truth clustering. 
Therefore the $k$-means objective function might not be ideal as a recovery mechanism. 
Since Local Search optimizes with respect to the $k$-means objective, the clustering output by Local Search is far from the ground truth clustering for those instances:
the percentage of points correctly classified by 
Algorithm~\ref{alg:LS} is at most $17\%$ for the 
\texttt{Abalone} instance and at most $39\%$ for the 
\texttt{Movement\_libras} instance.
For the \texttt{Digits} and \texttt{Iris} instances
the value of the ground truth clustering is at most 
1.15 times the optimal value.
In those cases, the number of points correctly classified 
is much higher: $90\%$ for the \texttt{Iris} instance and
$76.2\%$ for the \texttt{Digits} instance.

The experiments also show that the $\beta$-distribution-stability 
condition is satisfied for $\beta > 0.06$ for the \texttt{Digits},
\texttt{Iris} and \texttt{Movement\_libras} instances. This shows 
that the $\beta$-distribution-stability
condition captures the structure of some famous real-world instances
for which the $k$-means objective is meaningful for finding
the optimal clusters. We thus make the following observations.
\begin{obs}\label{obs:realworld:beta}
  If the value of the ground truth clustering is close
  to the value of the optimal solution, then one can expect
  the instance satisfy the $\beta$-distribution stability 
  property for some  constant $\beta$.
\end{obs}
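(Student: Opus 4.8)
The plan is to treat Observation~\ref{obs:realworld:beta} for what it is: a ``one can expect'' statement whose status is a conditional structural lemma plus the experimental evidence already tabulated, not a theorem deducible from the hypothesis alone. So the goal is to explain \emph{why} ``$\cost(C^*)$ close to $\opt$'' forces $\beta$-distribution-stability for constant $\beta$ \emph{on the instances that arise in practice}, and to be explicit about what extra structure is silently being used.

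First I would make ``close'' quantitative: assume $\cost(C^*) \le (1+\gamma)\opt$ for a small $\gamma$. The quantity controlling $(\beta,\delta)$-distribution-stability is, for each ordered pair $(i,j)$ and each $x \in C_i^*$, whether $\cost(x,c_j^*) \ge \beta\,\opt/|C_j^*|$; call $x$ \emph{$\beta$-ambiguous} if this fails for some $j \ne i$. The first step is to bound the number and the total $C^*$-cost of $\beta$-ambiguous points from two sides. On one side, a $\beta$-ambiguous $x \in C_i^*$ has $\dist(x,c_j^*) < \beta\,\opt/|C_j^*|$, so by the triangle inequality $\dist(c_i^*,c_j^*) \le g_x + \beta\,\opt/|C_j^*|$ where $g_x=\dist(x,c_i^*)$; on the other side, if many $\beta$-ambiguous points had small $g_x$, reassigning each to its near center $c_j^*$ would exhibit a clustering of cost noticeably below $\cost(C^*)$, which---since $\cost(C^*)\le(1+\gamma)\opt$ and $\cost(C^*)\ge\opt$---caps how many such points there can be. Combining the two yields a bound of the form ``the fraction of $\beta$-ambiguous points inside each $C_j^*$ is at most $\delta(\beta,\gamma)$'', which is exactly the $(\beta,\delta)$ condition, and one would then invoke Definition~\ref{defn:betadelta} directly.

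Second I would isolate the regime in which $\delta(\beta,\gamma)$ is genuinely small for a constant $\beta$: roughly balanced cluster costs, $\sum_{x\in C_i^*} g_x = \Theta(\opt/k)$, and sizes $|C_i^*| = \Theta(n/k)$, so that $\opt/|C_j^*| = \Theta(k\,\opt/n)$ is within a constant factor of the typical point cost $\opt/n$. In that regime the triangle-inequality computation above shows $\beta$-ambiguity requires $\dist(c_i^*,c_j^*)=O(\beta k)\cdot(\opt/n)$, i.e.\ the centers must be poorly separated; and ``ground truth meaningful'', i.e.\ $\cost(C^*)\approx\opt$, is precisely the regime in which they are well separated. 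I would then close the loop empirically: Table~\ref{Table:realinst} shows that on \texttt{Digits} and \texttt{Iris}, where $\cost(C^*)/\opt \le 1.15$, the measured $\beta$ is a genuine constant ($0.07$ and $0.22$), while on \texttt{Abalone} and \texttt{Movement\_libras}, where $\cost(C^*)/\opt$ is large, $\beta$ degrades by orders of magnitude, so hypothesis and conclusion track each other on real data.

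The main obstacle is that the implication is simply false unconditionally: one can build clusterings with $\cost(C^*)$ arbitrarily close to $\opt$ that contain points arbitrarily close to a different center (two large, nearly touching true clusters, or the mixture-model phenomenon of Remark~\ref{rem:spec} where the cost-optimal solution ignores the target structure). Hence a rigorous version must import a balance condition on cluster sizes and costs and a minimum-separation condition---neither of which is part of ``ground truth $\approx$ optimum''. The honest deliverable is therefore the conditional structural lemma above together with the experimental confirmation that those side conditions do hold on the standard benchmarks; turning ``one can expect'' into ``it is always true'' is exactly the gap, and I would not attempt to close it within the paper.
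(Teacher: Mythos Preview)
Your reading of the statement is correct: this is an empirical observation, not a theorem, and you explicitly flag that the unconditional implication is false (Remark~\ref{rem:spec} is a good pointer). Where you diverge from the paper is in scope. The paper offers no structural argument whatsoever: the entire justification for Observation~\ref{obs:realworld:beta} is Table~\ref{Table:realinst}. The paragraph preceding the observation simply reports the measured $\beta$ values for the four datasets and notes the correlation with the cost ratio; the observation is then stated as a summary of that data. There is no triangle-inequality computation, no reassignment argument, no balance hypothesis.

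Your added heuristic is a reasonable attempt to \emph{explain} the empirical pattern, but be aware that the reassignment step as written is not quite a contradiction: moving $\beta$-ambiguous points to their near center $c_j^*$ keeps the same center set $S^*$, so the resulting cost is bounded below by the optimal assignment cost for $S^*$, which is itself at least $\opt$. You therefore cannot drive the cost strictly below $\opt$ this way, and the argument needs either a center-deletion step (merging two nearby centers and comparing to $\opt_{k-1}$, in the spirit of the ORSS condition) or an explicit separation hypothesis, which you do eventually invoke. This is fine as a heuristic sketch, but it is strictly more than what the paper claims or proves; if you present it, make clear it is your own elaboration and not a reconstruction of the paper's reasoning.
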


The experiments show that Algorithm~\ref{alg:LS} with
neighborhood size 1 ($s=1$) is very efficient for all those instances
since it returns a solution whose value is within 2\% of the optimal 
solution for the \texttt{Abalone} instance and a within 0.002\% for 
the other instances. 
Note that the running time of Algorithm~\ref{alg:LS} with $s =1$ 
is $\tilde O(k\cdot n/\eps)$ (using a set of $O(n)$ candidate centers)
and less than 15 minutes for all the instances.
We make the following observation.
\begin{obs}\label{obs:realworld:LS}
  If the value of the ground truth clustering is close 
  to the value of the optimal solution, 
  then one can expect both clusterings to agree 
  on a large fraction of points.
\end{obs}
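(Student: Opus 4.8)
The final statement is an empirical observation, so the plan is to isolate a precise \emph{conditional} version that admits a proof and then treat the experiments of Table~\ref{Table:realinst} (together with Observation~\ref{obs:realworld:beta}) as the support for the unconditional wording. The conditional claim I would prove is: \emph{if the optimal clustering $C^*=\{C^*_1,\dots,C^*_k\}$ is $\beta$-distribution stable with centers $S^*$ (which, by Observation~\ref{obs:realworld:beta}, is exactly the regime in which $\cost(G)\le(1+\eps)\opt$ is observed), and the ground-truth clustering $G$ satisfies $\cost(G)\le(1+\eps)\opt$, then $G$ and $C^*$ disagree on at most $O(\eps)\,n$ points plus the total number of points contained in some set of at most $O(\eps^{-3}\beta^{-1})$ exceptional clusters of $C^*$.} The proof reuses the inner-ring machinery of Section~\ref{sec:distributionstable} with minor changes: the local-search solution $\local$ is replaced by $G$, and ``local optimality of $\local$'' is replaced by the global bound $\cost(G)\le(1+\eps)\opt\le 2\opt$. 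The intuition is the familiar one --- if $\cost(G)=\opt$ and the optimum is unique then $G=C^*$ outright, and distribution stability is what lets one turn ``$\approx$'' in cost into ``$\approx$'' in structure.

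The steps, in order, are as follows. (1) Apply Markov's inequality to $\opt=\sum_i\sum_{x\in C^*_i}g_x$: at most $\eps^{-3}\beta^{-1}$ clusters of $C^*$ are expensive, and for each cheap cluster Lemma~\ref{lem:sizeIR} gives $|\IR^{\eps^2}_i\cap C^*_i|>(1-\eps)|C^*_i|$ while Lemma~\ref{lem:IRinter} gives that the balls $\IR^{\eps}_i$ over cheap $i$ are pairwise disjoint (using $\delta=0$, $\eps<1/3$). (2) Charge $G$-cost: for a cheap cluster $C^*_i$, any inner-ring point $x\in\IR^{\eps^2}_i\cap C^*_i$ served in $G$ by a center outside $\IR^{\eps}_i$ costs at least $(\eps-\eps^2)\beta\opt/|C^*_i|$ in $G$ by the triangle inequality; since the inner ring has $>(1-\eps)|C^*_i|$ points and $\cost(G)\le 2\opt$, a cheap cluster with a $p$-fraction of its inner ring served outside $\IR^{\eps}_i$ contributes $\Omega(p\eps\beta)\cdot\opt$ to $\cost(G)$, so all but $O(\eps^{-1}\beta^{-1})$ cheap clusters have at least a $(1-\eps)$ fraction of their inner ring served by centers inside $\IR^{\eps}_i$. (3) Run the counting argument of Lemma~\ref{lem:nbin-delta} verbatim --- each center of $G$ lies in at most one $\IR^{\eps}_i$ by disjointness --- to conclude that all but $O(\eps^{-3}\beta^{-1})$ cheap clusters contain a \emph{unique} center $s(i)$ of $G$ inside $\IR^{\eps}_i$; call the remaining $\ge k-O(\eps^{-3}\beta^{-1})$ clusters \emph{good}, let $G_{s(i)}$ be the cluster of $G$ served by $s(i)$, and set the partial bijection $\phi^{-1}(C^*_i)=G_{s(i)}$ (distinct, since the $\IR^{\eps}_i$ are disjoint); for good $C^*_i$ the center $s(i)$ serves $\ge(1-\eps)^2|C^*_i|$ points of $C^*_i$. (4) Bound $|C^*_i\triangle G_{s(i)}|$ for good $i$: the forward direction is $|C^*_i\setminus G_{s(i)}|\le(\text{non-inner-ring points of }C^*_i)+(\text{badly-served inner-ring points})\le 2\eps|C^*_i|$; for the reverse, a point of $G_{s(i)}$ lying in another cluster $C^*_j$ is served in $G$ by $s(i)\notin\IR^{\eps}_j$, so if $C^*_j$ is good it is one of the $\le 2\eps|C^*_j|$ exceptions for $C^*_j$ and otherwise it is absorbed into the $\sum_{j\ \text{exc}}|C^*_j|$ term. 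Summing over good $i$ and over all $j$ gives total disagreement $O(\eps)\,n+\sum_{j\ \text{exc}}|C^*_j|$, which is the claim; the $k$-means case follows by squaring distances and invoking Lemma~\ref{lem:TI} at each triangle-inequality step.

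The main obstacle is not any individual step but the fact that the unconditional statement is false: Remark~\ref{rem:spec} exhibits instances with constant spectral separability, hence constant distribution stability, where a near-optimal clustering disagrees wildly with the target, so a stability hypothesis is unavoidable. Even with that hypothesis the conclusion is only as strong as the experimental regime allows --- step~(3) bounds the \emph{number} of exceptional clusters by $O(\eps^{-3}\beta^{-1})$ but says nothing about their \emph{total size}, so ``$O(\eps)\,n$ disagreement'' is informative only when the clusters of $C^*$ are roughly balanced with $k\gg\eps^{-3}\beta^{-1}$ or the exceptional clusters happen to be small, neither of which is guaranteed for the small-$k$ instances of Table~\ref{Table:realinst}. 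This gap is precisely why the assertion is recorded as an \emph{Observation} backed by the data rather than as a theorem. The remaining technical nuisance is that for $k$-means the ``centers'' of $G$ are its cluster centroids, so the triangle inequalities in step~(2) and the size estimates in step~(4) must be routed through each centroid via Fact~\ref{fact:magicformula}, but this is routine and does not affect the argument.
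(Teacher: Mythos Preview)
The paper does not prove this statement at all: it is recorded purely as an empirical observation, supported only by the four rows of Table~\ref{Table:realinst} (specifically, the contrast between \texttt{Digits}/\texttt{Iris}, where the ground-truth cost is within $1.15\cdot\opt$ and $76$--$90\%$ of points are correctly classified, versus \texttt{Abalone}/\texttt{Movement\_libras}, where the ground-truth cost is far from $\opt$ and only $17$--$39\%$ are correctly classified). There is no argument in the paper beyond this data.

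Your proposal therefore goes well beyond what the paper does. The conditional theorem you isolate and the proof sketch are essentially correct: replacing the local-optimality input $\cost(\local)\le 5\opt$ in the proof of Lemma~\ref{lem:nbin-delta} by the hypothesis $\cost(G)\le(1+\eps)\opt\le 2\opt$ is exactly the right move, and steps (1)--(4) track the inner-ring machinery faithfully. You are also right that the unconditional wording is false (Remark~\ref{rem:spec}) and that the residual $\sum_{j\text{ exc}}|C^*_j|$ term cannot be controlled without a balance assumption; this is an honest account of why the paper states an Observation rather than a theorem. One small interpretive point: the percentages in Table~\ref{Table:realinst} compare the ground truth to the \emph{output of Algorithm~\ref{alg:LS}}, not directly to $C^*$, so the ``both clusterings'' in the statement most naturally means $G$ versus the Local Search clustering. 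Your argument handles this too, since under $\beta$-distribution stability Theorem~\ref{thm:beta-delta} already gives structural closeness of $\local$ to $C^*$, and composing with your bound on $G$ versus $C^*$ yields the desired agreement; but it is worth noting that your formalization compares $G$ to $C^*$ rather than to $\local$.
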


Finally, observe that for those instances the value of an optimal 
solution to the fractional relaxation of
the linear program is very close to the optimal value of an optimal 
integral solution (since the cost of the integral solution is
smaller than the cost returned by Algorithm~\ref{alg:LS}).
This suggests that the fractional relaxation 
(Algorithm~\ref{alg:LPkmeans})
might have a small integrality gap for real-world instances.
\paragraph{Open Problem:} We believe that it would be interesting
to study the integrality gap of the classic LP relaxation for the 
$k$-median and $k$-means problems under the stability assumption
(for example $\beta$-distribution stability).

\subsection{Data generated from a mixture of $k$ Gaussians}
The synthetic data was generated via a Python script using \texttt{numpy}.
The instances consist of 1000 points generated from a mixture of $k$ Gaussians with 
the same variance $\sigma$ lying in $d$-dimensional space, where $d \in \{5,10,50\}$ and $k \in \{5,50,100\}$.
We generate 100 instances for all possible combinations of the parameters.
The means of the $k$ Gaussians are chosen uniformly and independently at
random in $\Q^d \cap (0,1)^d$. 
The ground truth clustering is the family of sets
of points generated by the same Gaussian.
We compare the value of the ground truth clustering
to the optimal value clustering.


\begin{figure}[h]
  \begin{subfigure}[t]{0.5\textwidth}
    \includegraphics[scale=0.4]{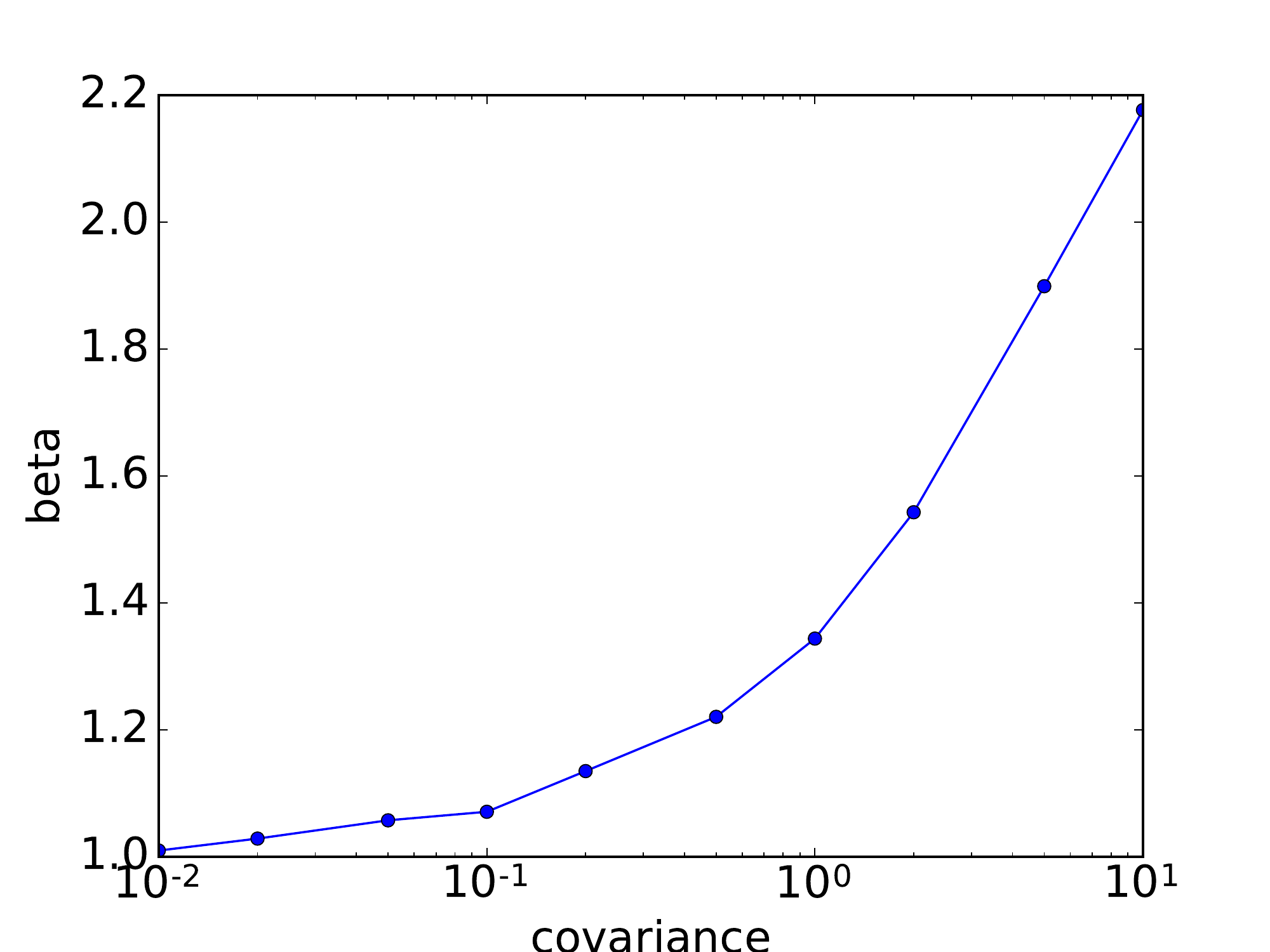}
    \caption{$k=5$, $d=2$.}
    \label{F:varvsratio_5}
  \end{subfigure}
  ~
  \begin{subfigure}[t]{0.5\textwidth}
  \includegraphics[scale=0.4]{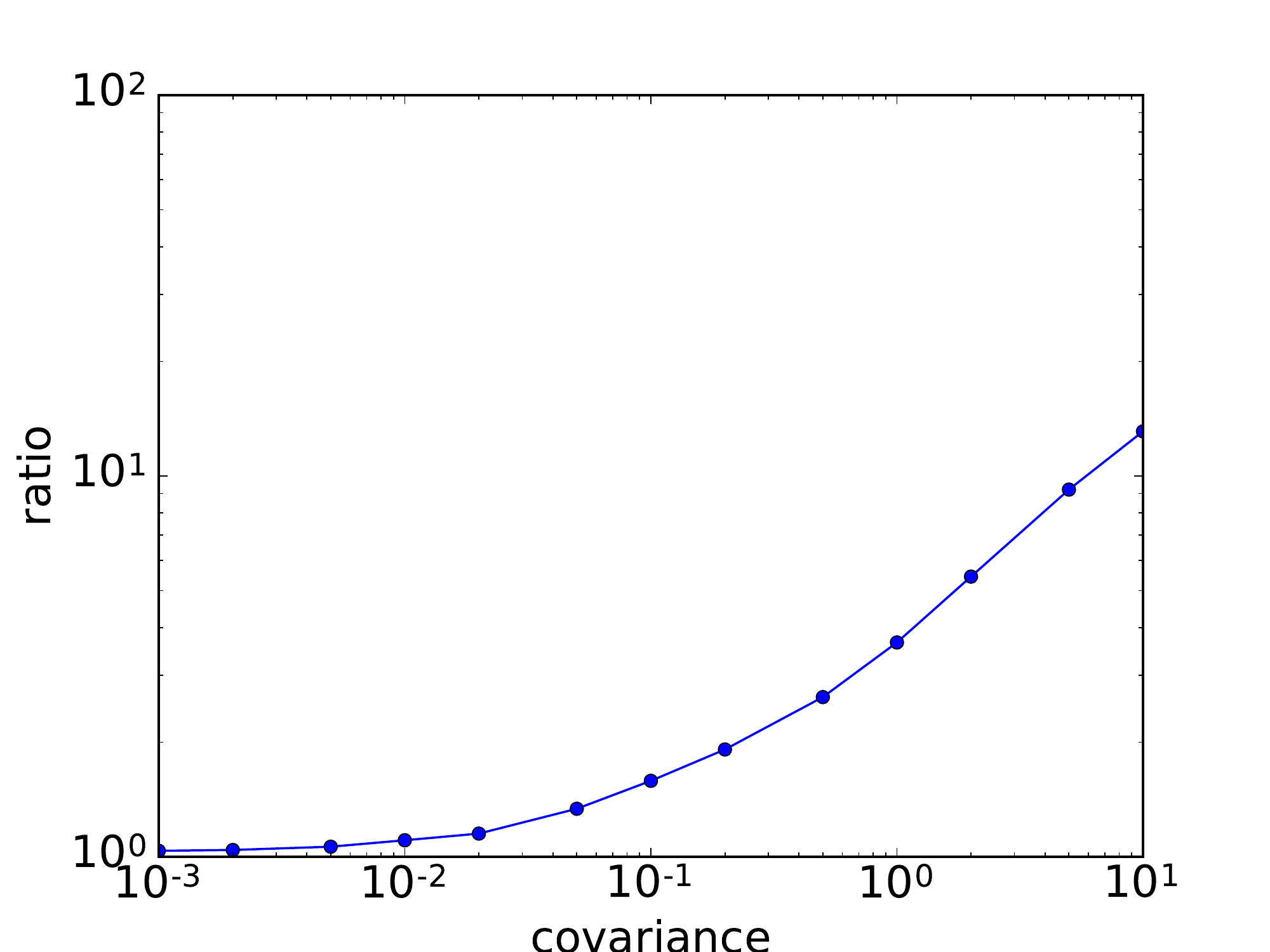}
    \caption{$k=50$, $d=2$.}
    \label{F:varvsratio_50}
  \end{subfigure}
  ~
    \begin{subfigure}[t]{0.5\textwidth}
      \includegraphics[scale=0.4]{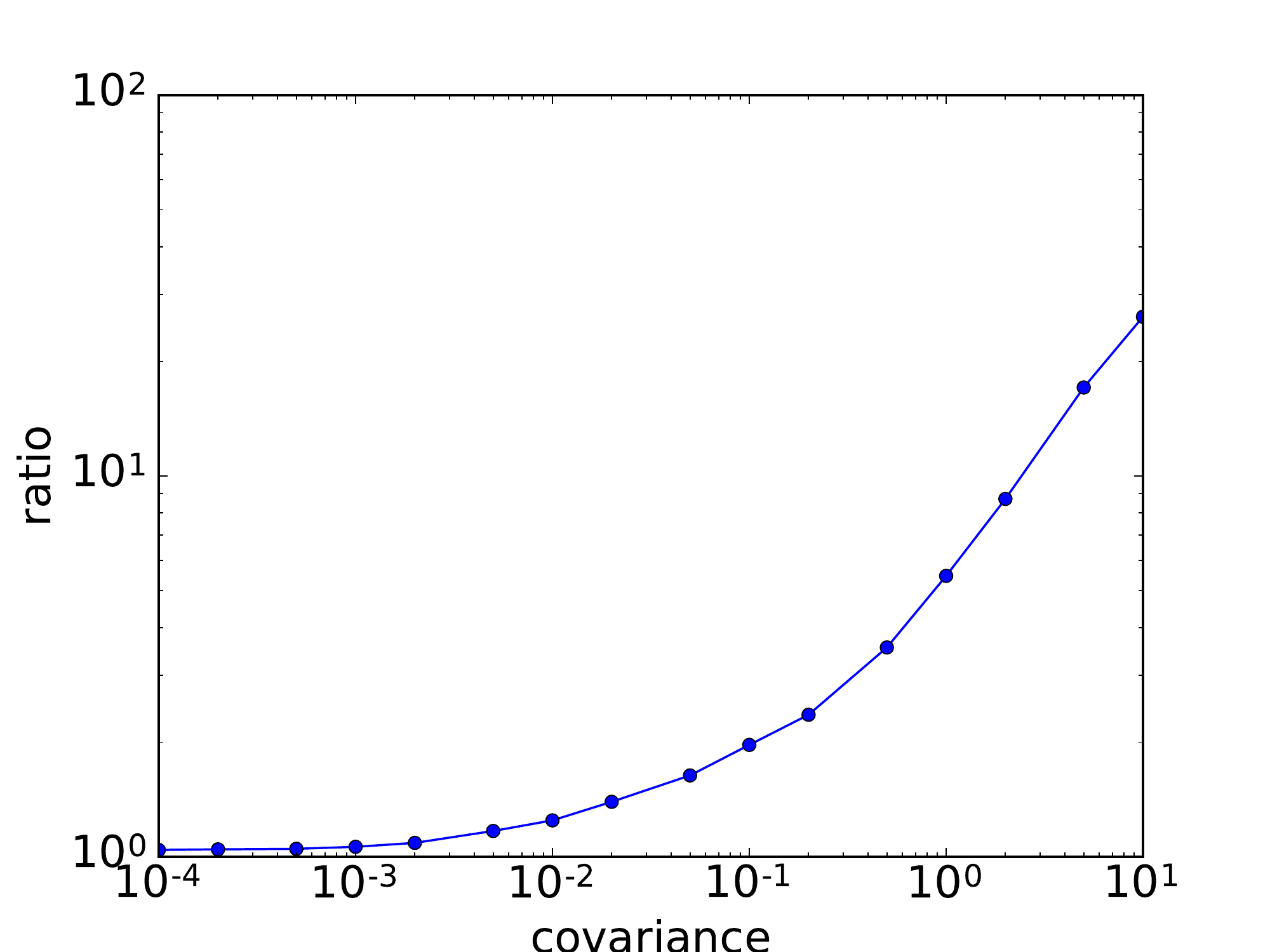}
      \caption{$k=100$, $d=2$.}
      \label{F:varvsratio_100}
    \end{subfigure}
  \caption{The ratio of the average $k$-means cost induced by the 
        means over the average optimal cost vs the variance for 
    2-dimensional instances generated from a mixture
    of $k$ Gaussians ($k \in \{5,50,100\}$).
    We observe that the $k$-means objective becomes ``relevant''
    (\ie is less than 1.05 times the optimal value)
    for finding the clustering induced by Gaussians when
    the variance is less than $0.1$ for $k=5$, less than $0.02$ 
    when $k=50$, and less than $0.0005$ when $k=100$.}
  \label{F:varvsratio_2d}
\end{figure}

The results are presented in Figures~\ref{F:varvsratio_2d} 
and~\ref{F:varvsratio_10d}.
We observe that when the variance $\sigma$ is large, the ratio
between the average value of the ground truth clustering and the 
average value of the optimal clustering becomes more important.
Indeed, the ground truth clusters start to overlap, allowing 
to improve the objective value by defining slightly 
different clusters. Therefore, the use of the $k$-means 
or $k$-median objectives
for modeling the recovery problem 
is not suitable anymore.
In these cases, since Local Search optimizes the solution with
respect to the current cost, the clustering output by local
search is very different from the ground truth clustering.
We thus identify instances for which the $k$-means objective 
is meaningful and so, Local Search is a relevant heuristic.
This motivates the following defintion.

\begin{defn}
  We say that a variance $\hat \sigma$ is \emph{relevant} if, for the
  $k$-means instances generated with variance $\hat \sigma$ the ratio
  between the average value of the ground truth clustering and the
  optimal clustering is less than 1.05.
\end{defn}

We summarize in Table~\ref{T:relev_var} the relevant variances
observed.
\begin{table}[h]
  \centering
  \begin{tabular}[h]{|l|l|c|r|}
    \hline
    \backslashbox{Number of dimensions}{Values of $k$}
                             &  $5$ & $50$ & $100$ \\
    \hline 
    \hline
    2  & $<0.05$ & $<0.002$ & $<0.0005$ \\
    \hline
    10 & $<15$ & $<1$ & $<0.5$\\
    \hline
    50 & $<1000000.0$ & $<100$ & $<7$ \\
    \hline
  \end{tabular}
  \caption{Relevant variances for $k \in \{5,50,100\}$ and 
    $d \in \{2,10,50\}$.}
  \label{T:relev_var}
\end{table}

\begin{figure}[h]
  \begin{subfigure}[t]{0.5\textwidth}
    \includegraphics[scale=0.4]{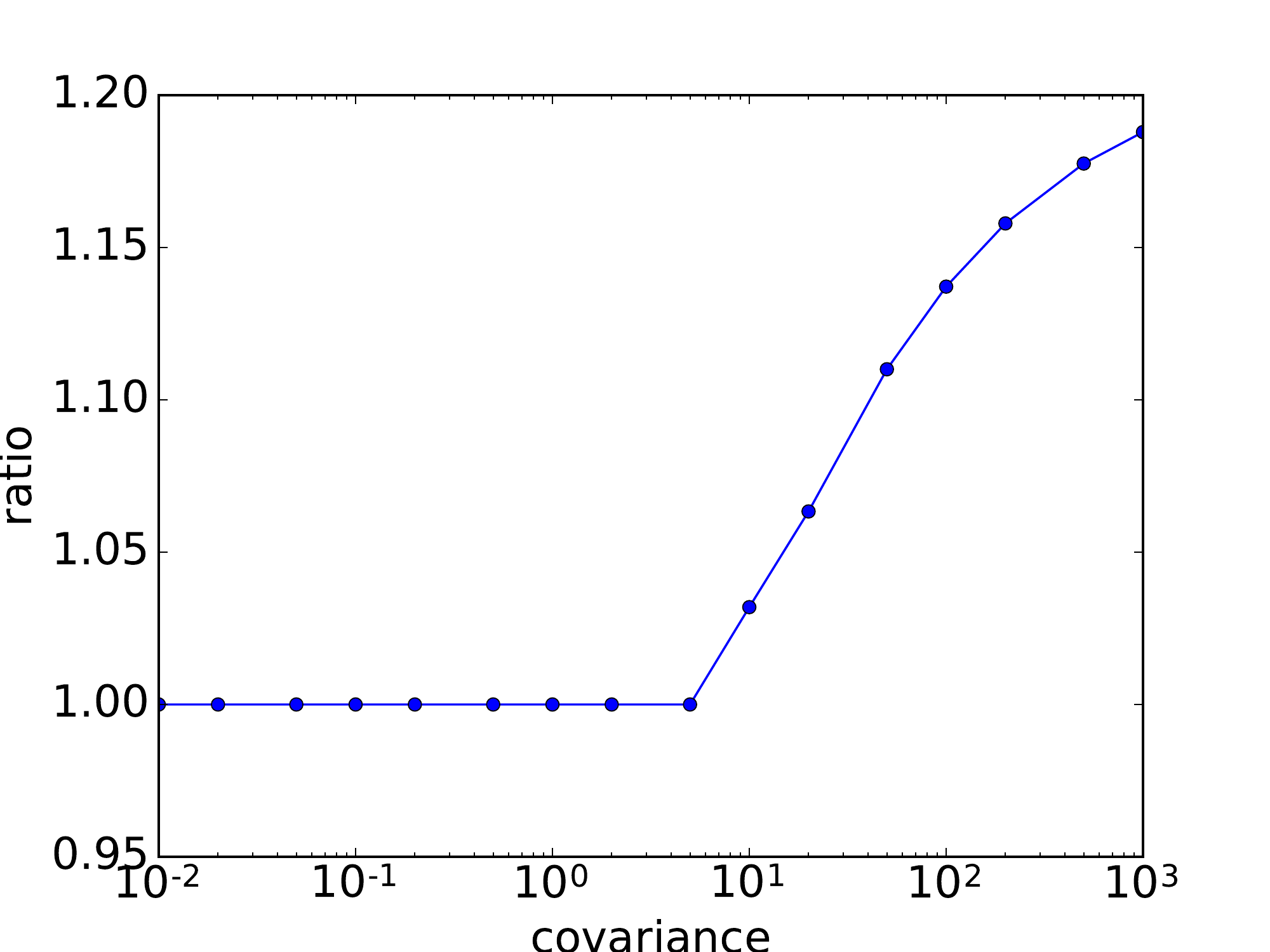}
    \caption{$k=5$, $d=10$.}
    \label{F:varvsratio_10d_5}
  \end{subfigure}
  ~
  \begin{subfigure}[t]{0.5\textwidth}
  \includegraphics[scale=0.4]{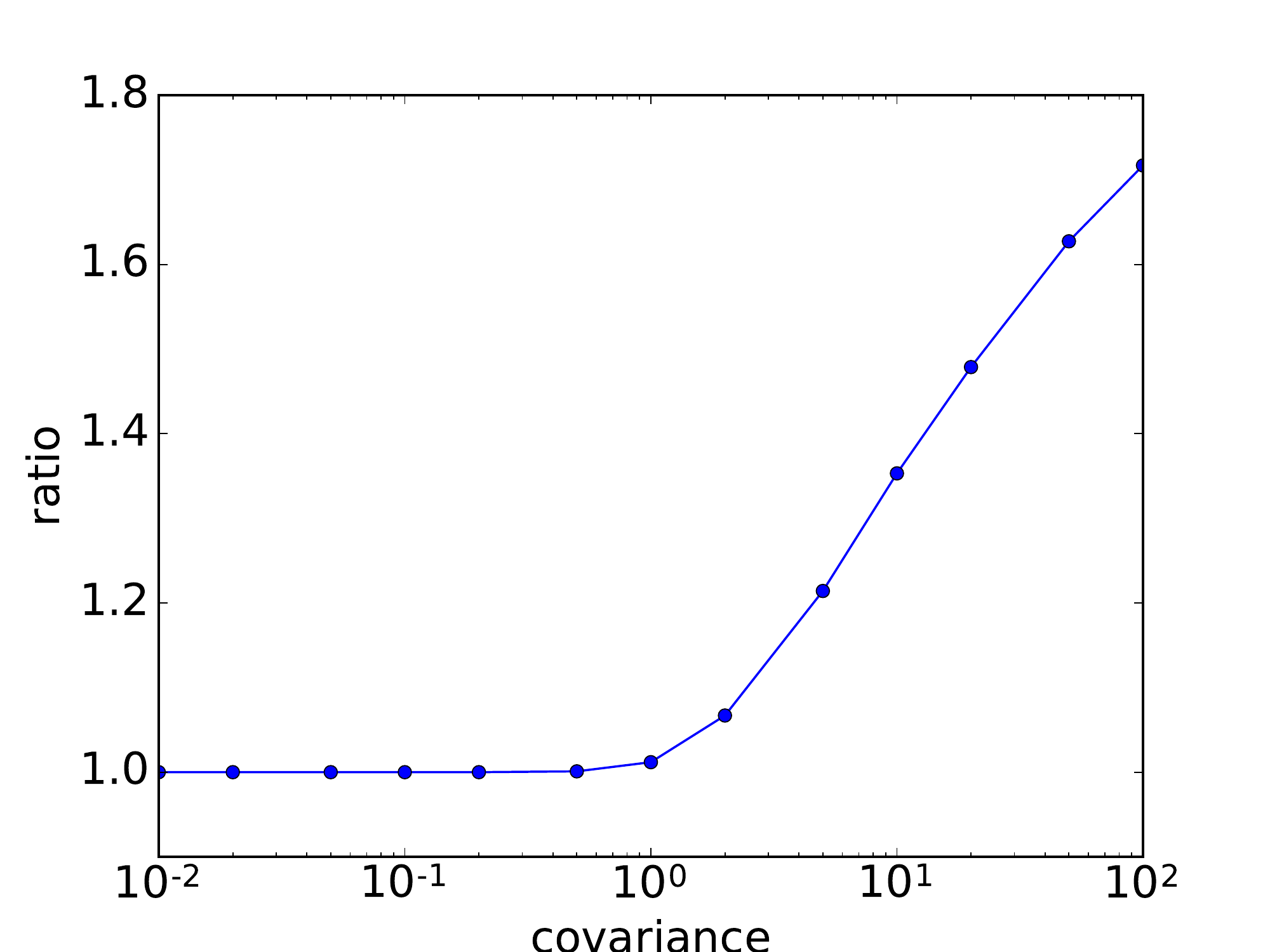}
    \caption{$k=50$, $d=10$.}
    \label{F:varvsratio_10d_50}
  \end{subfigure}
  ~
    \begin{subfigure}[t]{0.5\textwidth}
      \includegraphics[scale=0.4]{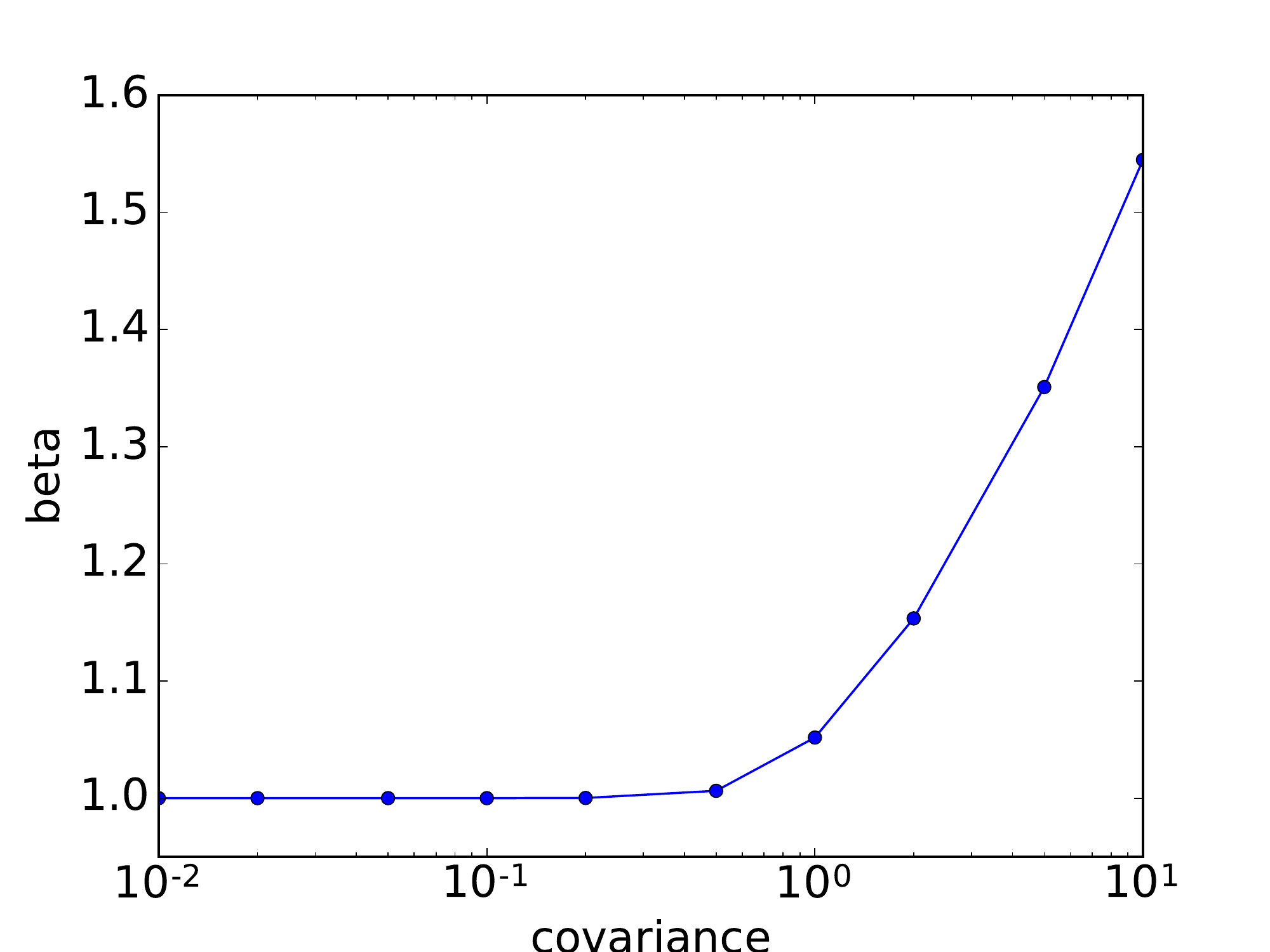}
      \caption{$k=100$, $d=10$.}
      \label{F:varvsratio_10d_100}
    \end{subfigure}
  \caption{The ratio of the average $k$-means cost induced by the 
    means over the average optimal cost vs the variance for 
    10-dimensional instances generated from a mixture
    of $k$ Gaussians ($k \in \{5,50,100\}$).
    We observe that the $k$-means objective becomes ``relevant''
    (\ie is less than 1.05 times the optimal value)
    for finding the clustering induced by Gaussians when
    the variance is less than $0.1$ for $k=5$, less than $0.02$ 
    when $k=50$, and less than $0.0005$ when $k=100$.}
  \label{F:varvsratio_10d}
\end{figure}

We consider the $\beta$-distribution-stability condition 
and ask whether the instances
generated from a relevant variance satisfy this condition for 
constant values of $\beta$.
We remark that $\beta$ can take arbitrarily small values.

We thus identify \emph{relevant} variances (see Table~\ref{T:relev_var})
for each pair $k,d$,
such that optimizing the
$k$-means objective in a $d$-dimensional instances generated 
from a relevant variance corresponds to finding 
the underlying clusters.







\paragraph{On stability conditions.} 
We now study the $\beta$-distribution-stability condition 
for random instances generated from a mixture of $k$ Gaussians.
The results are depicted in
Figures~\ref{F:varvsbeta_2d} and~\ref{F:varvsbeta_10d}.

\begin{figure}[h]
  \begin{subfigure}[t]{0.5\textwidth}
    \includegraphics[scale=0.4]{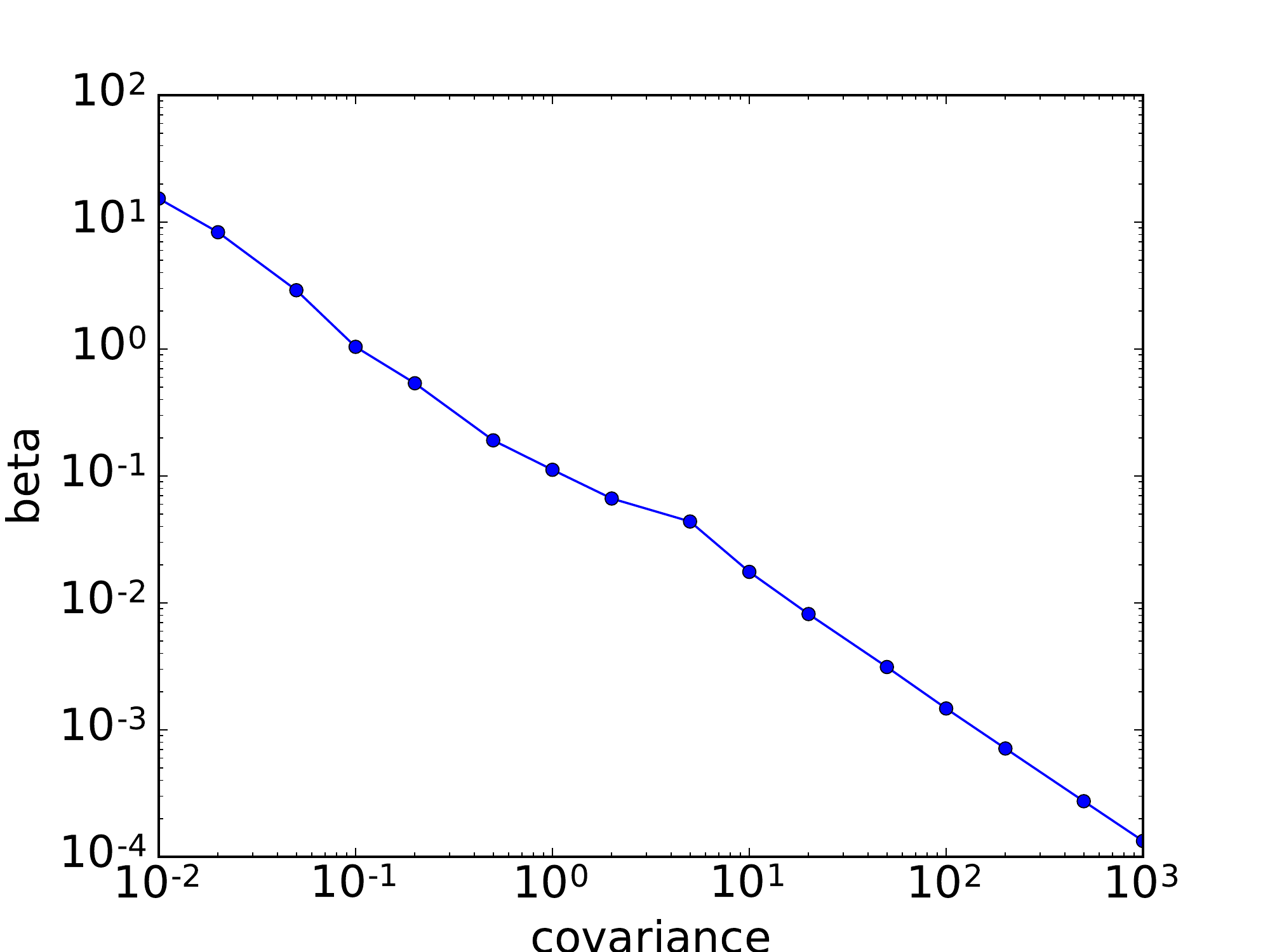}
    \caption{$k=5$, $d=10$.}
    \label{F:varvsbeta_10d_5}
  \end{subfigure}
  ~
  \begin{subfigure}[t]{0.5\textwidth}
  \includegraphics[scale=0.4]{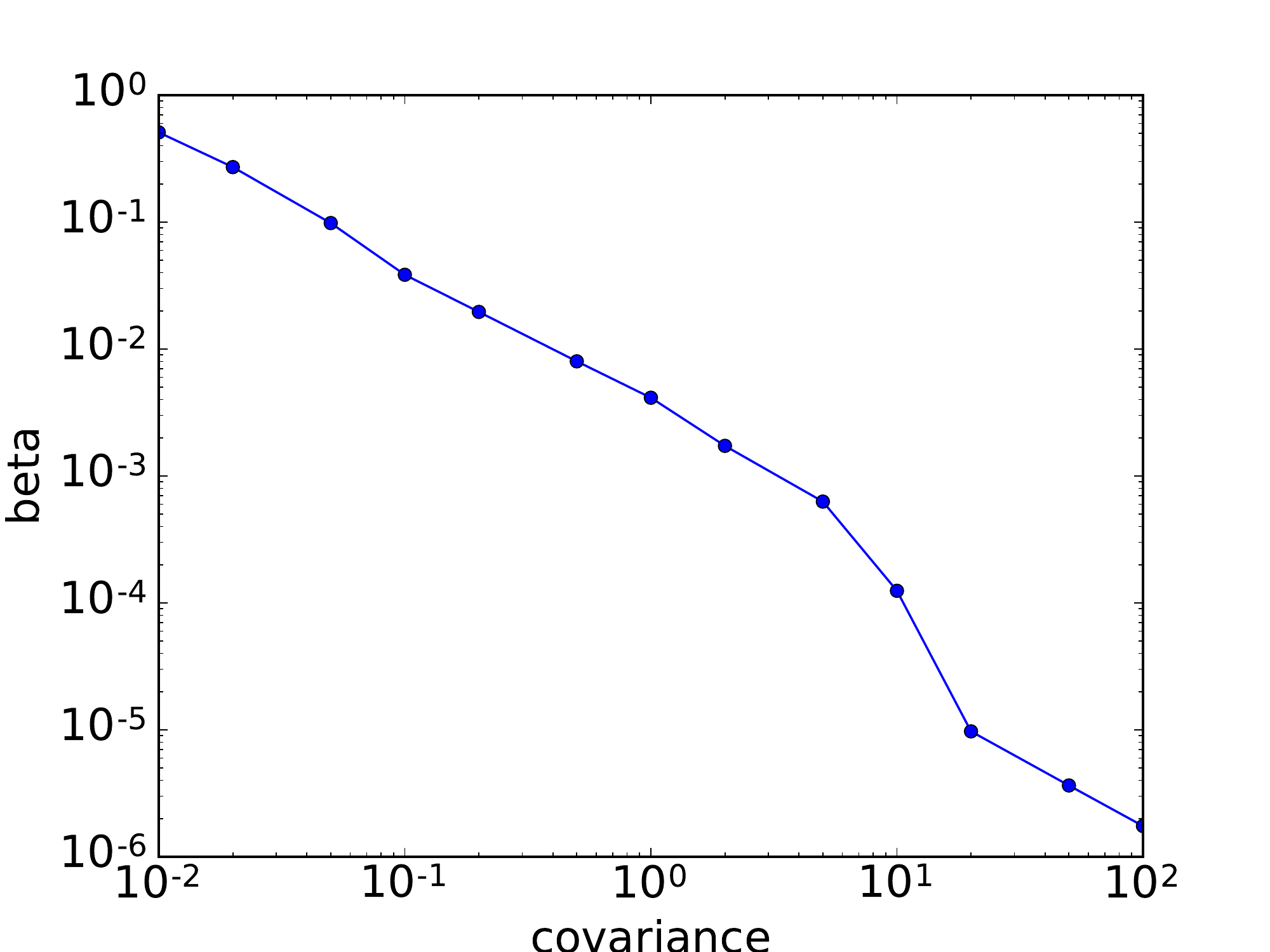}
    \caption{$k=50$, $d=10$.}
    \label{F:varvsbeta_10d_50}
  \end{subfigure}
  ~
  \begin{subfigure}[t]{0.5\textwidth}
    \includegraphics[scale=0.4]{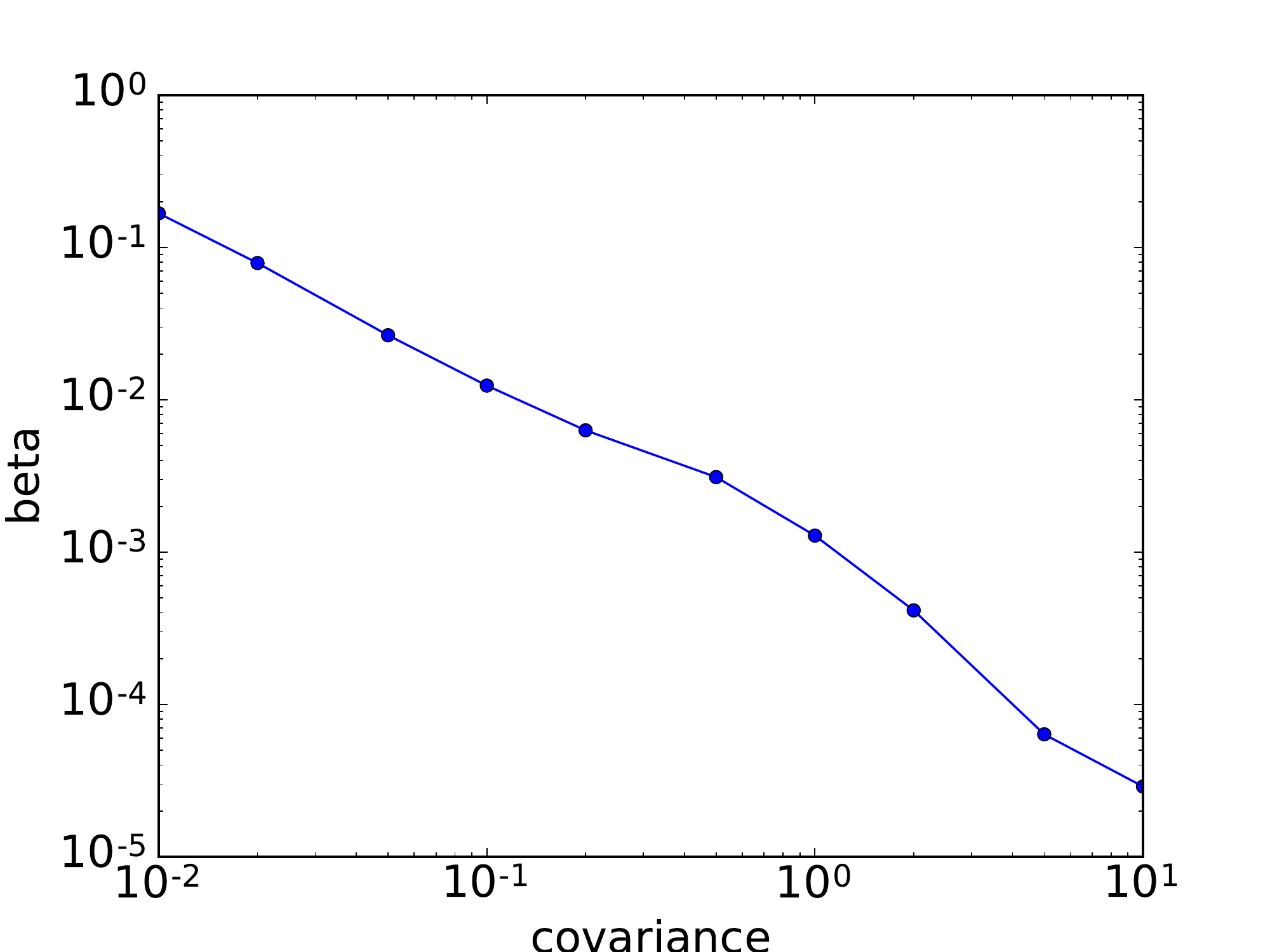}
    \caption{$k=100$, $d=10$.}
    \label{F:varvsbeta_10d_100}
  \end{subfigure}
  \caption{The average minimum value of $\beta$  for which
    the instance is $\beta$-distribution-stable vs the variance for 
    10-dimensional instances generated from a mixture
    of $k$ Gaussians ($k \in \{5,50,100\}$).
    We observe that for relevant variances, 
    the value of $\beta$ is greater than $0.001$.}
  \label{F:varvsbeta_10d}  
\end{figure}

\begin{figure}[h]
  \begin{subfigure}[t]{0.5\textwidth}
    \includegraphics[scale=0.4]{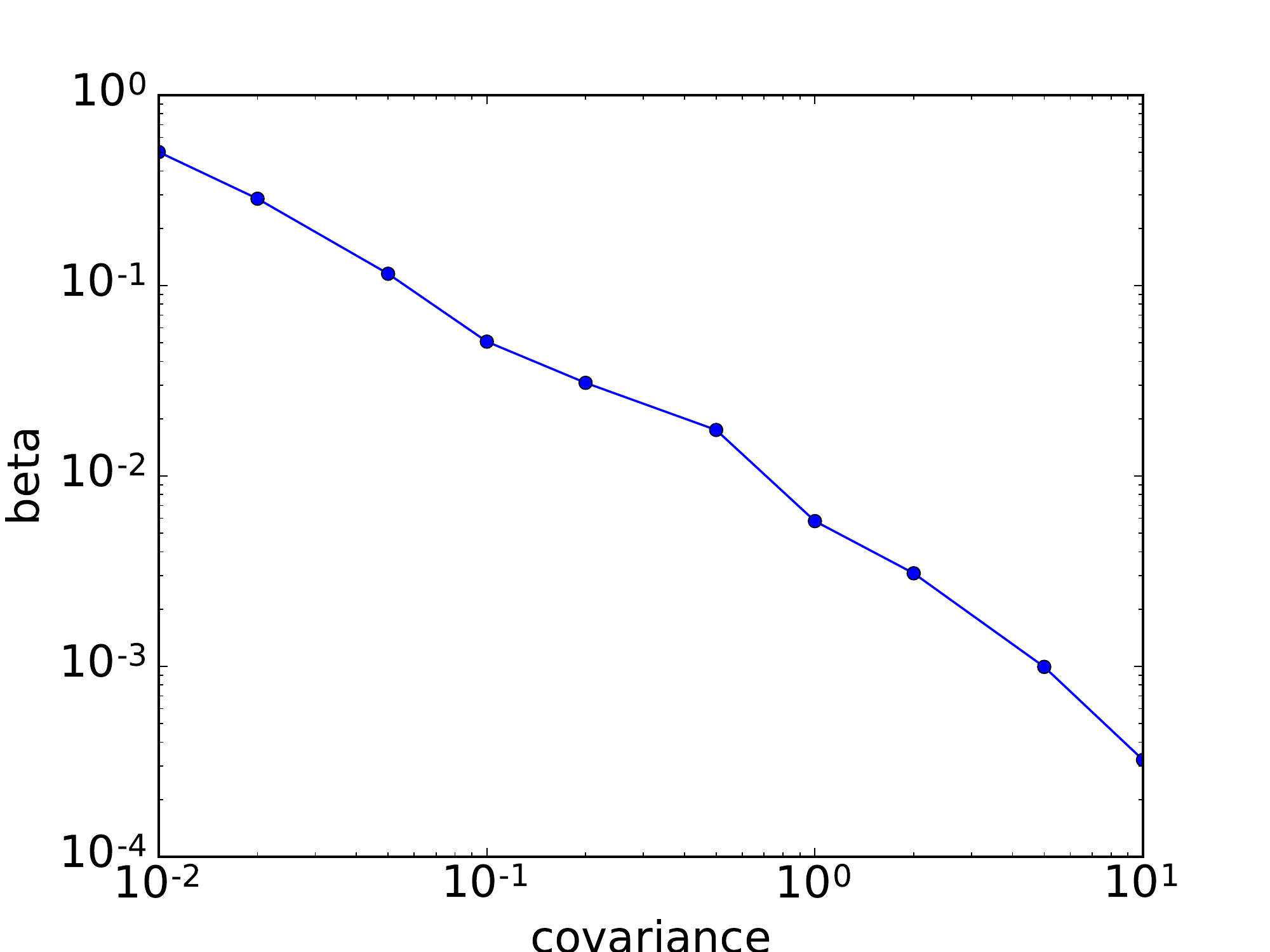}
    \caption{The average minimum value of $\beta$ for which
      the instances is $\beta$-distribution-stable
      vs the variance for 
      2-dimensional instances generated from a mixture
      of $5$ Gaussians.}
    \label{F:varvsbeta_5}
  \end{subfigure}
  ~
  \begin{subfigure}[t]{0.5\textwidth}
  \includegraphics[scale=0.4]{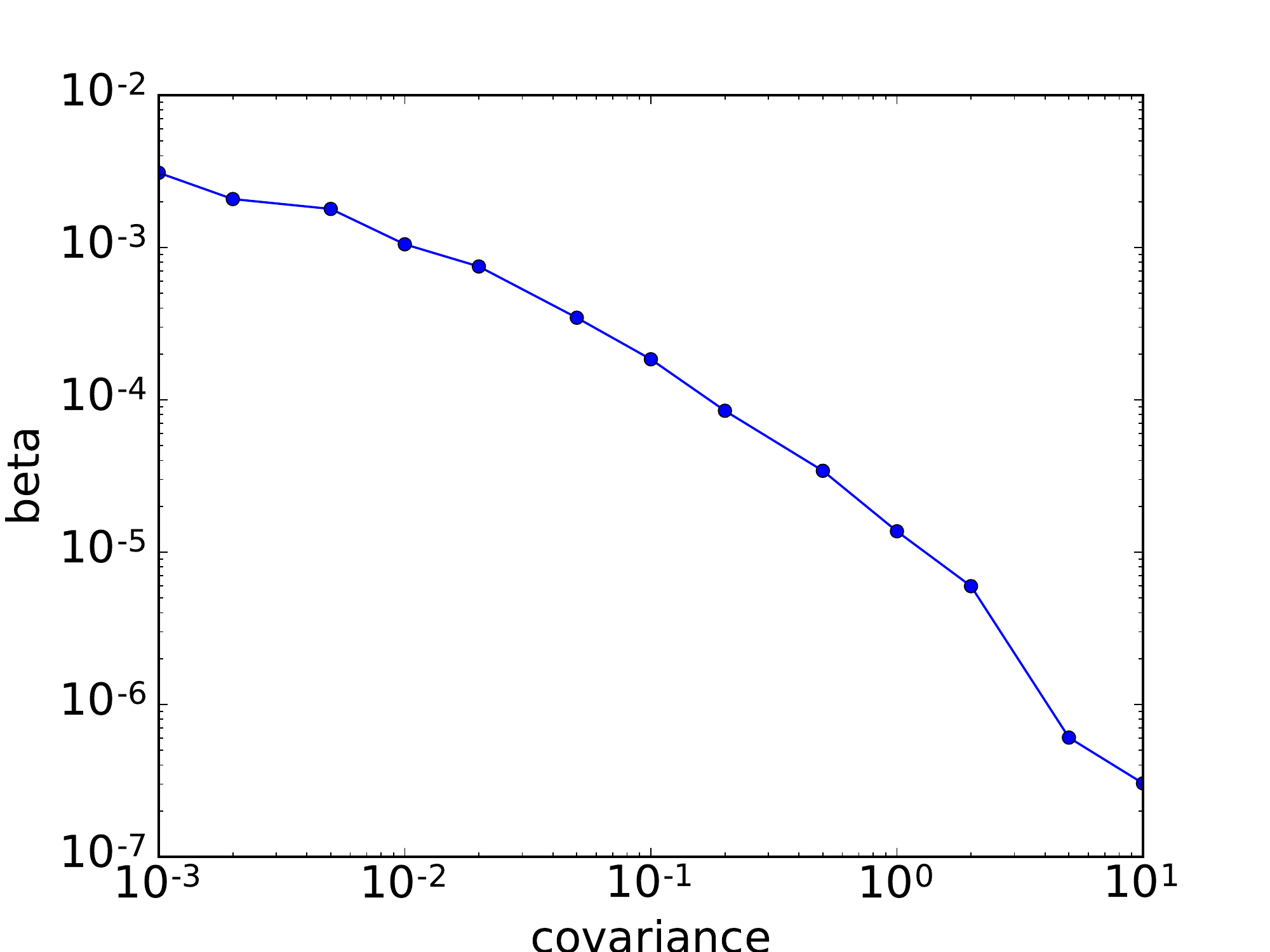}
    \caption{The average of the minimum value of $\beta$ for which
      the instances is $\beta$-distribution-stable
      vs the variance for 
      2-dimensional instances generated from a mixture
      of $50$ Gaussians.}
    \label{F:varvsbeta_50}
  \end{subfigure}
  ~
    \begin{subfigure}[t]{0.5\textwidth}
      \includegraphics[scale=0.4]{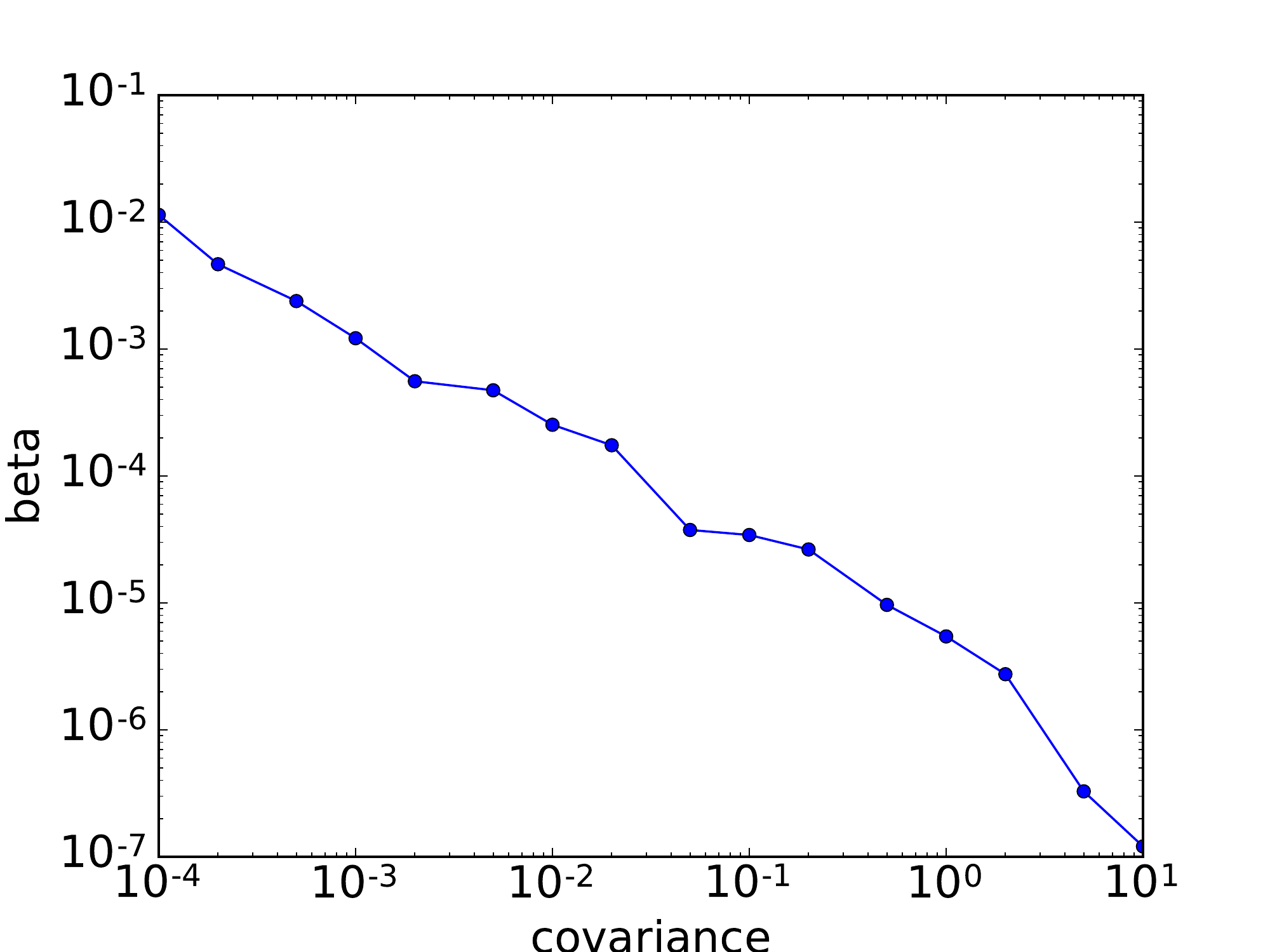}
      \caption{The average minimum value of $\beta$ for which
      the instance is $\beta$-distribution-stable vs the variance for
        2-dimensional instances generated from a mixture of $100$
        Gaussians.}
      \label{F:varvsbeta_100}
    \end{subfigure}
  \caption{The average minimum  value of $\beta$ for which
      the instance is $\beta$-distribution-stable vs the variance for 
    2-dimensional instances generated from a mixture
    of $k$ Gaussians ($k \in \{5,50,100\}$).
    We observe that for relevant variances, 
    the value of $\beta$ is greater than $0.001$.}
  \label{F:varvsbeta_2d}  
\end{figure} 


We observe that for random instances that are not generated from a
relevant variance, the instances are $\beta$-distribution-stable
for very small values of $\beta$ (\eg $\beta < 1e-07$).
We also make the following observation.
\begin{obs}\label{obs:rand:varbeta}
  Instances generated using
  relevant variances satisfy the $\beta$-distribution-stability
  condition for $\beta > 0.001$.  
\end{obs}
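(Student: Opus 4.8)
The plan is to establish this empirically, since it is a statement about the typical stability parameter of randomly generated Gaussian-mixture instances rather than a closed-form identity. First I would reuse the generation procedure of Section~\ref{S:experiments}: for each pair $(k,d)$ with $k\in\{5,50,100\}$ and $d\in\{2,10,50\}$, and for each common variance $\sigma$ on a geometric grid, I would draw $100$ instances of $1000$ points each, sampling the $k$ means uniformly from $\Q^d\cap(0,1)^d$ and then drawing every point from the isotropic Gaussian of variance $\sigma$ centered at its mean; the ground-truth clustering $(C^*,S^*)$ is the partition by generating Gaussian. Because the $\beta$-distribution-stability parameter of a fixed clustering is defined through $\opt$ (Definition~\ref{defn:betadelta}), and $\opt$ is NP-hard to compute, I would bracket it: Local Search with neighborhood size $1$ over an $O(n)$-point candidate set supplies an upper bound on $\opt$, while the LP relaxation of Algorithm~\ref{alg:LPkmeans}, solved with CPLEX, supplies a lower bound. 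As observed, this bracket is tight in practice (average upper/lower ratio $\approx 1.15$, fractional optimum within $0.5\%$ of the integral optimum), so the estimate of $\opt$ is accurate enough to determine $\beta$ up to a small constant factor.

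Given an estimate of $\opt$, the largest $\beta$ for which $(C^*,S^*)$ is $\beta$-distribution stable, taking $\delta=0$ so that $\Delta_i=C^*_i$, is
\[
\beta^*\;=\;\min_i\;\min_{x\in C^*_i}\;\min_{j\neq i}\;\frac{|C^*_j|\cdot\cost(x,c^*_j)}{\opt},
\]
which I would evaluate by one scan over all points and all ordered pairs of clusters. For each batch $(k,d,\sigma)$ I would then average $\beta^*$ over its $100$ instances to obtain the curves of Figures~\ref{F:varvsbeta_2d} and~\ref{F:varvsbeta_10d}, and the observation is read off by restricting to the \emph{relevant} variances of Table~\ref{T:relev_var} --- those $\sigma$ for which the batch-averaged ground-truth cost is within a factor $1.05$ of the batch-averaged optimum --- and checking that the averaged $\beta^*$ stays above $0.001$ for every such triple, while collapsing by several orders of magnitude (below $10^{-7}$) at the larger, non-relevant variances.

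The main obstacle is exactly that $\beta^*$ is a function of the true optimum, so the whole observation rests on the Local-Search/LP bracket remaining tight on every instance of the grid; this holds only empirically, which is why averaging (and reporting the spread) over $100$ instances per batch is essential to a defensible conclusion. A secondary subtlety is that $\beta^*$ is a \emph{worst-case} minimum over points, so a single point sampled deep in the tail of one Gaussian and near a neighbouring mean can depress it sharply; this sensitivity is intrinsic to the definition and is the reason the claim is stated as the one-sided bound ``$\beta>0.001$'' rather than as a sharp threshold expressed in terms of $\sigma$, $k$, and $d$.
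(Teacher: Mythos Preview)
Your proposal is correct and follows essentially the same approach as the paper: the observation is purely empirical, and the paper establishes it exactly by generating the instances, bracketing $\opt$ via Local Search and the LP of Algorithm~\ref{alg:LPkmeans}, computing the minimum $\beta$ for each instance, and reading off from Figures~\ref{F:varvsbeta_2d} and~\ref{F:varvsbeta_10d} that the averaged value exceeds $0.001$ whenever the variance is relevant in the sense of Table~\ref{T:relev_var}. Your write-up is in fact more explicit about the estimation pipeline (the formula for $\beta^*$ and the role of the LP/LS bracket) than the paper itself, but the underlying methodology is the same.
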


We remark that the number of dimensions is constant here and 
that having more dimensions might incur slightly different
values for $\beta$. It would be interesting to study this
dependency in a new study.

\bibliography{literature}{}
\bibliographystyle{plain}

\end{document}